\definecolor{webgreen}{rgb}{0,0.4,0}
\definecolor{webbrown}{rgb}{0.6,0,0}
\definecolor{purple}{rgb}{0.5,0,0.25}
\definecolor{darkblue}{rgb}{0,0,0.7}
\definecolor{darkred}{rgb}{0.7,0,0}
\definecolor{darkgreen}{rgb}{0,0.7,0}
\newcommand{\ignore}[1]{}
\newtheorem{lemma}{{\sc Lemma}}[section]
\newtheorem{prop}{{\sc Proposition}}[section]
\newtheorem{cor}{{\sc Corollary}}[section]
\newtheorem{theorem}{{\sc Theorem}}[section]
\newtheorem{defn}{{\sc Definition}}[section]
\newtheorem{example}{{\sc Example}}[section]
\newtheorem{remark}{{\sc Remark}}
\newenvironment{proof}{\noindent {\bf \sl Proof\/}:\enspace}
{\hfill $\blacksquare{}$ \vspace{12pt}}
\begin{document}

\title{\bf Intelligent Machines and Incomplete Information}
\author{Sujata Goala,\thanks{Department of Mathematics, Dibrugarh University, Dibrugarh, India. email:\url{sujatagoala10@gmail.com}, Orchid Id \url{0009-0000-6149-5500}}~~Mridu Prabal Goswami\thanks{Indian Statistical Institute, Tezpur, India. email:\url{mridu@isine.ac.in}, Orcid Id \url{0000-0002-9131-3058}}~~and~~Surajit Borkotokey\thanks{Department of Mathematics, Dibrugarh University, Dibrugarh, India. email: \url{sborkotokey@dibru.ac.in}, Orcid Id \url{0000-0001-8447-4403}.~~\\We are extremely grateful to Trideep Baruah, Parashjyoti Borah, Nilotpal Chakraborty, Hari K. Choudhury,  Subhasish Dhal, Ratul Lahkar, Amarjyoti Mahanta, Debasis Mishra, Rupayan Pal, Abinash Panda, Soumendu Sarkar, Arunava Sen, Tridib Sharma, Phrangboklang Lyngton Thangkhiew, Rohit Tripathi,  and all the seminar participants, Department of Computer Science and Engineering, IIIT Guwahati, for their comments and suggestions. Errors and omissions are ours.  }}

\maketitle

\begin{abstract} \noindent         
	   We consider a firm that wants to hire two individuals who potentially differ in their efficiency levels, and may put in distinct levels of efforts. The firm is not informed about the levels of efficiency of the potential employees a priori. The classical principal-agent models assume that employees know their efficiency levels. Hence, these models design incentive-compatible mechanisms. We do not assume that employees know how efficient they are. Instead, we assume that the production technology of the firm is intelligent, and the machine-output reveals the efficiency levels of the employees. We call such a production technology to be an intelligent machine. The machine is modeled as a game in characteristic form, and thus the payoffs of the employees are determined by the Shapley shares, and the costs of efforts. Thus, we embed a cooperative game in a non-cooperative game of incomplete information and contribute to the literature on cooperative games under uncertainty. A characterization of ex-ante Nash equilibrium is established. We discuss our notion of intelligent machine in the context of explainable artificial intelligence in Remark \ref{remark:EAI}.                    
	   \end{abstract}
 {\bf{Keywords:}} Artificial intelligence, intelligent machines, firm, hiring,  decision, cooperative games, uncertainty, incomplete information, Shapley value, Nash equilibrium.

\section{\bf Introduction}
\label{sec:intro}  
Machine operator efficiency is defined to be the performance of an employee who operates industrial machinery. The performance of an employee is decided by her effort and efficiency/talent.
The importance of machine operating efficiency, or more broadly the importance of the interface between man and machine for machine productivity,  can hardly be overemphasized, see \citep{Wilson} for an elaborate discussion.
In this paper, a machine refers to a technologically sophisticated plant/factory/firm /a machine in the standard engineering sense such that the efficiency of the employees is of critical importance for production. 
Suppose a software firm requires its employees to know category theory.
The machine in this example refers to the software firm itself. Distinct levels of understanding of category theory refer to distinct levels of efficiency of the employees. Further, the distinct number of working hours is distinct levels of effort. We call such an abstract machine a {\bf Task Aggregator Machine (TAM)}, i.e., a machine that takes efforts and efficiency levels as inputs
to produce some output.{\footnote{ 
Software designed by the firm is its outputs. We assume that the outputs of a firm can be measured by real numbers. 
The outputs of a software firm can be measured by their market value.   
}}

The software firm mentioned above is an example of a TAM. Firms that produce self-driven cars are another class of examples. Smart engineers are required to produce self-driven cars, the latter can be called advanced AI models.{\footnote{We have used the word advanced only in a suggestive sense, and not in a sense in which it implies any continuity, the issues that arise while using it in the latter sense are pointed out in  \citep{Mitchell}.}} The skills required by an engineer who works for a firm that produces self-driven cars are advanced and technical in nature. We do not consider farms  and low-tech firms as examples of TAMs. Therefore, in our model the objective of the firm is not to maximize surplus. 
To elaborate on this remark we recall that the objective of the principal in the classic principal-agent models in economic theory is to maximize surplus, i.e., the principal/employer maximizes profit by paying wages to the agents/employees that keep agents’ payoffs at the lowest possible levels. 
In particular, the first best solution to the maximization problem minimizes wage bills, see Chapter $14$ in  \citep{Mas}.
Instead of maximizing surplus, modern technological firms share revenues with their employees. Hence, for these firms the marginal contribution of an employee to the firm’s output is an important consideration, and not just the effort level of an employee. 
The objective of the firm in our model is to find employees who can use the technology of the firm in such a way that entails the maximum revenue. 
The salary/remuneration of an employee in our model is based on the marginal contribution of the employee to the total revenue. A more efficient employee’s marginal contribution is higher than a less efficient employee.
Thus, the firm in our model is not looking to minimize its wage bill. Minimizing the wage bill is not the appropriate objective for technologically advanced firms since such firms require innovations from the employees, see \citep{Anderson} for a detailed discussion.

We assume that the firm hires two employees. An employee’s efficiency levels are either high or low. An employee’s effort level is either high or low. For every effort and efficiency vector TAM induces a cooperative game, i.e., a game in the characteristic form, see Definition \ref{defn:char_game}.           
Although the total revenue is generated by the employees jointly, the employees put efforts strategically. The strategic behavior of the employees entails a non-cooperative game among the employees. The firm management is not informed about the efficiency levels of the prospective employees, i.e., a situation of incomplete information. The firm management knows the probability distribution from which the efficiency levels are drawn. The payoff function of an employee has two parts, the benefit from efforts and the  cost of efforts. The benefits of the employees come in the form of shares of the total revenue. 
A pure strategy of an employee is a function from the set of possible efficiency levels to the set of possible efforts. We provide a characterization of symmetric Ex-ante Nash equilibria in pure strategies. The three Nash equilibria outcomes that we obtain are (a) all employees put in low efforts irrespective of their levels of efficiency (b) all employees put in high efforts irrespective of their levels of efficiency (c) the relatively more efficient employee puts high effort, and the less efficient employee puts low effort. Our characterization results provide insights into how probability distributions of levels of efficiency are related to equilibria. In particular, given a probability distribution of levels of efficiency, our results can tell us which strategies form equilibria, and which do not.  
Each equilibrium associates every efficiency profile of the employees to a level of output, thus   
each equilibrium entails a distribution on the set of outputs.
If we interpret a probability distribution of the levels of efficiency as a distribution of talents in the economy, then we can can say that our characterization results formalize the relationship between firm outputs and the distribution of talents. 
Distribution of talents is an important decision variable of a firm while making  
hiring decisions.

In Example \ref{ex:paradox} a distribution of talents gives rise to two ex-ante Nash equilibria. In one equilibrium the less efficient employee puts in low effort, and in the other equilibrium, she puts in high effort. It is difficult to predict which equilibrium will be played in such situations. To address this issue of multiple equilibria we consider {\bf rationalizable} equilibrium strategies. This notion pins down the distribution of talents to unique equilibria.
An important aspect of a game in characteristic form is its sets of all singleton coalitions. A singleton coalition is a situation in which an employee works alone. We interpret the singleton coalition that corresponds to an employee as the training or the probation period of that employee. Two employees that we consider in our model are the ones who survive their probation periods. Naturally, the employees who are fired after their probation periods do not have any marginal contribution to the firm and thus are not relevant for our model.
However, an individual's efforts for the firm when she works alone can differ from her efforts when she works jointly with another individual. 
She may do this to take advantage of the freeriding.  
We analyze Nash equilibria concerning this situation in Section \ref{sec:coalition_strategic}.

In our model TAM’s output reveals the effort and the level of efficiency of each employee. This revelation does not depend on employees' beliefs about their levels of efficiency. Therefore, we call such a firm an {\bf intelligent TAM}, see Definition \ref{defn:int} for a formal exposition. One of the main reasons for adopting this approach is that we wish to explore the implications of incorporating artificial intelligence for equilibrium outcomes in games with incomplete information. On the contrary classic principal-agent models that analyze surplus maximizing wage-efforts contracts assume that employees know how efficient they are. 
Thus, the optimization problems in these classes of models consider incentive constraints. An incentive constraint requires that the payoff obtained by an employee by pretending is not larger than the payoff obtained by behaving according to her true level of efficiency, see  \citep{Mas} or \citep{Laffont}.
However, the assumption that employees know how efficient they are may not be considered realistic. For example, it is not obvious that an individual who is trained in category theory knows exactly how good she is in the subject.{\footnote{\citep{Vazire} find self-perception of personality to be far from accurate. Further \citep{Parks} observe that people seem to not understand incentive compatibility well.}} In such situations, studying  strategic behavior based on the notion of incentive compatibility may not be considered appropriate.
In our model, we do not assume that employees know about their efficiency levels. In our model, the revelation of the efficiency levels is done by the intelligent TAM. Thus, our model provides a novel way to approach the problem of hiring in the presence of incomplete information by incorporating intelligent technology instead of taking the approach that depends on the reported information of the employees about their levels of efficiency.
 
The organization of the paper is as follows. 
In Section \ref{sec:TAM} we explain how we model TAMs as games in characteristic forms. We provide a review of the related literature in Section \ref{sec:lit}.   
In Section \ref{sec:model_results} we discuss the technical conditions imposed on the TAM, and the cost function of the employees.
In this section we also explain how we incorporate the probability distribution of efficiency levels into our model so that we can analyze the resulting economic environment by using the mathematics of games of incomplete information.In Section \ref{sec:char} we provide our main characterization result.  
In Section \ref{sec:coalition_strategic} we extend our model that incorporates strategic behavior that depends on the coalition that she is part of.  
We make concluding remarks in Section \ref{sec:con}.

\section{\bf Preliminaries: Relating TAMs with Cooperative Games }  
\label{sec:TAM}
We denote a TAM by $M$. We assume the number of employees to be two, and
the set of employees is denoted by $N=\{1,2\}$. Let $\mathbb{R}_{+}$ denote the set of non-negative real numbers.
Let $t_l$ and $t_h$ be two possible efficiency levels. We let $0$ denote 
the efficiency level of an individual who is not hired.{\footnote{This notation is introduced for the convenience of exposition.}}Thus, $T=\{t_{l},t_{h},0\}$ is the set of efficiency levels, where $t_l$ denotes the lower level of efficiency, i.e., low type; and $t_h$ denotes the highest level of efficiency, i.e., high type.
Formally, the notion of low and high type is expressed by an order $<_{t}$ on $\{t_{l},t_{h}\}$ which is $t_l <_{t} t_h$. 
Analogously,  
$E=\{e_l,e_h,0 \}$ denotes the set of possible levels of efforts with $e_l<_{e} e_{h}$; 
$e_{l}$ denotes the lower and $e_{h}$ the highest level of efforts, we do not assume these levels come from the set of numbers. 
We let $0$ denote the absence of effort of individual $i$ as an employee. Let $t_{i}$ denote a generic type of an individual. 
A type profile is denoted by $( t_1, t_2)$, and an efforts profile is denoted by $( e_1, e_2)$. 
That is, while writing a profile we write the corresponding entry for individual $1$ first and then for individual $2$.
A {\bf machine state} refers to an ordered list $( (e_1, e_2), (t_1, t_2)) $  where $e_1$ is the efforts of individual $1$ whose type is $t_{1}$. Analogously, $e_2$ denotes the effort of individual $2$ whose type is $t_{2}$.     
The output of the machine $M$ at the machine state $((e_1, e_2), (t_1, t_2))$ is $M(( e_1,  e_2),(t_1, t_2))$ and $M(( e_1,  e_2),(t_1, t_2))\in \mathbb{R}$, where $\mathbb{R}$ denotes the set of real numbers. 
We notice that $M(( e_1,  0),(t_1, t_{2})), t_2\neq 0$,  $M(( e_1,  e_2),(t_1, 0)), e_2\neq 0$, 
$M(( 0,e_2),(t_1, t_{2})), t_1\neq 0$, and  $M(( e_1,  e_2),(t_1, 0)), e_1\neq 0$ are not defined.
That is, if an individual is hired, then she cannot put zero effort; and if an individual is not hired, then speaking about her contributions to the firm is meaningless. Consider the machine state $((0,e_2),(0,t_{2}))$. In this state, individual $1$ is not hired and thus does not put effort. In this machine state, only individual $2$ is hired and she puts the efforts $e_{2}$, and $e_{2}$ must be either $e_{h}$ or $e_{l}$. Further, $t_{2}$ must be either $t_{h}$ or $t_{l}$. 
This machine state denotes the singleton coalition in which only individual $2$ is present  whose type is $t_{2}$ and effort is $e_{2}$. 
An analogous interpretation holds for the machine state $((e_1,0),(t_1,0))$.{\footnote{Often software firms give online coding tests to prospective employees, and the individuals who are eventually hired are asked to work in teams. If we imagine such  firms to be TAMs, then the individual who takes the test can be thought of as a singleton coalition.}} Further, $((0,0),(0,0))$ refers to the empty coalition, we set $M((0,0),(0,0))=0$, i.e., TAM cannot produce anything by itself. That is, we assume a technology that requires humans to produce an output. The machine state $((e_1,e_2),(t_1,t_2))$ denotes the grand coalition in which both individuals are present and individual $i$'s effort is $e_i$, her type is $t_i$ and $e_i\neq 0, t_i\neq 0$. We assume that both individuals are hired so that the grand coalition is formed.
{\footnote{Since we wish to study outcomes due to cooperation, we assume that the firm hires two individuals.}}
We call the machine state {\bf grand machine state} in which both individuals are present. In the standard cooperative game theory a coalition is identified with a list  of individuals, we identify a coalition with a list of individual specific characteristics, namely efficiency and effort. In particular, corresponding to every grand machine state there is a characteristic form game. For example, if $((e_l,e_{h}),(t_l,t_{h}))$ is the grand machine state, then 
$((e_l,e_{h}),(t_l,t_{h}))\mapsto M((e_l,e_{h}),(t_l,t_{h}))$, $((e_l,0),(t_l,0))\mapsto M((e_l,0),(t_l,0))$, $((0,e_{h}),(0,t_{h}))\mapsto M((0,e_{h}),(0,t_{h}))$, and 
$((0,0),(0,0))\mapsto M((0,0),(0,0))$ define a characteristic form game. 
We formalize the notion of TAM in Definition \ref{defn:TAM}. Let  $\mathbb{A}=\{((e_1,0),(t_1,0)), ((0, e_2 ),(0,t_2)), ((0,0),(0,0)), ((e_1,e_2),(t_1,t_2))\mid e_i \in \{e_l,e_h\}, t_i \in \{t_l,t_h\}, i=1,2\}$ be the set of admissible coalitions.

\begin{defn}\rm A {\bf Task Aggregator Machine, TAM  } is a function $M:\mathbb{A}\rightarrow \mathbb{R}_{+}$.  
\label{defn:TAM}	
\end{defn}

\noindent We define a game in characteristic form corresponding to a grand machine state as follows.

\begin{defn} \rm Given $M$ and the grand machine state $((e_1,e_2),(t_1,t_2))$, the restriction of $M$ to $\{((e_1,0),(t_1,0)), ((0, e_2 ),(0,t_2)), ((0,0),(0,0)), ((e_1,e_2),(t_1,t_2))\}$ is called a {\bf  Game in Characteristic Form}.

	\label{defn:char_game}	
\end{defn}

\noindent A key assumption of our paper is that $M$ is intelligent. A machine is intelligent if by observing the output of the machine the management can infer the effort-type combination that yields that output.
We now proceed to define an intelligent machine formally.

\begin{defn}\rm We call a TAM $M$ {\bf intelligent} if 
	
	\begin{enumerate}
		\item $M((e_1',e_{2}'),(t_{1}',t_{2}'))= M((e_{2}',e_{1}'),(t_{2}',t_{1}'))$. This condition can be interpreted as {\bf symmetry}. That is, output depends only on effort and efficiency and not on the identities of the employees.

		\item Except the {\bf symmetry} condition given above, $M((e_1',e_{2}'),(t_1',t_2'))\neq M((e_1'',e_{2}''),(t_1'',t_2''))$
		whenever $(e_1',e_{2}')\neq (e_1'',e_{2}'')$ or $(t_1',t_2')\neq (t_1'',t_2'')$.

	\end{enumerate}

	\label{defn:int}
\end{defn}	

\noindent The second property in the definition is the embodiment of the notion of an  intelligent TAM. It helps in computing the marginal contributions of the employees. Suppose the grand machine state is 
$((e_l,e_h),(t_l,t_h))$, and thus the output from the machine is $M((e_l,e_h),(t_l,t_h))$. By the second property, outputs for different grand machine states are different. Thus, the management can deduce the grand machine state that occurs. That is, the management can see that one of the employees is of type $t_l$ and has put efforts $e_l$, and the other one is of type $t_h$ and has put efforts $e_h$, but cannot decipher who is who from the output of the grand machine state. However, by the second property $M((e_l,0),(t_l,0))\neq M((0,e_h),(0,t_h))$. That is, the management has information about the outputs from the singleton coalitions. Now $M((e_l,e_h),(t_l,t_h))-M((e_l,0),(t_l,0))=$marginal contribution of employee $2$, with efficiency $t_h$ who puts in the effort $e_h$, to the output of the grand machine state when  employee $1$ with efficiency level $t_l$ puts in the effort $e_l$. 
Further, marginal contribution of employee $1=$  
$M((e_l,e_h),(t_l,t_h))-M((0,e_h),(0,t_h))\neq 0$. 
Since by the second property $M((0,e_h),(0,t_h))\neq M((e_l,0),(t_l,0))$, we obtain  
$$M((e_l,e_h),(t_l,t_h))-M((e_l,0),(t_l,0))\neq M((e_l,e_h),(t_l,t_h))-M((0,e_h),(0,t_h))$$. 

\noindent Thus, the management can deduce the marginal contributions of the employees. 
This enables the firm management to pay salaries on the basis of the marginal contributions of the employees.{\footnote{The chess engine Stockfish can differentiate between a ``bad'' move and a ``good'' move. Thus, it is an intelligent machine. However, Stockfish is not a TAM.}}
\citep{HartMoore} consider a two period scenario where in the first stage agents put efforts non-cooperatively and before the second stage all the uncertainties are revealed, i.e., before the second stage agents observe each others' actions taken in the first stage. In the second stage, gains are shared according to Shapley value. 
\citep{HartMoore} do not provide a mechanism that resolves the uncertainties.  
In our model TAM resolves the uncertainty. 
Therefore, the assumption in \citep{HartMoore} that information about efforts is symmetric among the employees is not required. 
This is an important point about our paper, i.e., we wish to understand the implications of intelligent machines for economic outcomes when agents do not know about each other. In the first stage of the two-player game in \citep{Nash} players declare their actions. The model in \citep{Nash} is abstract. However, as an example of the actions we can consider efforts. \citep{KoNey} consider a generalization of \citep{Nash} to the $n$ player scenario. \citep{Nash} and \citep{KoNey} define a value of strategic form games by using the notion of minmax payoff, and provide an axiomatic characterization of the value. In these models, players coordinate strategies, and hence they are models for cooperation in strategic games. 
In our model, employees do not coordinate efforts, and in particular they do not coordinate strategies since they may not know their types.
In \citep{Liu} players know their types, and have probabilistic beliefs about other players' types. 
To understand the main idea in \citep{Liu}, consider a player who is a member of a coalition. 
Now consider a counterfactual situation where she is contemplating deviating to another collective outcome. 
For such a deviation to take place other members of the coalition should also be willing to deviate, i.e., coordinate on the deviation. \citep{Liu} studies such coordination via consistency conditions imposed on the functions that map possible deviations to the set of beliefs on the type space. In other words, in \citep{Liu} coordination among the members in a coalition takes place in beliefs about whether they would collectively prefer an outcome, in \citep{KoNey} coordination takes place in strategies. Further, with the help of the consistency conditions pertaining to beliefs,  \citep{Liu} provide a characterization of stable outcomes, i.e., outcomes by deviating from which a coalition cannot benefit. The notion of stability that we consider is the Ex-ante Nash equilibrium. Our characterization results tell us how the probability distributions on the type space, i.e., distribution of types, is related to whether a strategy can be sustained as an Ex-ante Nash equilibrium or not.

The notion of an intelligent machine as discussed in \citep{Simkoff} puts our ideas into perspective.                        
\citep{Simkoff} describes machine intelligence as follows: ``machine intelligence by necessity involves deductive logic. For example, systems exhibiting true machine intelligence come to understand when
they’ve made mistakes, watch out for similar data that could lead to a
similar mistake the next time, and avoid doing so.'' 
By following the line of thinking as described in \citep{Simkoff} we consider TAMs that are capable of unambiguously deducing the efforts and the efficiency levels that are associated with the levels of outputs, hence TAMs are intelligent. In a low-tech firm, it is harder to make such deductions about efficiency. The reason for this is that the advanced technical skills of employees are not essential for these firms. Therefore, we do not consider them to be examples of TAMs.
Since the output of an intelligent $M$ reveals the effort and efficiency of the employees, we do not need to assume that employees know their efficiency levels. In the classical  models of Economics, it is assumed that employees know their efficiency levels, however the management does not. Such a situation is an example of a situation of asymmetric information. There can be situations where neither the employees know, nor the outputs reveal the contributions of the employees. Consider a situation where two individuals have written a paper jointly. Let us assume that the paper receives an award, and thus the two coauthors jointly receive prize money. The question now is how the two individuals should share the prize money. It is difficult for an individual to know her contribution to the paper. Further, from the outcome of the joint work, let the outcome be the prize money, it is difficult to deduce the contribution of each author of the paper. In other words, this is an example of a situation where neither the marginal contribution of an individual is computed from the joint output, nor do individuals know about their contributions. Thus, such situations are not examples of intelligent $M$, and we do not consider them in our analysis. 
We further note that such scenarios cannot be studied using the model in \citep{HartMoore} as well since \citep{HartMoore}  require symmetric information of efforts. The effort of an employee by itself is not enough for machine productivity. Even twelve hours of work every working day by an employee whose understanding of category theory is `not good’ may not be of significant importance to a software firm. Thus, we consider situations where it matters for a firm whether efforts come from a high-type or a low-type employee. Formally, situations that describe whether efforts come from high or low type are modeled by functions from  $\{t_l,t_h\}$ into $\{e_l,e_h\}$. We call such a function a {\bf strategy}. Let
$s_{i}$ denote the strategy of employee $i$. Then $(s_1,s_2)$ is called a {\bf strategy profile}, and the profile is called {\bf symmetric} if $s_1=s_2$.  
For any type profile  $(t_1,t_2)\in \{t_l,t_h\}\times \{t_l,t_h\} $, a strategy profile defines the grand machine state $((s_{1}(t_{1}),s_{2}(t_{2})), (t_1,t_2))$, and thus defines a characteristic form game. Given a strategy-profile $(s_1,s_2)$ we have the following collection of characteristic form games: 

\begin{enumerate}
	\item $M((s_{1}(t_l),s_{2}(t_h)),(t_l,t_{h}))$, $M((s_{1}(t_l),0),(t_l,0))$,
	$M((0,s_{2}(t_h)),(0,t_{h}))$, $M((0,0),(0,0))$ 
	
	\item $M((s_{1}(t_h),s_{2}(t_l)),(t_h,t_{l}))$, $M((s_{1}(t_h),0),(t_h,0))$,
	$M((0,s_{2}(t_l)),(0,t_{l}))$, $M((0,0),(0,0))$	
	
	\item $M((s_{1}(t_h),s_{2}(t_h)),(t_h,t_{h}))$, $M((s_{1}(t_h),0),(t_h,0))$,
	$M((0,s_{2}(t_h)),(0,t_{h}))$, $M((0,0),(0,0))$	
	
	\item $M((s_{1}(t_l),s_{2}(t_l)),(t_l,t_{l}))$, $M((s_{1}(t_l),0),(t_l,0))$,
	$M((0,s_{2}(t_l)),(0,t_{l}))$, $M((0,0),(0,0)).$ 
	\end{enumerate}

\noindent We observe that for a given strategy profile the characteristic form games depend only on the types. We assume that the management is uninformed about the efficiency levels of the job seekers. Thus, given a strategy profile, the uncertainty over the set of characteristic form games is the same as the uncertainty over the space $\{t_l,t_h\}$. We assume that the management knows the probability distribution of the type space  $\{t_l,t_h\}$.
A well-known interpretation of a strategy  $s_i$ is that nature reveals the type $t_i$ to employee $i$, and then $i$ decides the level of efforts as suggested by $s_i$ which is $s_i(t_i)$.
According to this interpretation $s_i$ is a conscious contingent plan of actions of $i$. However, we do not assume that employee $i$ knows her type.
{\footnote{Alternatively, we may relax the assumption that $i$ knows her type, and assume that $i$ holds a belief about her own type, and she knows what she truly believes.  
Such weakening from knowing to having a belief about 
being $t_l$ or $t_h$ is also not required. Since  $M$ is intelligent, eventually types will be revealed.}}
Since we do not assume that an employee knows her type, it implies that she does not know her strategy. We interpret  {\bf $s_i$ as a strategy that the nature plays through individual $i$}. Whether employee $i$ is aware of nature's move or not is not important for our analysis. An employee may believe that her type is $t_h$, and therefore puts in efforts $e_h$.
However, it may turn out that her true type is $t_l$. We make further comments on
our interpretation of a strategy in Remark \ref{remark:Shapley_fair}. 
In Section  \ref{sec:coalition_strategic} we consider the situation where an employee's effort depends on whether she works alone or in a team. 
Each of the four characteristic form games entails Shapley shares for each employee.
The Shapely share of employee $i$ is the average of the marginal contributions. Each employee contributes to two coalitions, the empty coalition and the grand machine state. Let the grand machine state be $((e_l,e_h),(t_l,t_h))$. Then the Shapely share of employee $1$  is $$Sh_{1}((e_l,e_h),(t_l,t_h))=\frac{M((e_l,e_h),(t_l,t_h))-M((0,e_h),(0,t_h))+M((e_l,0),(t_l,0))}{2},$$ where $M((e_l,0),(t_l,0))$ is employee $1$'s marginal contribution to the empty coalition, and $M((e_l,e_h),(t_l,t_h))-M((0,e_h),(0,t_h))$ is employee $1$'s marginal contribution when $1$ joins employee $2$ whose type is $t_h$ and puts effort $e_h$. We interpret the singleton coalition as the probation period of an employee. A analogues formula hod for employee $2$, see subsection \ref{sec:GIF}. 
In our model, an employee receives her Shapely share as her salary. 
\begin{remark}\rm The Shapley value is efficient, i.e., for all $(e_1,e_2)\in \{e_l,e_h\}\times \{e_l,e_h\}$ and $(t_1,t_2)\in \{t_l,t_h\}\times \{t_l,t_h\}$,  $Sh_{1}((e_1,e_2),(t_1,t_2))+
	Sh_{2}((e_1,e_2),(t_1,t_2))=M((e_1,e_2),(t_1,t_2))$. 
	The management or the investors in a company keep the major share of the profits/revenue to themselves. To include this fact, we can generalize Shapley shares such that the sum of the shares of the employees is less than the total revenue.  
	This only complicates our computations, and does not lead to any qualitative change in the results. 
\end{remark}

\noindent In recent times, cooperative game theory, and in particular Shapely Value, has found an important place in the explainable Artificial Intelligence research, see
\citep{LipCon}, \citep{Datta}, \citep{BoLu} and \citep{CovSu} for details. An important method based on Shapley Value which is used for assessing the contributions of an individual component in the total output in explainable AI is  Shapley Additive Explanations (SHAP in short). SHAP is proposed by \citep{LuSu}. One of the important motivations for using SHAP, as argued in \citep{LuSu}, is that SHAP value estimations are better aligned
with human intuitions. In the next remark we briefly discuss SHAP and intelligent TAM. 

\begin{remark}\rm In SHAP the inputs that go into a `predictive model' are features. In our notation the predictive model is denoted by $M$. Like the predictive models, $M$s also take inputs as features, note that efforts and types are features. Typically, a predictive model is considered to be a random variable.  
The composition of $M$ with a strategy profile, we are referring to the four characteristic form games described earlier, is a random variable that depends solely on the features $\{t_l,t_h\}$. A strategy profile and $\{t_l,t_h\}\times \{t_l,t_h\}$ generate four samples of features, where a typical sample is denoted by $( ( s_1(t_1), s_2(t_2)), ( t_1,t_2))$. Instead of the conditional expectation of the predictive model, in our model at each sample of features the intelligent machine entails a characteristic form game.             
\label{remark:EAI}
\end{remark}

\noindent Although the firm shares the revenue with the employees, the non-cooperative side of our model has important implications. 
For example, it is possible that if employee $2$ puts in high efforts, then the best response of employee $1$ is to put in low efforts. 
Alternatively, employee $1$ may put in high efforts to increase her Shapley share. However, high efforts coming from a less efficient employee may not be good for the firm's net welfare. Example  \ref{ex:paradox} in Section \ref{sec:char} makes this point, formally.

\section{\bf Relation to the Literature} 
\label{sec:lit}
This section reviews some more literature. Companies are increasingly using artificial intelligence based algorithms when hiring employees, see  \citep{Morris}. 
These algorithms analyze resumes of the job applicants to measure their personalities. Hence, these algorithms are examples of indirect mechanisms whose objective is to elicit private information about various features of the applicants. 
These algorithms are prediction machines, and the notion of a prediction machine is well discussed in  \citep{Agrawal}.  
\citep{D'Arinzo} find these algorithms to be unreliable, for instance, ``recruiters do not know why certain candidates are on page one of the ranking, or why certain people are on page ten of the ranking when they search for candidates", see  \citep{Morris}. 
In other words, intelligent machines that are being used currently to screen job seekers are in their infancy. 
\citep{Marwala}  imagine intelligent machines that can moderate agent-behavior to be in line with the expected behavior. 
Next, we discuss some literature related to solutions concepts of cooperative game theory with uncertainty.               

Fix a type profile $(t_1,t_2)$, and consider all possible effort profiles. This type profile entails a game in non-transferable utility (NTU), which is described as follows. Let  $V(S)$ denote a coalitional function, where $S$ is a non-empty subset of $\{1,2\}$. Then $V(\{1,2\})=\Big\{\Big( Sh_{1}((e_{1}',e_{2}'),(t_1,t_{2}))-C(e_{1}',t_{1}), Sh_{2}((e_{1}',e_{2}'),(t_1,t_{2}))-C(e_{2}',t_{2})\Big)\mid (e_{1}',e_{2}')\in \{e_{l},e_{h}\}\times \{e_{l},e_{h}\} \Big \}\subseteq \mathbb{R}^{2}$. 
Then $V(\{1\})=\{M((e_{1}',0), (t_{1},0))-C(e_{1}',t_{1})\mid e_{1}'\in \{e_l,e_h\}\}$. Further,  $$V(\{2\})=\{M((0,e_{2}'), (0,t_{2}))-C(e_{2}',t_{2})\mid e_{2}'\in \{e_l,e_h\}\}\subseteq \mathbb{R},$$ for details on NTU games see \citep{Hart}. 
An example in Section $1$ in \url{https://drive.google.com/file/d/1M5P3AfTVUpJvhQc2Pj8FL6GgXnUkz51b/view?usp=sharing} demonstrates that the NTU Shapley value payoff vector is different from the Nash equilibrium payoff vector of the non-cooperative game where the set of actions of the players is   $\{e_l,e_h\}$ and the payoff vectors of the non-cooperative game are given by the elements from the set  $V(\{1,2\})$.{\footnote{We do not need assumptions such as closed or convex on the set of payoff vectors of coalitions while applying the algorithm to compute NTU value  described in \citep{Hart}. .}}Hence, the solution concept discussed in this paper is an extension of Shapley value for Transferable Utility (TU) Games to Nash equilibrium which is a solution concept applied to non-cooperative games. Usually, TU solutions are extended to NTU solutions, see \citep{Hart}. 
Further, once we allow for the type profiles to be probabilistic, then we extend our solution concept from a TU game solution concept to a solution concept for games with incomplete information i.e., ex-ante Nash equilibrium. In \citep{Myerson} strategies are functions from type spaces to type spaces and they are contingent plans of actions of the players. Since we do not assume that employees know their types, we do not consider direct mechanisms and hence we  do not consider the kind of strategies that  \citep{Myerson} considers.         
\citep{Masuya} studies a model in which the worth of the singleton and grand coalitions are known and provides an axiomatic characterization of complete and superadditive extension of such games. \citep{CerGra} also consider 
cooperative games such that the worth of all the coalitions is not known.    
\citep{CerGra} provide a method to approximate standard solution concepts for games where the worth of all the coalitions is known.       
In our model, the worth of all the coalitions of the characteristic form games is known, however, which game is going to be played is not known.       
\citep{Pongou and Tondji} consider a model in which there are $n$ inputs that produce some output. The employees in our model are the analog of the inputs  \citep{Pongou and Tondji}.     
The quality of the inputs in \citep{Pongou and Tondji} is unknown. 
The level of efficiency can be thought of as the level of quality of an employee.    
\citep{Pongou and Tondji} characterize ex-ante, which \citep{Pongou and Tondji} call a priori, and Bayesian Shapley value.

\section{\bf Model and Results} 
\label{sec:model_results}
In this section, we discuss our model and results. First, we discuss the technology i.e., assumptions on $M$, and the cost function of the employees.    
\subsection{\bf Definitions and Assumptions on $M$ and $C$ }
\label{sec:defn}
To carry out our analysis we impose certain restrictions on $M$. The effort and type profiles are denoted by $e_{S}=(e_{i})_{i \in S}$, and $t_{S}=(t_{i})_{i \in S}$  respectively in which the $i^{th}$ coordinate is $0$, if $i\notin S$. A typical coalition is written as $((e_1,e_2),(t_1,t_2))$. We consider the following definitions.
\subsubsection{\bf Assumptions on $M$} 
We assume $M$ to be intelligent.  
{\bf Definition of ordering on type-coalition:} For all pairs of type profiles $(t_1,t_2)$ and $(t_{1}^{\prime},t_2^{\prime})$ we say 
$(t_1,t_2) <_{tt} (t_{1}^{\prime},t_2^{\prime})$ if and only if, 
  $t_{i} <_t t_{i}^{\prime}$ for all $i$ or if $ t_{i}= t_{i}^{\prime}$ for some $i$ then $t_{j} <_{t} t_{j}^{\prime}$ for $i\neq j$.   
{\bf Definition of ordering on effort-coalition:} For all pairs of effort  profiles $ (e_1,e_2) <_{ee} (e_{1}^{\prime},e_2^{\prime})$
if and only if, $e_{i} <_e e_{i}^{\prime}$ for all $i$ or if $ e_{i}= e_{i}^{\prime}$ for some $i$ then $e_{j} <_{e} e_{j}^{\prime}$ for $i\neq j$.
We assume $M$ to satisfy {\bf Monotonicity within types i.e., higher effort profile generates more return}, which is defined as follows.  
 Let $(t_1,t_2)$ be a type profile and two effort profiles 
 $(e^{\prime}_{1},e^{\prime}_{2})$ and $ (e_{ 1},e_2)$ with  $(e^{\prime}_{1},e^{\prime}_{2})<_{ee} (e_1,e_2)$
  then, $M((e^{\prime}_{1},e^{\prime}_{2}),(t_1,t_2)) < M((e_1,e_2),(t_1,t_2))$.    
We assume $M$ to satisfy {\bf Monotonicity within efforts i.e., more efficient type profile generates more return} which is defined as follows. 
 Let $(e_1,e_2)$ be an effort profile and two type profiles 
 $(t^{\prime}_{ 1},t^{\prime}_{2})$ and $ (t_1,t_2)$ with $(t^{\prime}_{1},t^{\prime}_{2}) <_{tt} (t_1,t_2)$
  then, $M((e_1,e_2),(t^{\prime}_{1},t^{\prime}_{2})) < M((e_1,e_2),(t_1,t_2))$.
We assume $M$ to satisfy 
{\bf Super-modularity i.e., increments in return for more efficient types are larger} which is defined as follows. For all two type profiles $t^{'}$, $t^{''}$ and two effort profiles, $e^{'}$, $e^{''}$ with  $t^{'}<_{tt} t^{''}$ and  $e^{'} <_{ee} e^{''}$, $  M( e^{''},t^{'})- M(e^{'},t^{'}) <  M( e^{''},t^{''})- M( e^{'},t^{''})$. As an example let $(t^{'},0)<_{tt}(t^{''},0)$, and $(e^{'},0)<_{ee} (e^{''},0)$. Super-modularity implies
$ M( (e^{''},0), (t^{'},0))- M( (e^{'},0), (t^{'},0)) <  M((e^{''},0), (t^{''},0))- M( (e^{'},0),(t^{''},0))$.

\subsubsection{\bf Assumption on $C$}
While putting effort each individual incurs a cost that depends on the type of the individual. Let $e\in E$ and $t\in T$ then the dependency of cost on effort and type is denoted by $C(e,t)$, and the cost function $C$ admits the following properties. As {analogous} to $M$ we do not define $C(e,t)$ if either one of them is $0$. Thus when we say $C$ to be function we mean that $C$ maps $(e,t)$ to real a number when neither $e$ nor $t$ is $0$.       
We assume $C$ to satisfy {\bf monotonicity of cost in efforts i.e., higher levels of efforts cost more. } which is defined as follows. For all  $t \in \{ t_l,t_h \}$,  $C(e_l,t) < C(e_h,t)$ where  $e_l <_{e} e_h$.
We assume $C$ to satisfy  {\bf efficiency in type i.e., higher type incurs lower cost} which is defined as follows. For all  $e \in \{ e_l,e_h \}$, $C(e,t_h) < C(e,t_l)$ where  $t_l <_{t} t_h$. We assume $C$ to satisfy {\bf Sub-modularity i.e., increment in cost for the efficient type is lower} which is defined as follows. For the pair of  $t_l$, $t_h$ with $t_l <_{t} t_h$ and $e_l$, $e_h$  with $e_l <_{e} e_h,$
$C(e_h,t_h) - C(e_l,t_h) < C(e_h,t_l) - C(e_l,t_l)$. In our study we can include the case, $C(e_h,t_h) - C(e_l,t_h) < C(e_h,t_l) - C(e_l,t_l)$, this condition incorporate the case : Cost function  is zero and then we can go back to the Shapley value as a particular case.
\medskip
\noindent The game of incomplete information is discussed next.

\subsection{\bf The Game of Incomplete Information}
\label{sec:GIF}
In this section we describe the game of incomplete information that we consider.    
Let $\Gamma = \{ N,T^{'2}, E^{'2}, p, G, (g_t)_{t \in T^{'}{^{2}}} \}$ denote the game of incomplete information whose components are defined as follows:
$N=\{1,2\}$ denote the set of players, $T^{'2}=T'\times T'$, $E^{'2}=E'\times E'$, where $T'=\{t_l,t_h\}, E'=\{e_l,e_h\}$. 
We assume that players' types are drawn independently and identically according to the probability distribution $p$, and $p(t_{l})>0, p(t_{h})>0$ with $p(t_{l})+ p(t_{h})=1$. Let $\mathbb{P}=\{(p(t_h),p(t_l))\mid p(t_{l})>0, p(t_{h})>0, p(t_{l})+p(t_{h})=1\}$ denote the set of all probability distributions with support $\{t_l,t_h\}$.
The probability measure $p$ represents the belief of the management about the distribution of types in the economy from which employees are drawn.
For example, $p$ may be a relative frequency empirical distribution. The shape of this distribution may be an important factor that determines the high concentration of high tech firms in some geographical region. 
For example, \citep{Audre} and \citep{Fig} look at the effect of educational attainments of the labor force on the firm location decision.{\footnote{Since efficiency is latent, educational attainment may not be a good proxy for efficiency. 
}}
Since we assume that both individuals are hired, 
the grand coalition is formed. Consider a grand machine state $((e_1,e_2),(t_1,t_2))$. 
By definition of a grand  machine state $e_i\in \{e_l,e_h\}$ and $t_i\in \{t_l,t_h\}$ for $i=1,2$. Thus consider the game in characteristic form defined by $M((e_1,e_2),(t_1,t_2)), M((e_1,0),(t_1,0)), M((0,e_2),(0,t_2)), M((0,0),(0,0))$. 
Let the Shapley shares of this game be denoted by $Sh_{i}((e_1,e_2), (t_1,t_2))$, $i=1,2$. 
The Shapley share of employee $1$ is  $Sh_1((e_1,e_2),(t_1,t_2))= \frac{M((e_1,e_2),(t_1,t_2))-M((0,e_2),(0,t_2))+M((e_1,0),(t_1,0))}{2}
$, and that of employee $2$ is  
$Sh_2((e_1,e_2),(t_1,t_2))= \frac{M((e_1,e_2),(t_1,t_2))-M((e_1,0),(t_1,0))+M((0,e_2),(0,t_2))}{2}
$. The payoff of employee $1$ at $(e_{1},e_{2})\in \{e_l,e_h\}\times \{e_l,e_h\}$ and $(t_{1},t_{2})\in \{t_l,t_h\}\times \{t_l,t_h\}$ is $Sh_{1}((e_1,e_2), (t_1,t_2))-C(e_1,t_{1})$, and that of employee $2$ is $Sh_{2}((e_1,e_2), (t_1,t_2))-C(e_2,t_{2})$.
Given a profile of efficiency levels, and employee $i$'s effort level, the value of the grand coalition depends on the effort of the other individual. Thus, the marginal contribution of $i$ is affected by the efforts of agent $j$. 
That is, for any fixed type profile $Sh_{i}((e_{1},e_{2}),(t_1,t_2))-C(e_1,t_1)$ depends on the effort level of $j$. Thus, we have a well defined stage game at each type profile as defined in according to  \citep{Maschler Solan and Zamir}.
$G$ is the set of stage games, $g_{t}$ refers to the stage game at the type profile $(t_1,t_2)\in \{t_{l},t_{h}\}\times \{t_{l},t_{h}\} $. 
Since type profiles are probabilistic, stage games are probabilistic as well. This defines a game  of incomplete information.  
We define pure strategy next. Since this is the only notion of strategy, we use in this paper from now on we call a pure strategy to be a strategy. A pure strategy or simply a strategy is a function from the set of types  $\{t_{l},t_{h}\}$ to the set of actions $\{e_{l},e_{h}\}$. There are four possible pure strategies defined below.

\begin{defn}\rm  A {\bf strategy (or a pure strategy) } $s_{e_1e_2}$  is a function $s_{e_1e_2}:\{t_l,t_h\}\rightarrow \{e_l,e_h\}$,  defined as \\

	$s_{e_1e_2}(t) = \left\lbrace 
	\begin{array}{cc}
		e_1 &\text{if}~ t=t_l\\
		e_2 &\text{if}~ t=t_h,\\
	\end{array}\right. 
	$\end{defn}

\noindent Given a strategy profile $(s_{1},s_{2})$  at $((s_{1}(t_{1}),s_{2}(t_{2})),(t_{1},t_{2}))$ payoff of player $1$ is given by $Sh_{i}((s_{1}(t_{1}),s_{2}(t_{2})),(t_{1},t_{2}))-C(s_{1}(t_{1}),t_{1})$. Likewise  $Sh_{2}((s_{1}(t_{1}),s_{2}(t_{2})),(t_{1},t_{2}))-C(s_{2}(t_{2}),t_{2})$ denotes the Shapley share of player $2$. 
The ex-ante expected payoff of employee $1$ for the play of the strategy-profile $(s_{1},s_{2})$ is $$\Pi_{1}(s_{1},s_{2})=  \sum_{ (t_{1},t_{2})\in T^{'2}   }  [Sh_{1}((s_{1}(t_{1}),s_{2}(t_{2})),(t_{1},t_{2}))-C(s_{1}(t_{1}),t_{1})]p(t_{1})p(t_{2})$$ and that for employee $2$ is 
$$\Pi_{2}(s_{1},s_{2})=  \sum_{ (t_{1},t_{2})\in T^{'2}}  [Sh_{2}((s_{1}(t_{1}),s_{2}(t_{2})),(t_{1},t_{2}))-C(s_{2}(t_{2}),t_{2})]p(t_{1})p(t_{2}).$$

\noindent Since the management knows the distribution of types only, and we do not assume that individuals know their types, ex-ante expected payoff is the appropriate notion of payoff for our model.
Our model has similarities with Biform games, the latter are two stage games  introduced in
\citep{BrandStuart}. 
Biform games in \citep{BrandStuart}
are games with complete information. Our model and Biform games are similar in the sense that in both models non-cooperation leads to cooperative games. That is,  each profile of nonoperative strategies leads to a cooperative game. 
The payoff of a player in a Biform game is determined by the value of the cooperative game. In particular, the payoff of a player is equal to a weighted average of core allocations. 
The payoff functions of the agents in our model have two components, one among them can be interpreted as cooperative, and the other non-cooperative.  
If we assume the cost of efforts to be zero in our model, only then our game can be classified as a Biform game. In the first stage of a Biform game, employees choose efforts without incurring any cost, and before the second stage employees observe their efforts as in \citep{HartMoore}.
But they do not observe the types of each other, and they face a probability distribution over a set of characteristic form games. Then at the end of the second stage expected weighted core allocations are obtained as payoffs. 
In our model choices of efforts have direct consequences for the payoffs in terms of cost of efforts, which is not the case in Biform games.
\citep{BrandStuart} mention it explicitly that the direct consequences of first stage moves are not payoffs.   
Instead of Biform games, our model is better interpreted as a non-cooperative game where the benefits of a player occur due to cooperation, and the cost occurs due to non-cooperation. 
Hence, our model embeds cooperation into a non-cooperative game. Let $S_{i}$ be the set of all strategies of individual $i$. The notion of ex-ante Nash equilibrium is defined below.

\begin{defn}\rm A strategy profile $(s_1^{*},s_2^{*})$ is an {\bf ex-ante  Nash equilibrium } for the game $\Gamma$ if and only if  
($i$) $\Pi_1(s_1^{*},s_2^{*}) \geq \Pi_1(s_1,s_2^{*})$ for all $s_1^{*} \neq s_1,~ s_1^{*},s_1 \in S_1$; and
 ($ii$) $\Pi_2(s_1^{*},s_2^{*}) \geq \Pi_2(s_1^{*},s_2)$ for all $s_2^{*} \neq s_2,~ s_2^{*},s_2 \in S_2$ hold. 
\end{defn}

\noindent An ex-ante Nash equilibrium is a notion of stability for the profiles of strategies. If $ (s_ {1} ^ {*}, s_ {2} ^ {*}) $ is such an equilibrium, then no individual $i$ is expected to be better off if $i$ does not end up behaving, consciously or unconsciously, according to $s_ {i} ^ {*} $ and $j$ behaves according to $s_ {j} ^ {*} $.

\begin{remark}\rm Since the strategy employee $i$ believes she plays need not be what she actually plays, the payoff that she believes she will obtain may be different from what she actually obtains. We may wonder if this discrepancy can create disincentives for employees and complaints against the management. As long as the process that generates salaries is transparent and fair, the discrepancy mentioned above is is unlikely to be a matter of serious concerns. An intelligent TAM by definition is transparent. Further, fairness axioms characterize 
the Shapley Value.
\label{remark:Shapley_fair}	
\end{remark}	

\noindent We define Symmetric ex ante Nash equilibrium next.

\begin{defn}\rm 
A strategy profile $(s_1^{*},s_2^{*})$ is a {\bf symmetric ex-ante Nash equilibrium (SNE)} for $\Gamma$ if and only if: $(i)$ $(s_1^{*},s_2^{*})$ is an ex-ante Nash equilibrium and
$(ii)$ $s_1^{*}=s_2^{*}$.
\end{defn}

\noindent We shall call the SNE $(s_1^*,s_2^*)$ as the SNE in the strategy $s_{e_1e_2}$ if $s_{e_1e_2}=s_i^*$ for $i=1,2$.  An SNE in the strategy $s_{e_he_l}$, i.e., $s_{e_he_l}(t_l)=e_h$, $s_{e_he_l}(t_h)=e_l$, does not exist in our model.    
We may interpret the other three symmetric equilibria as follows:
$s_{e_le_l}$ is the equilibrium in which TAM receives low efforts from all individuals irrespective of their efficiency levels; $s_{e_he_h}$ is the equilibrium in which TAM receives high efforts from all individuals irrespective of their efficiency levels; $s_{e_le_h}$ is the equilibrium in which TAM receives low efforts from an individual who is of low type and receives high efforts from a type who is of high type.
A firm may consider the distribution of efficiency levels as a factor before deciding on a location because equilibrium outcomes depend on the distributions.                
Therefore we provide a characterization of SNEs that provides a classification of SNEs by the probability distributions on $\{t_l,t_h\}$.      
In particular our computations provide information on what range of $p(t_h)$ gives what kind of equilibria. 
For example we show that symmetric equilibrium in $s_{e_le_l}$ exists if and only if $p$ takes values in an interval around $0$. Further, we identify the upper bound of the interval which depends on the Shapley value and the cost of efforts. 
It may be easier for a firm to take a decision about   
locating itself if for a given distribution the corresponding SNE is unique. The uniqueness pins down the possible equilibrium behavior uniquely.     
Thus we consider a notion of rationalizable equilibrium.

\begin{defn}\rm A strategy $s_{e_1e_2}$ is {\bf rationalizable} if there is a probability distribution $p$ that makes $s_{e_1e_2}$  a unique SNE.{\footnote{We note that the notion rationalizable strategy in our paper is very different from the rationalizable strategies defined in \citep{Bern}. The notion of rationalizability in \citep{Bern} has to do with strategies that are consistent with beliefs of players' about their strategies, in our paper the notion has to do with uniqueness of equilibrium and hence it is about equilibrium selection.}}
\end{defn}   
 
\noindent Since there is no SNE in $s_{e_he_l}$, this strategy is  not rationalizable. The other three equilibria are rationalizable.
Since different strategies entail different kinds of grand coalitions, rationalizable strategies provide information about the nature of the grand coalition that may form.  
The notion of rationalizability in 
\citep{Pongou and Tondji} is different from ours. In \citep{Pongou and Tondji} a player is an input to a production function, and a pure strategy of an input is quality. 
A mixed strategy of an input is defined to be a probability distribution on 
the set of pure strategies.
\citep{Pongou and Tondji} call a vector of mixed strategies rationalizable if the vector constitutes a Nash equilibrium of the associated complete information game. 
In the associated game the payoff from a vector of pure strategy is the Shapley value from the characteristic form game in which each player is identified with a quality of an input. We discuss our main results next.

\subsection{\bf Main Results: Characterizations of SNEs}
\label{sec:char}
We state our results in this section.             
For the sake of convenience of exposition instead of calling a game by $\Gamma$ we call it by $\Gamma_{p}$ since the only parameter that we vary while studying equilibria is $p$.
Further, we fix individual $2$'s strategy. For example, while studying $(s_{e_he_h},s_{e_he_h})$ as equilibrium we fix individual $2$'s strategy at $s_{e_he_h}$, and then argue that player $1$ cannot be made better off from deviation from the strategy that is assumed for player $2$. 
To study individual $1$ let 
$\Delta C_{t_1}= C(e_h,t_1)-C(e_l,t_1)\equiv$ increment in cost due to an increase in efforts at type $t_1$, and  
$\Delta_{e_1e_2}^{e_{1}^{\prime}e_{2}^{\prime}} Sh_1(t_1t_2)=Sh_1((e_{1}^{\prime},e_{2}^{\prime}),(t_1,t_2))-Sh_1((e_1,e_2),(t_1,t_2))\equiv$ 
change in Shapley share due to a change in efforts at the type profile $(t_1,t_2)$.

\begin{prop}\rm Let $M$  be super-modular and $C$ sub-modular: 
		$(i)$ $(s_{e_he_h},s_{e_he_h})$ is an SNE of $\Gamma_{p}$ for some $p \in \mathbb{P}$
		and  $(ii)$ $(s_{e_he_h},s_{e_he_h})$ is not an SNE of $\Gamma_{p^{\prime}}$ for some $p\neq p' \in \mathbb{P}$; 
	if and only if $\Delta_{e_le_h}^{e_he_h} Sh_1(t_lt_l) < \Delta C_{t_l} <  \Delta_{e_le_h}^{e_he_h} Sh_1(t_lt_h)$.  
	
	\label{prop:SNE_ineq_hh}
\end{prop}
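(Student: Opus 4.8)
The plan is to reduce the SNE condition for $(s_{e_he_h}, s_{e_he_h})$ to a single inequality in $p(t_h)$, then translate the two requirements $(i)$ and $(ii)$ into the stated two-sided inequality on $\Delta C_{t_l}$. Fix player $2$'s strategy at $s_{e_he_h}$, so that player $2$ always exerts $e_h$ regardless of type. Player $1$'s possible deviations are the four strategies $s_{e_he_h}, s_{e_le_h}, s_{e_he_l}, s_{e_le_l}$; by the symmetric setup and the structure of $\Pi_1$, which is additively separable over the two events $t_1 = t_l$ (probability $p(t_l)$) and $t_1 = t_h$ (probability $p(t_h)$), player $1$'s best response can be computed coordinate-by-coordinate: on the event $t_1 = t_l$, compare the $t_1=t_l$-summand of $\Pi_1$ under action $e_h$ versus $e_l$ (with player $2$ playing $e_h$ and $t_2$ averaged over its distribution), and similarly on $t_1 = t_h$. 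So $s_{e_he_h}$ is a best response iff both "stay at $e_h$" conditions hold.

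First I would write out, for $t_1 = t_l$, the difference in player $1$'s conditional expected payoff between playing $e_h$ and $e_l$:
\[
p(t_l)\,\Delta_{e_le_h}^{e_he_h} Sh_1(t_lt_l) + p(t_h)\,\Delta_{e_le_h}^{e_he_h} Sh_1(t_lt_h) - \Delta C_{t_l},
\]
using that player $2$ plays $e_h$ in both type-realizations of $t_2$ and that the cost term $C(\cdot, t_l)$ contributes $\Delta C_{t_l}$. For $s_{e_he_h}$ to be optimal on this event we need this quantity $\geq 0$, i.e.
\[
\Delta C_{t_l} \leq p(t_l)\,\Delta_{e_le_h}^{e_he_h} Sh_1(t_lt_l) + p(t_h)\,\Delta_{e_le_h}^{e_he_h} Sh_1(t_lt_h).
\tag{$\star$}
\]
Analogously, the $t_1 = t_h$ event yields a second inequality involving $\Delta C_{t_h}$ and the Shapley increments $\Delta_{e_le_h}^{e_he_h} Sh_1(t_ht_l), \Delta_{e_le_h}^{e_he_h} Sh_1(t_ht_h)$. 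The key reduction is that, under super-modularity of $M$ and sub-modularity of $C$, this second inequality is automatically implied whenever $(\star)$ holds — super-modularity makes the Shapley increments larger at the higher type $t_h$, and sub-modularity makes $\Delta C_{t_h} \leq \Delta C_{t_l}$ — so the binding constraint is $(\star)$. I would verify this monotonic comparison explicitly using the Shapley formula $Sh_1((e_1,e_2),(t_1,t_2)) = \tfrac{1}{2}[M((e_1,e_2),(t_1,t_2)) - M((0,e_2),(0,t_2)) + M((e_1,0),(t_1,0))]$, where the relevant increments in $M$ are controlled by the monotonicity and modularity assumptions.

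Then the argument finishes by analyzing $(\star)$ as $p(t_h)$ ranges over $(0,1)$. The right-hand side of $(\star)$ is a convex combination of $\Delta_{e_le_h}^{e_he_h} Sh_1(t_lt_l)$ and $\Delta_{e_le_h}^{e_he_h} Sh_1(t_lt_h)$, hence takes values in the open interval with these endpoints (the order of the endpoints being fixed by super-modularity: $\Delta_{e_le_h}^{e_he_h} Sh_1(t_lt_l) < \Delta_{e_le_h}^{e_he_h} Sh_1(t_lt_h)$). Condition $(i)$ — existence of some $p$ making $(\star)$ hold — is equivalent to $\Delta C_{t_l}$ being strictly below the supremum $\Delta_{e_le_h}^{e_he_h} Sh_1(t_lt_h)$ of that interval; condition $(ii)$ — existence of some $p'$ making it fail — is equivalent to $\Delta C_{t_l}$ being strictly above the infimum $\Delta_{e_le_h}^{e_he_h} Sh_1(t_lt_l)$. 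Together these give exactly $\Delta_{e_le_h}^{e_he_h} Sh_1(t_lt_l) < \Delta C_{t_l} < \Delta_{e_le_h}^{e_he_h} Sh_1(t_lt_h)$, which is the claimed characterization; for the converse one picks a $p$ with $p(t_h)$ in the appropriate subinterval and reverses the steps.

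The main obstacle I anticipate is the claim that the $t_1 = t_h$ deviation constraint is slack given $(\star)$: this is where super-modularity of $M$ and sub-modularity of $C$ must be combined carefully, keeping track of which $M$-increments enter each Shapley difference, and checking that the inequality goes the right way for every combination of $t_2 \in \{t_l, t_h\}$. A secondary point to be careful about is the strictness of the inequalities — since the $\Pi_i$ are computed with $p(t_l), p(t_h) > 0$ strictly, the convex combination never attains its endpoints, which is precisely what makes conditions $(i)$ and $(ii)$ come out as strict inequalities rather than weak ones.
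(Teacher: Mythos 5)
Your proposal is correct, and its first half coincides with the paper's: the paper's Lemma A.1 establishes exactly your inequality $(\star)$, $\Delta C_{t_l} \leq p(t_l)\,\Delta_{e_le_h}^{e_he_h} Sh_1(t_lt_l) + p(t_h)\,\Delta_{e_le_h}^{e_he_h} Sh_1(t_lt_h)$, as necessary and sufficient for the SNE, and it disposes of the deviations to $s_{e_he_l}$ and $s_{e_le_l}$ by the same combination of super-modularity of $M$ (which pushes the Shapley increments up at $t_h$) and sub-modularity of $C$ (which gives $\Delta C_{t_h} < \Delta C_{t_l}$) that you identify as the main obstacle; your additively-separable, coordinate-by-coordinate framing of the four deviations is a slightly cleaner packaging of the same check. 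Where you genuinely diverge is the second half. The paper converts conditions $(i)$ and $(ii)$ into two systems of linear inequalities in $(p(t_l),p(t_h))$ and invokes Motzkin's transposition theorem (Farkas' lemma) four times — twice to show the primal systems are solvable under the two-sided bound, and twice more, with an auxiliary sign lemma, to run the converse through the non-solvability of the duals. You instead observe that the right-hand side of $(\star)$ is an affine function of $p(t_h)$ sweeping out the open interval between $\Delta_{e_le_h}^{e_he_h} Sh_1(t_lt_l)$ and $\Delta_{e_le_h}^{e_he_h} Sh_1(t_lt_h)$ (ordered by super-modularity), so that $(i)$ is equivalent to $\Delta C_{t_l}$ lying strictly below the supremum and $(ii)$ to it lying strictly above the infimum, with the endpoints unattained precisely because $p$ has full support. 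This one-dimensional interval argument is strictly more elementary, delivers both directions of the equivalence at once, and in fact is the same computation the paper later performs anyway to prove Corollary \ref{Interval-hh}; the Farkas machinery buys nothing here beyond what would be needed if the probability simplex were higher-dimensional. Your proof is complete modulo writing out the slackness check for the $t_1=t_h$ constraint, which you correctly flag and which goes through exactly as you describe.
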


\begin{proof} See the Appendix at the end.  
	
\end{proof}

\noindent Proposition \ref{prop:SNE_ineq_hh} identifies the particular pair of super-modular $M$ and sub-modular $C$ for which there is a probability distribution over $\{t_{h},t_{l}\}$ for which there is an SNE for the strategy $s_{e_{h}e_{h}}$. The next corollary gives a range of probabilities on $p(t_{h})$ for which we obtain $(s_{e_he_h},s_{e_he_h})$ as a symmetric ex-ante equilibrium. 

\begin{cor}\rm Let $M$ be super-modular and $C$ sub-modular. Then $(s_{e_he_h},s_{e_he_h})$ is an SNE of $\Gamma_{p}$ if and only if $p(t_h) \in [\frac{\Delta C_{t_l}-\Delta_{e_le_h}^{e_he_h} Sh_1(t_lt_l)}{\Delta_{e_le_h}^{e_he_h} Sh_1(t_lt_h)-\Delta_{e_le_h}^{e_he_h} Sh_1(t_lt_l)},1)$.
\label{Interval-hh}
\end{cor}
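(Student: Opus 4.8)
The plan is to deduce Corollary~\ref{Interval-hh} directly from Proposition~\ref{prop:SNE_ineq_hh} together with the explicit form of the ex-ante payoff difference $\Pi_1(s_{e_he_h},s_{e_he_h})-\Pi_1(s_1,s_{e_he_h})$ for the three possible deviations $s_1\in\{s_{e_le_l},s_{e_le_h},s_{e_he_l}\}$. Since $s_{e_he_h}$ assigns $e_h$ to both types, a deviation changes player~$1$'s action only on the type(s) where it differs from $s_{e_he_h}$, i.e.\ on $t_l$ for $s_{e_le_h}$, on $t_h$ for $s_{e_he_l}$, and on both for $s_{e_le_l}$. So each deviation loss is a probability-weighted sum of per-type terms of the form $\bigl(\Delta_{e_le_h}^{e_he_h}Sh_1(t_lt_2)-\Delta C_{t_l}\bigr)p(t_l)p(t_2)$ when the change is at $t_l$ (summed over $t_2\in\{t_l,t_h\}$ weighted by $p(t_2)$), and analogously with $t_h$ in place of $t_l$ when the change is at $t_h$. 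Being an SNE is equivalent to all three of these losses being nonnegative.

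The key observation, which should already be implicit in the proof of Proposition~\ref{prop:SNE_ineq_hh}, is that under super-modularity of $M$ and sub-modularity of $C$ the inequality controlling the $t_h$-deviations ($s_{e_he_l}$, and the $t_h$-component of $s_{e_le_l}$) holds automatically: one shows $\Delta C_{t_h}<\Delta_{e_le_h}^{e_he_h}Sh_1(t_ht_2)$ for both $t_2$, so the deviations that lower effort at $t_h$ are never profitable regardless of $p$. Hence the SNE condition reduces to nonnegativity of the $t_l$-deviation loss, namely
\[
\sum_{t_2\in\{t_l,t_h\}}\bigl(\Delta_{e_le_h}^{e_he_h}Sh_1(t_lt_2)-\Delta C_{t_l}\bigr)p(t_l)p(t_2)\ \ge\ 0 .
\]
Dividing by $p(t_l)>0$ and writing $p(t_h)=1-p(t_l)$, this is a linear inequality in $p(t_h)$: it reads $\bigl(\Delta_{e_le_h}^{e_he_h}Sh_1(t_lt_l)-\Delta C_{t_l}\bigr)(1-p(t_h))+\bigl(\Delta_{e_le_h}^{e_he_h}Sh_1(t_lt_h)-\Delta C_{t_l}\bigr)p(t_h)\ge 0$. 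Under the hypothesis $\Delta_{e_le_h}^{e_he_h}Sh_1(t_lt_l)<\Delta C_{t_l}<\Delta_{e_le_h}^{e_he_h}Sh_1(t_lt_h)$ of Proposition~\ref{prop:SNE_ineq_hh} the coefficient of $p(t_h)$ after rearranging, namely $\Delta_{e_le_h}^{e_he_h}Sh_1(t_lt_h)-\Delta_{e_le_h}^{e_he_h}Sh_1(t_lt_l)$, is positive, so solving for $p(t_h)$ yields exactly $p(t_h)\ge\frac{\Delta C_{t_l}-\Delta_{e_le_h}^{e_he_h}Sh_1(t_lt_l)}{\Delta_{e_le_h}^{e_he_h}Sh_1(t_lt_h)-\Delta_{e_le_h}^{e_he_h}Sh_1(t_lt_l)}$; the strict upper endpoint $p(t_h)<1$ is just the constraint $p(t_l)>0$ from $p\in\mathbb{P}$, and one checks the lower threshold lies in $(0,1)$ precisely because of the two strict inequalities on $\Delta C_{t_l}$, so the interval is nonempty.

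I would organize the write-up as: (1) recall from the proof of Proposition~\ref{prop:SNE_ineq_hh} that $s_{e_he_h}$ beats the two deviations that reduce effort at $t_h$ for every $p$, reducing the SNE condition to the single $t_l$-inequality above; (2) expand that inequality, divide by $p(t_l)>0$, substitute $p(t_l)=1-p(t_h)$, and rearrange into the linear form; (3) solve for $p(t_h)$ using positivity of the slope (guaranteed by the hypotheses) and append the endpoint $p(t_h)<1$ coming from $\mathbb{P}$. The only real subtlety — and the step most worth stating carefully rather than asserting — is point (1): that super-modularity of $M$ plus sub-modularity of $C$ really do force $\Delta C_{t_h}<\Delta_{e_le_h}^{e_he_h}Sh_1(t_ht_l)$ and $\Delta C_{t_h}<\Delta_{e_le_h}^{e_he_h}Sh_1(t_ht_h)$, so that no $p$-dependence survives from the $t_h$ side. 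Everything after that is the routine linear-inequality manipulation sketched above, and since Proposition~\ref{prop:SNE_ineq_hh} is already proved, the corollary's proof can legitimately be quite short, essentially ``specialize the argument of Proposition~\ref{prop:SNE_ineq_hh}, keeping track of the coefficients of $p(t_h)$.''
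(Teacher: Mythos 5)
Your overall route is the same as the paper's: reduce the SNE condition to the single linear inequality $p(t_l)\,\Delta_{e_le_h}^{e_he_h} Sh_1(t_lt_l)+p(t_h)\,\Delta_{e_le_h}^{e_he_h} Sh_1(t_lt_h)\ \geq\ \Delta C_{t_l}$ coming from the deviation to $s_{e_le_h}$, substitute $p(t_l)=1-p(t_h)$, and solve for $p(t_h)$, with the threshold landing in $(0,1)$ by the two strict inequalities of Proposition \ref{prop:SNE_ineq_hh}. Steps (2) and (3) of your plan are exactly the paper's computation and are fine.

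However, step (1) as you state it contains a genuine error. You claim that the deviations lowering effort at $t_h$ are never profitable ``regardless of $p$'' because super-modularity of $M$ and sub-modularity of $C$ force the pointwise inequalities $\Delta C_{t_h}<\Delta_{e_le_h}^{e_he_h}Sh_1(t_ht_l)$ and $\Delta C_{t_h}<\Delta_{e_le_h}^{e_he_h}Sh_1(t_ht_h)$. They do not: those hypotheses only constrain cross-differences, not the size of cost increments relative to Shapley-share increments. For instance, take $\Delta_{e_le_h}^{e_he_h}Sh_1(t_lt_l)=1$, $\Delta_{e_le_h}^{e_he_h}Sh_1(t_ht_l)=1.1$, $\Delta_{e_le_h}^{e_he_h}Sh_1(t_lt_h)=5$, $\Delta_{e_le_h}^{e_he_h}Sh_1(t_ht_h)=5.5$, $\Delta C_{t_l}=4$, $\Delta C_{t_h}=3$: all modularity conditions and even the hypothesis of Proposition \ref{prop:SNE_ineq_hh} hold, yet $\Delta C_{t_h}=3>1.1=\Delta_{e_le_h}^{e_he_h}Sh_1(t_ht_l)$. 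What is true, and what the paper's Lemma \ref{lemma:nec_suff_hh} actually proves, is the weaker, $p$-dependent implication: if the $t_l$-inequality above holds at a given $p$, then $\Delta C_{t_h}<\Delta C_{t_l}\leq p(t_l)\,\Delta_{e_le_h}^{e_he_h} Sh_1(t_lt_l)+p(t_h)\,\Delta_{e_le_h}^{e_he_h} Sh_1(t_lt_h)<p(t_l)\,\Delta_{e_le_h}^{e_he_h} Sh_1(t_ht_l)+p(t_h)\,\Delta_{e_le_h}^{e_he_h} Sh_1(t_ht_h)$, using sub-modularity for the first step and super-modularity coordinatewise (in the aggregate, probability-weighted sense) for the last. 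This makes the $s_{e_he_l}$-deviation unprofitable at that same $p$, and adding the two per-type losses handles $s_{e_le_l}$. So the reduction you want is correct, but only conditionally on the $t_l$-inequality; since that inequality is exactly the one you then solve for $p(t_h)$, the corollary survives once you replace the false unconditional claim with this conditional chain.
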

 
\begin{proof} See the Appendix at the end.  
	
\end{proof}\\
\noindent The next result is about $s_{e_{l}e_{l}}$.  

 \begin{prop}\rm \label{prop:SNE_ineq_ll} Let $M$ be super-modular and $C$ sub-modular:  
	$(i)$ $(s_{e_le_l},s_{e_le_l})$ is an SNE of $\Gamma_{p}$, for some $p \in \mathbb{P}$ ;
		$(ii)$ $(s_{e_le_l},s_{e_le_l})$ is not an SNE of $\Gamma_{p^{\prime}}$, for some $p^{\prime} \in \mathbb{P}$ 
if and only if $\Delta_{e_le_l}^{e_he_l} Sh_1(t_ht_l) < \Delta C_{t_h} <  \Delta_{e_le_l}^{e_he_l} Sh_1(t_ht_h)$.  
	\end{prop}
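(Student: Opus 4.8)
The plan is to run the argument of Proposition~\ref{prop:SNE_ineq_hh} with player~$2$'s strategy now fixed at $s_{e_le_l}$ (it suffices, by the symmetry of $\Gamma_p$, to rule out a profitable deviation by player~$1$). The structural point is that $s_{e_le_l}$ makes player~$2$'s effort constantly $e_l$, so player~$1$'s ex-ante payoff against $s_{e_le_l}$ separates additively into a term depending only on the action $s_1(t_l)$ used at type $t_l$ and a term depending only on $s_1(t_h)$; hence $s_1$ can be optimized type by type. Using $Sh_1((e_1,e_2),(t_1,t_2))=\tfrac{1}{2}\big[M((e_1,e_2),(t_1,t_2))-M((0,e_2),(0,t_2))+M((e_1,0),(t_1,0))\big]$, the $t_2$-term cancels in the difference $Sh_1((e_h,e_l),(t_1,t_2))-Sh_1((e_l,e_l),(t_1,t_2))=\Delta_{e_le_l}^{e_he_l}Sh_1(t_1t_2)$, and $(s_{e_le_l},s_{e_le_l})$ is an SNE of $\Gamma_p$ if and only if both scalar inequalities
$$\Delta C_{t_l}\ \geq\ p(t_l)\,\Delta_{e_le_l}^{e_he_l}Sh_1(t_lt_l)+p(t_h)\,\Delta_{e_le_l}^{e_he_l}Sh_1(t_lt_h)$$
$$\Delta C_{t_h}\ \geq\ p(t_l)\,\Delta_{e_le_l}^{e_he_l}Sh_1(t_ht_l)+p(t_h)\,\Delta_{e_le_l}^{e_he_l}Sh_1(t_ht_h)$$
hold, the first being the requirement that $e_l$ weakly beats $e_h$ at type $t_l$ (given player~$2$ plays $e_l$), the second the analogous requirement at type $t_h$.

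The next step is to show that the second inequality implies the first, so that the SNE condition collapses to the second alone. For this I would invoke super-modularity of $M$: applied with the effort profiles $(e_l,e_l)<_{ee}(e_h,e_l)$, and separately with $(e_l,0)<_{ee}(e_h,0)$ for the singleton term $M((e_h,0),(t_1,0))-M((e_l,0),(t_1,0))$ precisely as in the worked example following the definition of super-modularity in Section~\ref{sec:defn}, it yields that $\Delta_{e_le_l}^{e_he_l}Sh_1(t_1t_2)$ is strictly increasing in each of its two type arguments. In particular $\Delta_{e_le_l}^{e_he_l}Sh_1(t_lt_l)<\Delta_{e_le_l}^{e_he_l}Sh_1(t_ht_l)$ and $\Delta_{e_le_l}^{e_he_l}Sh_1(t_lt_h)<\Delta_{e_le_l}^{e_he_l}Sh_1(t_ht_h)$, so the right-hand side of the first inequality is a convex combination, with the same weights $p(t_l),p(t_h)$, of numbers each strictly below the corresponding term on the right-hand side of the second. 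Since sub-modularity of $C$ gives $\Delta C_{t_l}>\Delta C_{t_h}$, the second inequality forces the right-hand side of the first inequality --- which is strictly below that of the second --- to be strictly below $\Delta C_{t_l}$ as well, so the first inequality holds too. Thus $(s_{e_le_l},s_{e_le_l})$ is an SNE of $\Gamma_p$ if and only if the second inequality holds.

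It then remains to read off $(i)$ and $(ii)$. View the right-hand side of the second inequality as the affine function $q\mapsto(1-q)\,\Delta_{e_le_l}^{e_he_l}Sh_1(t_ht_l)+q\,\Delta_{e_le_l}^{e_he_l}Sh_1(t_ht_h)$ of $q=p(t_h)\in(0,1)$; by the monotonicity just established this is strictly increasing, with range the open interval $\big(\Delta_{e_le_l}^{e_he_l}Sh_1(t_ht_l),\,\Delta_{e_le_l}^{e_he_l}Sh_1(t_ht_h)\big)$. Hence there is some $p\in\mathbb{P}$ for which $(s_{e_le_l},s_{e_le_l})$ is an SNE --- part $(i)$ --- exactly when $\Delta C_{t_h}$ lies strictly above the left endpoint, i.e., $\Delta_{e_le_l}^{e_he_l}Sh_1(t_ht_l)<\Delta C_{t_h}$; and there is some $p'\in\mathbb{P}$ for which it fails to be an SNE --- part $(ii)$ --- exactly when $\Delta C_{t_h}$ lies strictly below the right endpoint, i.e., $\Delta C_{t_h}<\Delta_{e_le_l}^{e_he_l}Sh_1(t_ht_h)$. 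Conjoining these two gives the claimed two-sided inequality.

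The main obstacle is the implication ``second inequality $\Rightarrow$ first inequality'': the whole argument hinges on the strict monotonicity of $\Delta_{e_le_l}^{e_he_l}Sh_1(\cdot\,\cdot)$ in its type coordinates, and this is the one place where super-modularity of $M$ does genuine work --- one must apply it correctly both to the grand-coalition terms and to the singleton-coalition terms relative to the orders $<_{tt}$ and $<_{ee}$, and must not forget the singleton term in the Shapley formula. Everything after that is routine manipulation of an affine function on $(0,1)$ and parallels the proof of Proposition~\ref{prop:SNE_ineq_hh}.
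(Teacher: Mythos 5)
Your proof is correct, and it reaches the same two intermediate milestones as the paper's argument for the analogous Proposition \ref{prop:SNE_ineq_hh} (and its Section~2 counterpart for $s_{e_le_l}$): first, that $(s_{e_le_l},s_{e_le_l})$ is an SNE of $\Gamma_p$ iff the single scalar inequality $\Delta C_{t_h}\geq p(t_l)\,\Delta_{e_le_l}^{e_he_l}Sh_1(t_ht_l)+p(t_h)\,\Delta_{e_le_l}^{e_he_l}Sh_1(t_ht_h)$ holds; second, the translation of $(i)$ and $(ii)$ into the two-sided bound on $\Delta C_{t_h}$. But you get there by a genuinely different route in both steps. For the reduction to one inequality, you exploit the additive separation of player~1's ex-ante payoff across her own types (valid because player~2's effort is constant under $s_{e_le_l}$, and because deviations to $s_{e_he_l}$, $s_{e_le_h}$, $s_{e_he_h}$ change the action at only one or both types independently), whereas the paper runs a chain of pairwise deviation comparisons, showing that immunity to one deviation plus super-modularity of $M$ and sub-modularity of $C$ rules out the others. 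For the existence/non-existence step, the paper encodes the inequality together with the simplex constraints as a linear system and invokes the Motzkin--Farkas transposition theorem, verifying that the dual systems are infeasible; you instead observe that the right-hand side is a strictly increasing affine function of $q=p(t_h)$ on $(0,1)$ with open range $\bigl(\Delta_{e_le_l}^{e_he_l}Sh_1(t_ht_l),\,\Delta_{e_le_l}^{e_he_l}Sh_1(t_ht_h)\bigr)$, from which $(i)$ and $(ii)$ (and in fact the interval of Corollary \ref{Interval-ll}) drop out immediately. Your approach is more elementary and shorter, and it makes transparent exactly where super-modularity of $M$ (strict monotonicity of $\Delta_{e_le_l}^{e_he_l}Sh_1$ in each type coordinate, applied to both the grand-coalition and singleton terms) and sub-modularity of $C$ ($\Delta C_{t_l}>\Delta C_{t_h}$) are used; the paper's duality machinery buys a template that generalizes more mechanically when the probability space has more than one free parameter, i.e., more than two types or players.
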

A proof similar to the  proof of Proposition \ref{prop:SNE_ineq_hh} can be found in Section $2$ at
 \url{https://drive.google.com/file/d/1M5P3AfTVUpJvhQc2Pj8FL6GgXnUkz51b/view?usp=sharing}.
The next corollary entails a range of probabilities on $p(t_{h})$ for which we obtain such symmetric ex-ante equilibria.

\begin{cor}\rm
	Let $M$  be super-modular and $C$ be sub-modular. Then $(s_{e_le_l},s_{e_le_l})$ is an symmetric ex-ante equilibrium of $\Gamma_{p}$ if and only if $p(t_h) \in (0,\frac{\Delta C_{t_h}-\Delta_{e_le_l}^{e_he_l} Sh_1(t_ht_l)}{\Delta_{e_le_l}^{e_he_l} Sh_1(t_ht_h)-\Delta_{e_le_l}^{e_he_l} Sh_1(t_ht_l)}]$.
	\label{Interval-ll}
\end{cor}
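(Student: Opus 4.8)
The plan is to follow the convention of the excerpt: fix individual $2$'s strategy at $s_{e_le_l}$, so that player $2$ exerts $e_l$ at every type, and identify exactly the $p\in\mathbb{P}$ for which player $1$ has no profitable unilateral deviation (the check for player $2$ being identical by the symmetry of $\Gamma_p$, and $s_1^*=s_2^*$ holding by construction). Since $\Pi_1(\cdot,s_{e_le_l})$ is an additively separable sum over the four type profiles, and player $1$'s action at $t_l$ enters only the terms $(t_l,t_l)$ and $(t_l,t_h)$ while her action at $t_h$ enters only $(t_h,t_l)$ and $(t_h,t_h)$, the best-response problem splits into two independent binary choices. Hence $(s_{e_le_l},s_{e_le_l})$ is an SNE if and only if switching the $t_l$-action to $e_h$ does not raise $\Pi_1$ \emph{and} switching the $t_h$-action to $e_h$ does not raise $\Pi_1$; the only remaining deviation, to $s_{e_he_h}$, flips both actions and its effect is the sum of these two, so it is covered automatically, and conversely an SNE must in particular resist the deviation flipping only the $t_h$-action.

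The core computation is the ``no profitable deviation at $t_h$'' inequality. Flipping $s_1(t_h)$ from $e_l$ to $e_h$ against $s_{e_le_l}$ changes $\Pi_1$ by
\[
p(t_h)\Big(p(t_l)\big[\Delta_{e_le_l}^{e_he_l}Sh_1(t_ht_l)-\Delta C_{t_h}\big]+p(t_h)\big[\Delta_{e_le_l}^{e_he_l}Sh_1(t_ht_h)-\Delta C_{t_h}\big]\Big),
\]
and since $p(t_h)>0$ this is non-positive iff $p(t_l)\Delta_{e_le_l}^{e_he_l}Sh_1(t_ht_l)+p(t_h)\Delta_{e_le_l}^{e_he_l}Sh_1(t_ht_h)\le\Delta C_{t_h}$. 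Substituting $p(t_l)=1-p(t_h)$ and collecting terms gives $p(t_h)\big(\Delta_{e_le_l}^{e_he_l}Sh_1(t_ht_h)-\Delta_{e_le_l}^{e_he_l}Sh_1(t_ht_l)\big)\le\Delta C_{t_h}-\Delta_{e_le_l}^{e_he_l}Sh_1(t_ht_l)$. I would then observe that the bracketed coefficient on the left is strictly positive: expanding $\Delta_{e_le_l}^{e_he_l}Sh_1$ via the Shapley formula, the singleton term $M((e_h,0),(t_h,0))-M((e_l,0),(t_h,0))$ is common to both instances and cancels, leaving $\tfrac12\big[M((e_h,e_l),(t_h,t_h))-M((e_l,e_l),(t_h,t_h))-M((e_h,e_l),(t_h,t_l))+M((e_l,e_l),(t_h,t_l))\big]$, positive by super-modularity of $M$ applied to $(e_l,e_l)<_{ee}(e_h,e_l)$ and $(t_h,t_l)<_{tt}(t_h,t_h)$. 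Dividing therefore preserves the inequality and yields $p(t_h)\le\frac{\Delta C_{t_h}-\Delta_{e_le_l}^{e_he_l}Sh_1(t_ht_l)}{\Delta_{e_le_l}^{e_he_l}Sh_1(t_ht_h)-\Delta_{e_le_l}^{e_he_l}Sh_1(t_ht_l)}$; together with $p(t_h)>0$, which holds for every $p\in\mathbb{P}$, this is precisely the claimed half-open interval.

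It then remains to argue that the ``no profitable deviation at $t_l$'' condition, $p(t_l)\Delta_{e_le_l}^{e_he_l}Sh_1(t_lt_l)+p(t_h)\Delta_{e_le_l}^{e_he_l}Sh_1(t_lt_h)\le\Delta C_{t_l}$, is implied by the $t_h$-condition, so that it imposes no further restriction. For this I would establish (i) $\Delta_{e_le_l}^{e_he_l}Sh_1(t_lt_l)<\Delta_{e_le_l}^{e_he_l}Sh_1(t_ht_l)$ and $\Delta_{e_le_l}^{e_he_l}Sh_1(t_lt_h)<\Delta_{e_le_l}^{e_he_l}Sh_1(t_ht_h)$, each obtained by expanding the Shapley shares and applying super-modularity of $M$ to the resulting grand-coalition and singleton increments; and (ii) $\Delta C_{t_h}<\Delta C_{t_l}$, which is exactly sub-modularity of $C$. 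Combining (i) and (ii), the left-hand side of the $t_l$-inequality is strictly smaller than that of the $t_h$-inequality while its right-hand side is strictly larger, so the $t_h$-inequality is the binding one, and the equivalence is complete. The remaining manipulations parallel Proposition~\ref{prop:SNE_ineq_hh} and Corollary~\ref{Interval-hh} with the roles of $(t_l,e_l)$ and $(t_h,e_h)$ interchanged; in particular the non-degeneracy condition of Proposition~\ref{prop:SNE_ineq_ll} guarantees that the upper endpoint lies in $(0,1)$, though the displayed formula remains correct even when it does not (the interval then being empty or all of $(0,1)$, as appropriate). The main obstacle is purely the bookkeeping in step (i) — tracking which $M$-terms cancel in the Shapley-share differences and matching each surviving bracket to the correct instance of super-modularity — rather than anything conceptually deep.
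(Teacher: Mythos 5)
Your proposal is correct and follows essentially the same route as the paper: reduce the equilibrium condition to the single binding linear inequality in $p$ coming from the type-$t_h$ deviation (the analogue of Lemma \ref{lemma:nec_suff_hh}), show via super-modularity of $M$ and sub-modularity of $C$ that the type-$t_l$ deviation (and hence the double deviation) is automatically unprofitable, and then rearrange after substituting $p(t_l)=1-p(t_h)$, using super-modularity to sign the denominator. The separability observation you use to dispose of the remaining deviations is the same mechanism the paper invokes when it adds its inequalities (\ref{eqhh2}) and (\ref{eqhh3}).
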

\begin{proof} A proof can be found in Section $2$ of\\ \url{https://drive.google.com/file/d/1M5P3AfTVUpJvhQc2Pj8FL6GgXnUkz51b/view?usp=sharing}.    
	
\end{proof}

\noindent The next result is about $(s_{e_l}, s_{e_h})$.  

\begin{prop}\rm Let  $M$ be super-modular and $C$ sub-modular: $(a)$ $(s_{e_le_h},s_{e_le_h})$ is an SNE of $\Gamma_{p}$, for some $p \in \mathbb{P}$;
		$(b)$  $(s_{e_le_h},s_{e_le_h})$ is not an SNE of $\Gamma_{p^{\prime}}$, for some $p^{\prime} \in \mathbb{P}$;
\medskip
	
	\noindent if and only if exactly one of the following holds:
	\begin{itemize}
		\item [$(i)$] at least one of the following holds   
		\begin{itemize}
			\item [$(a)$]
			$\Delta_{e_le_l}^{e_he_l} Sh_1(t_lt_l) < \Delta C_{t_l} <  \Delta_{e_le_h}^{e_he_h} Sh_1(t_lt_h)$,
			\item [$(b)$] $ \Delta_{e_le_l}^{e_he_l} Sh_1(t_ht_l) < \Delta C_{t_h} <  \Delta_{e_le_h}^{e_he_h} Sh_1(t_ht_h)$,
		\end{itemize}
		\item  [$(ii)$] at least one of the following holds
		
		\begin{itemize}
			\item [$(a)$] $\Delta_{e_le_h}^{e_he_h} Sh_1(t_lt_h) < \Delta C_{t_l} <  \Delta_{e_le_l}^{e_he_l} Sh_1(t_lt_l)$,
			\item [$(b)$]  $ \Delta_{e_le_h}^{e_he_h} Sh_1(t_ht_h) < \Delta C_{t_h} < \Delta_{e_le_l}^{e_he_l} Sh_1(t_ht_l) $,
		\end{itemize}
	\end{itemize}
	\label{prop:SNE_ineq_lh}
\end{prop}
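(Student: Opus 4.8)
The plan is to mimic the structure of the proof of Proposition \ref{prop:SNE_ineq_hh}, but now the strategy $s_{e_le_h}$ prescribes different effort levels at the two types, so the single-deviation analysis splits into two independent one-dimensional problems. First I would fix player $2$ at $s_{e_le_h}$ and write down $\Pi_1(s_1, s_{e_le_h})$ for each of the four strategies $s_1 \in \{s_{e_le_l}, s_{e_he_l}, s_{e_le_h}, s_{e_he_h}\}$. Because types are drawn i.i.d.\ under $p$, the ex-ante payoff decomposes by the type of player $1$: the $t_l$-block of the sum (weight $p(t_l)$) is governed by $s_1(t_l)$, and the $t_h$-block (weight $p(t_h)$) by $s_1(t_h)$. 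Hence $s_{e_le_h}$ is a best response against itself if and only if (A) switching $s_1(t_l)$ from $e_l$ to $e_h$ is not profitable, and (B) switching $s_1(t_h)$ from $e_h$ to $e_l$ is not profitable. Writing $q \equiv p(t_h)$, condition (A) reads
\[
q\,\Delta_{e_le_h}^{e_he_h}Sh_1(t_lt_h) + (1-q)\,\Delta_{e_le_l}^{e_he_l}Sh_1(t_lt_l) \le \Delta C_{t_l},
\]
and condition (B) reads
\[
q\,\Delta_{e_le_h}^{e_he_h}Sh_1(t_ht_h) + (1-q)\,\Delta_{e_le_l}^{e_he_l}Sh_1(t_ht_l) \ge \Delta C_{t_h}.
\]
Each is a linear inequality in $q$, so the SNE condition is "$q$ lies in the intersection of two half-lines/intervals in $(0,1)$."

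Next I would translate the requirement "$(s_{e_le_h},s_{e_le_h})$ is an SNE for some $p$ and not an SNE for some other $p'$" into: the solution set in $q$ of (A)$\wedge$(B), intersected with the open interval $(0,1)$, is nonempty but is a proper subset of $(0,1)$. For each linear inequality the coefficient of $q$ is the difference of two Shapley increments, and its sign is controlled by the super-modularity of $M$ together with the structure of the Shapley formula (the relevant $Sh_1$ increments are explicit linear combinations of machine values, and super-modularity orders them); similarly sub-modularity of $C$ orders $\Delta C_{t_l}$ and $\Delta C_{t_h}$. The "exactly one of (i), (ii)" dichotomy in the statement is precisely the case split on the signs of these slopes: case (i) is when the feasible $q$-set is of the form "$q$ small enough from (A) and $q$ large enough from (B)" giving an interval $[\underline q,\overline q]$ strictly inside $(0,1)$, and case (ii) is the reversed-slope situation (this is why the inequalities in (ii) have the same four $Sh_1$ quantities with the roles of the endpoints swapped). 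The disjunctions "$(a)$ or $(b)$" within (i) and within (ii) come from the fact that nonemptiness of the intersection needs only that at least one of the two linear constraints actually bites inside $(0,1)$ while the other is automatically satisfied there — this is the bookkeeping one does by checking the inequalities at the endpoints $q=0$ and $q=1$.

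Concretely, the steps in order: (1) expand the Shapley shares for the four grand machine states $((\,\cdot,\cdot),(t_l,t_l)),((\,\cdot,\cdot),(t_l,t_h)),((\,\cdot,\cdot),(t_h,t_l)),((\,\cdot,\cdot),(t_h,t_h))$ using the formula in Section \ref{sec:GIF}, and record the four increments $\Delta_{e_le_l}^{e_he_l}Sh_1(\cdot)$ and $\Delta_{e_le_h}^{e_he_h}Sh_1(\cdot)$; (2) derive (A) and (B) above by comparing $\Pi_1(s_{e_le_h},s_{e_le_h})$ to the payoffs of the three alternative strategies, noting that the "double deviation" $s_{e_he_l}$ is handled because its $\Pi_1$ is just the $t_l$-block of one alternative plus the $t_h$-block of the other, so it adds no new constraint beyond (A) and (B); (3) for each sign pattern of the two slopes, solve (A)$\wedge$(B)$\wedge\{0<q<1\}$ and check when the solution set is nonempty and proper, reading off exactly the alternatives listed as (i)$(a)$,(i)$(b)$,(ii)$(a)$,(ii)$(b)$; (4) use super-modularity of $M$ and sub-modularity of $C$ to verify that the sign patterns that would make the solution set empty-for-all-$p$ or all-of-$(0,1)$ are the excluded ones, so the displayed disjunction is exactly equivalent to the desired "some $p$ yes, some $p'$ no."

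The main obstacle I anticipate is the case analysis in step (3): there are two linear inequalities whose slopes can each be positive or negative, and one must carefully track, in each of the resulting cases, whether the feasible interval of $q$ is nonempty, whether it is all of $(0,1)$, and which of the two constraints is the binding one — this is exactly what produces the somewhat intricate "exactly one of (i),(ii), and within each at least one of $(a),(b)$" phrasing. The Shapley-increment bookkeeping in step (1) is routine given the two-player formula (each $Sh_1$ is half of a machine value minus half another plus half a third), and the monotonicity/super-modularity inputs are used only to pin down signs, so the algebra there should be light; the delicate part is making sure the endpoint checks at $q=0$ and $q=1$ are matched correctly to the four listed alternatives.
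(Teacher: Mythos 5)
Your reduction of the equilibrium condition to the two one-dimensional inequalities (A) and (B) is correct and is consistent with what Corollary \ref{Interval-lh} records: because the ex-ante payoff of player $1$ separates additively across her two type-blocks, the only binding deviations against $s_{e_le_h}$ are the two single-type switches, and the double deviation to $s_{e_he_l}$ is profitable only if one of the single switches is, so it adds no constraint. That part is sound, and your direct interval analysis in $q=p(t_h)$ is a more elementary route than the one the paper takes for the sibling result: Proposition \ref{prop:SNE_ineq_hh} is proved by writing the relevant conditions as a system $Ax\le b$, $Bx<c$ and showing that its Farkas dual has no solution, and the paper's (external) proof of the present proposition follows the same Farkas template.

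The gap is in the step you dismiss as bookkeeping. You justify the ``at least one of $(a)$, $(b)$'' clauses by asserting that when only one of the two linear constraints bites inside $(0,1)$, ``the other is automatically satisfied there.'' That is not automatic. Suppose, for instance, that $(i)(a)$ holds but $\Delta C_{t_h}\geq \Delta_{e_le_l}^{e_he_l} Sh_1(t_ht_l)$ and $\Delta C_{t_h}\geq \Delta_{e_le_h}^{e_he_h} Sh_1(t_ht_h)$: then the affine function of $q$ on the left of (B) lies weakly below $\Delta C_{t_h}$ at both endpoints $q=0$ and $q=1$, hence on all of $[0,1]$, so (B) fails for essentially every $p$ and no SNE exists at all, even though (A) by itself carves out a nonempty proper interval. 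So the claimed equivalence does not follow from the interval picture alone; one must show that super-modularity of $M$, sub-modularity of $C$, and the hypothesis that exactly one of $(i)$, $(ii)$ holds jointly exclude such configurations, i.e., one must establish the order relations among the four Shapley increments $\Delta_{e_le_l}^{e_he_l}Sh_1(t_lt_l)$, $\Delta_{e_le_h}^{e_he_h}Sh_1(t_lt_h)$, $\Delta_{e_le_l}^{e_he_l}Sh_1(t_ht_l)$, $\Delta_{e_le_h}^{e_he_h}Sh_1(t_ht_h)$ and the two cost increments that make the endpoint checks come out right. This is the analogue of Lemma \ref{lemma:shapley_cost_hh} and Steps $1$--$4$ in the proof of Proposition \ref{prop:SNE_ineq_hh}, it is the substance of the argument, and your outline defers it entirely to steps (3)--(4) without identifying the needed inequalities. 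Until that case analysis is carried out (and the problematic sign patterns are shown to be excluded rather than merely presumed excluded), the ``exactly one of $(i)$, $(ii)$, at least one of $(a)$, $(b)$'' equivalence is unproved.
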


\begin{proof} A proof of the result can be found in Section $3$ at \url{https://drive.google.com/file/d/1M5P3AfTVUpJvhQc2Pj8FL6GgXnUkz51b/view?usp=sharing}. 
\end{proof}	

\noindent The next corollary gives a range of probabilities on $p(t_{h})$ for which one can obtain such symmetric ex-ante equilibria.

\begin{cor}\rm
Let $M$  be super-modular and $C$ sub-modular. Then $(s_{e_le_h},s_{e_le_h})$ is an SNE of $\Gamma_{p}$ if and only if $p(t_h) \in [\frac{\Delta C_{t_h} -\Delta_{e_le_l}^{e_he_l} Sh_1(t_ht_l)}{\Delta_{e_le_h}^{e_he_h} Sh_1(t_ht_h)-\Delta_{e_le_l}^{e_he_l} Sh_1(t_ht_l)},\frac{\Delta C_{t_l} -\Delta_{e_le_l}^{e_he_l} Sh_1(t_lt_l)}{\Delta_{e_le_h}^{e_he_h} Sh_1(t_lt_h)-\Delta_{e_le_l}^{e_he_l} Sh_1(t_lt_l)}] \subseteq (0,1)$.
	\label{Interval-lh}
\end{cor}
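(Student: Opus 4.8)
The plan is to follow the reduction used for the earlier propositions: since $M$ and the cost $C$ are symmetric, $\Gamma_p$ is a symmetric game, so it suffices to fix player $2$ at $s_{e_le_h}$ and verify that $s_{e_le_h}$ is a best reply for player $1$. The structural fact I would exploit is that, with player $2$'s strategy fixed, $\Pi_1(s_1,s_{e_le_h})$ is additively separable into a term depending only on $s_1(t_l)$ and a term depending only on $s_1(t_h)$, because player $1$'s own realized type selects which summand of $\sum_{(t_1,t_2)}$ is affected and $p$ is a product measure. Hence $(s_{e_le_h},s_{e_le_h})$ is an SNE if and only if neither single-type deviation is profitable --- raising effort at $t_l$, i.e.\ moving to $s_{e_he_h}$, and lowering effort at $t_h$, i.e.\ moving to $s_{e_le_l}$ --- since the double deviation to $s_{e_he_l}$ changes $\Pi_1$ by exactly the sum of those two changes and is therefore subsumed.

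Writing $q=p(t_h)$ and expanding $\Pi_1$, after cancelling the positive factor $p(t_l)$ the ``no profitable $t_l$-deviation'' condition is $(1-q)\,\Delta_{e_le_l}^{e_he_l}Sh_1(t_lt_l)+q\,\Delta_{e_le_h}^{e_he_h}Sh_1(t_lt_h)\le \Delta C_{t_l}$, and after cancelling $p(t_h)$ the ``no profitable $t_h$-deviation'' condition is $(1-q)\,\Delta_{e_le_l}^{e_he_l}Sh_1(t_ht_l)+q\,\Delta_{e_le_h}^{e_he_h}Sh_1(t_ht_h)\ge \Delta C_{t_h}$. Both are affine in $q$ with coefficients $D_l:=\Delta_{e_le_h}^{e_he_h}Sh_1(t_lt_h)-\Delta_{e_le_l}^{e_he_l}Sh_1(t_lt_l)$ and $D_h:=\Delta_{e_le_h}^{e_he_h}Sh_1(t_ht_h)-\Delta_{e_le_l}^{e_he_l}Sh_1(t_ht_l)$. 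Expanding each Shapley share via $Sh_1((e_1,e_2),(t_1,t_2))=\tfrac12\big(M((e_1,e_2),(t_1,t_2))-M((0,e_2),(0,t_2))+M((e_1,0),(t_1,0))\big)$, the singleton terms cancel and $D_l,D_h$ become (half of) four-term cross-differences of $M$; I would then invoke super-modularity of $M$, applied to suitable pairs $t'<_{tt}t''$ and $e'<_{ee}e''$ in the style of the example following its definition, to conclude $D_l>0$ and $D_h>0$. Granting this, the first inequality reads $q\le\big(\Delta C_{t_l}-\Delta_{e_le_l}^{e_he_l}Sh_1(t_lt_l)\big)/D_l$ and the second reads $q\ge\big(\Delta C_{t_h}-\Delta_{e_le_l}^{e_he_l}Sh_1(t_ht_l)\big)/D_h$, and their conjunction is exactly the interval in the statement.

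To finish, I would verify the interval lies in $(0,1)$: since $D_h>0$ its left endpoint is positive precisely when $\Delta_{e_le_l}^{e_he_l}Sh_1(t_ht_l)<\Delta C_{t_h}$, and since $D_l>0$ its right endpoint is below $1$ precisely when $\Delta C_{t_l}<\Delta_{e_le_h}^{e_he_h}Sh_1(t_lt_h)$; both follow from sub-modularity of $C$ together with the super-modularity-based orderings of the Shapley differences --- these being exactly the strict inequalities that appear in Proposition \ref{prop:SNE_ineq_lh} --- which also shows the interval is nonempty, reconciling the statement with ``$(s_{e_le_h},s_{e_le_h})$ is an SNE of $\Gamma_p$ for some $p$ and fails to be one for some other $p'$.''

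I expect the sign analysis to be the main obstacle: extracting $D_l>0$, $D_h>0$ and the two endpoint bounds from super-modularity of $M$ and sub-modularity of $C$ requires expanding every Shapley share into $M$-values, cancelling the $(e,0)$-singleton contributions, and matching the surviving cross-differences to instances of the super-modularity inequality with carefully chosen profiles; in particular one must be careful that the effort profiles used are $<_{ee}$-comparable, since $(e_l,e_h)$ and $(e_h,e_l)$ are not, so the cross-differences must be chained through an intermediate profile. Everything downstream of the sign facts --- solving two linear inequalities and intersecting --- is routine bookkeeping.
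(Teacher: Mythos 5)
Your reduction is sound and matches the paper's template: by symmetry you fix player $2$ at $s_{e_le_h}$, use the product structure of $p$ to separate $\Pi_1(\cdot,s_{e_le_h})$ into a part depending on $s_1(t_l)$ and a part depending on $s_1(t_h)$, and correctly obtain the two linear-in-$p(t_h)$ no-deviation conditions whose solution is the stated interval; the observation that the double deviation to $s_{e_he_l}$ is subsumed is also correct. The genuine gap is exactly at the point you flag as "the main obstacle": the claim that $D_l=\Delta_{e_le_h}^{e_he_h}Sh_1(t_lt_h)-\Delta_{e_le_l}^{e_he_l}Sh_1(t_lt_l)>0$ and $D_h=\Delta_{e_le_h}^{e_he_h}Sh_1(t_ht_h)-\Delta_{e_le_l}^{e_he_l}Sh_1(t_ht_l)>0$ can be extracted from super-modularity of $M$ by chaining through an intermediate effort profile. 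After the singleton terms cancel, any such chain must at some step compare the increment $M((e_h,e_h),t)-M((e_l,e_h),t)$ with $M((e_h,e_l),t)-M((e_l,e_l),t)$ at a \emph{fixed} type profile $t$, i.e., it must sign the four-point cross-difference $M((e_h,e_h),t)+M((e_l,e_l),t)-M((e_l,e_h),t)-M((e_h,e_l),t)$. Super-modularity as defined in the paper only compares a fixed pair of $<_{ee}$-ordered effort profiles across $<_{tt}$-ordered type profiles; it places no constraint on this within-type cross-difference. Indeed, concavity within type (Definition \ref{def:concave}) forces that cross-difference to be negative, while in Example \ref{ex:paradox} it is positive, so its sign is genuinely free under the paper's hypotheses. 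Contrast this with Corollary \ref{Interval-hh}, where the denominator $\Delta_{e_le_h}^{e_he_h}Sh_1(t_lt_h)-\Delta_{e_le_h}^{e_he_h}Sh_1(t_lt_l)$ holds the effort change fixed and varies only the type, so its positivity \emph{is} a one-line consequence of super-modularity; your corollary's denominators mix two different effort changes, and that is precisely the case super-modularity does not cover.

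The way this positivity can legitimately be obtained is not from $M$ at all but from the characterizing inequalities of Proposition \ref{prop:SNE_ineq_lh}: condition $(i)(a)$, $\Delta_{e_le_l}^{e_he_l}Sh_1(t_lt_l)<\Delta C_{t_l}<\Delta_{e_le_h}^{e_he_h}Sh_1(t_lt_h)$, directly yields $D_l>0$, and $(i)(b)$ yields $D_h>0$; these same inequalities then give your endpoint bounds $0<$ left endpoint and right endpoint $<1$. This is consistent with how the paper proves Corollary \ref{Interval-hh} (it invokes the proposition's strict inequalities to place the endpoint in $(0,1)$). But note that you then need \emph{both} $(i)(a)$ and $(i)(b)$, whereas Proposition \ref{prop:SNE_ineq_lh} only asserts that at least one of them holds, so even this repair requires an explicit additional assumption (or a separate argument for the missing half). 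As written, your derivation of $D_l>0$, $D_h>0$ from super-modularity alone does not go through, and everything downstream (the direction of the inequalities when dividing by $D_l,D_h$, hence the interval itself) depends on it.
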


\begin{proof} A proof of the result can be found in Section $3$ at \url{https://drive.google.com/file/d/1M5P3AfTVUpJvhQc2Pj8FL6GgXnUkz51b/view?usp=sharing}.    
\end{proof}	
\begin{prop}\rm Let $M$  be super-modular and $C$ sub-modular. There exists no $p$ for which $(s_{e_he_l},s_{e_he_l})$ is an SNE.  
\label{prop:no_nash}	
\end{prop}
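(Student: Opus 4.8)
The plan is to fix employee $2$'s strategy at $s_{e_he_l}$ and show that, for every $p\in\mathbb{P}$, employee $1$ has a strictly profitable deviation, so no SNE in $s_{e_he_l}$ can exist. Write $q=p(t_h)$. The key structural observation is that $\Pi_1(s_1,s_{e_he_l})$ is additively separable over the value of $t_1$: the $t_1=t_l$ block depends only on $s_1(t_l)$ and the $t_1=t_h$ block only on $s_1(t_h)$, and since $p(t_l),p(t_h)>0$, the profile $(s_{e_he_l},s_{e_he_l})$ is an SNE only if $s_1(t_l)=e_h$ is optimal in the first block and $s_1(t_h)=e_l$ is optimal in the second. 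Expanding these two comparisons using $s_{e_he_l}(t_l)=e_h$, $s_{e_he_l}(t_h)=e_l$ and the definitions of $\Delta C$ and $\Delta Sh_1$, a short computation reduces them to
\[
(1-q)\,\Delta_{e_le_h}^{e_he_h} Sh_1(t_lt_l)+q\,\Delta_{e_le_l}^{e_he_l} Sh_1(t_lt_h)\ \ge\ \Delta C_{t_l},
\]
\[
(1-q)\,\Delta_{e_le_h}^{e_he_h} Sh_1(t_ht_l)+q\,\Delta_{e_le_l}^{e_he_l} Sh_1(t_ht_h)\ \le\ \Delta C_{t_h}.
\]

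First I would establish the two strict comparisons
\[
\Delta_{e_le_h}^{e_he_h} Sh_1(t_lt_l)<\Delta_{e_le_h}^{e_he_h} Sh_1(t_ht_l),\qquad
\Delta_{e_le_l}^{e_he_l} Sh_1(t_lt_h)<\Delta_{e_le_l}^{e_he_l} Sh_1(t_ht_h),
\]
both of which come from super-modularity of $M$. Inserting $Sh_1((e_1,e_2),(t_1,t_2))=\tfrac12\big(M((e_1,e_2),(t_1,t_2))-M((0,e_2),(0,t_2))+M((e_1,0),(t_1,0))\big)$, the employee-$2$ singleton term $M((0,e_2),(0,t_2))$ cancels in each increment (employee $2$'s effort and type are held fixed within the increment), leaving each increment equal to $\tfrac12$ times the sum of a grand-coalition cross-difference of $M$ in (employee-$1$ effort; employee-$1$ type) and the singleton cross-difference $M((e_h,0),(\cdot,0))-M((e_l,0),(\cdot,0))$. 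Raising employee $1$'s type from $t_l$ to $t_h$ strictly increases both cross-differences: the grand-coalition term by super-modularity applied with $(t_l,t_l)<_{tt}(t_h,t_l)$ and $(e_l,e_h)<_{ee}(e_h,e_h)$ (resp.\ $(t_l,t_h)<_{tt}(t_h,t_h)$ and $(e_l,e_l)<_{ee}(e_h,e_l)$), and the singleton term by super-modularity applied with $(t_l,0)<_{tt}(t_h,0)$ and $(e_l,0)<_{ee}(e_h,0)$ --- the very instance written out in the text right after super-modularity is introduced. Hence for every $q\in(0,1)$ the left-hand side of the first displayed inequality is strictly smaller than that of the second.

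Combining the pieces finishes the argument. If $(s_{e_he_l},s_{e_he_l})$ were an SNE for some $p$, both displayed best-response inequalities would hold, whence
\[
\Delta C_{t_l}\ \le\ (1-q)\,\Delta_{e_le_h}^{e_he_h} Sh_1(t_lt_l)+q\,\Delta_{e_le_l}^{e_he_l} Sh_1(t_lt_h)\ <\ (1-q)\,\Delta_{e_le_h}^{e_he_h} Sh_1(t_ht_l)+q\,\Delta_{e_le_l}^{e_he_l} Sh_1(t_ht_h)\ \le\ \Delta C_{t_h},
\]
i.e.\ $\Delta C_{t_l}<\Delta C_{t_h}$, contradicting sub-modularity of $C$, which gives $\Delta C_{t_h}<\Delta C_{t_l}$. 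So no such $p$ exists. I expect the only mildly delicate point to be the bookkeeping in the Shapley-increment step --- verifying that $M((0,e_2),(0,t_2))$ truly drops out of each $\Delta Sh_1$ and that the residual cross-differences are exactly of the form super-modularity controls; the payoff expansion and the final chain are routine and do not even require solving for $q$, the contradiction following from the pointwise dominance of the $t_h$-increments over the $t_l$-increments together with $\Delta C_{t_h}<\Delta C_{t_l}$.
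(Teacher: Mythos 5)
Your proof is correct: the additive separability of $\Pi_1$ over $s_1(t_l)$ and $s_1(t_h)$, the cancellation of $M((0,e_2),(0,t_2))$ in each Shapley increment, and the applications of super-modularity (to both the grand-coalition and singleton cross-differences) and of sub-modularity of $C$ all check out, yielding the contradiction $\Delta C_{t_l}<\Delta C_{t_h}$. The paper relegates its own proof of Proposition \ref{prop:no_nash} to an online supplement, so a line-by-line comparison is not possible here, but your argument is the same style of direct best-response computation the paper uses elsewhere (cf.\ Lemma \ref{lemma:nec_suff_hh} for Proposition \ref{prop:SNE_ineq_hh}), so I count it as essentially the paper's approach.
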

\begin{proof} A proof of the result can be found in Section $4$ at \url{https://drive.google.com/file/d/1M5P3AfTVUpJvhQc2Pj8FL6GgXnUkz51b/view?usp=sharing}.
\end{proof}

\noindent 
Our characterization results lay down necessary and sufficient conditions for SNEs in terms of the parameters of our model. These conditions can be computed by using the parameters of the model. The three main propositions provide us with nontrivial intervals,  i.e., an interval that is neither a singleton set nor an empty set, for which SNEs in $s_{e_he_h}, s_{e_le_l}$ 
and $s_{e_le_h}$ exist. The three corollaries provide the ranges of these intervals. Our results also tell us when a particular equilibrium does not exist. If a probability distribution on  $\{t_l,t_h\}$ 
represents a distribution of efficiency, and if efficiency is interpreted as talent, then our results provide us with information about what kind of stable or equilibrium behavior outcomes may be expected if a distribution is given. The bounds on the intervals in the three corollaries are given by quantities that are functions of  $M$ and $C$. Since the pair $M, C$ define technology in our paper, our results provide an indirect mechanism to study observed behavior from the perspective of existing technology. As an example, consider $\frac{\Delta C_{t_h}-\Delta_{e_le_l}^{e_he_l} Sh_1(t_ht_l)}{\Delta_{e_le_l}^{e_he_l} Sh_1(t_ht_h)-\Delta_{e_le_l}^{e_he_l} Sh_1(t_ht_l)}$. This ratio can be interpreted as

$\frac{    \text{the surplus in the incremental cost of efforts over the change in the Shapley share of employee 1 due to an increase in her efficiency}}  {\text{the surplus in the change in the Shapely share of employee $1$ due to the change in the efficiency of employee $2$}}$ 

$\equiv \frac{\text{the net internal effects of change in efforts and efficiency in cost} }{\text{the net external effect of efficiency on benefits} }$

\noindent This ratio lies between $0$ and $1$. Thus, this ratio can also be interpreted as a price of being employed in a firm in which cooperation entails external benefits for the employee.       
Super-modularity of $M$ and Sub-modularity  of $C$, i.e., two important features of technology in our paper, play an important role in making the price lie between $0$ and $1$. The lower bound of the interval in Corollary \ref{Interval-ll} is smaller than the lower bound of the interval in Corollary \ref{Interval-lh}, and the lower bound of the interval in Corollary \ref{Interval-hh} is bigger than $0$. Thus, the strategies $s_{e_le_l}$ and $s_{e_he_h}$ are rationalizable.
However, if we assume $M$ to be concave then $s_{e_le_h}$ is also rationalizable. See Section $6$ at \url{https://drive.google.com/file/d/1M5P3AfTVUpJvhQc2Pj8FL6GgXnUkz51b/view?usp=sharing} for an example that shows that if $M$ is not concave then the strategy $s_{e_le_h}$ is not rationalizable. 
The notion of a concave $M$
is defined below.

{\begin{defn}\rm \label{def:concave} Consider a grand machine state.   
		M is said to be {\bf concave within the type profile $t^{'}$}  if for  three effort profiles, $e^{'}$, $e^{''}$, $e^{'''}$  with  $e^{'}<_{ee} e^{''}<_{ee} e^{'''}$, $   M( e^{'''},t^{'})- M( e^{''},t^{'}) < M( e^{''},t^{'})- M(e^{'},t^{'})$.  
	\end{defn} 

\noindent Concavity of $M$ says the increase in the output of $M$ is smaller at higher efforts. The following corollary provides a characterization of rationalizable strategies.

{\begin{cor}\rm Let $M$ be super-modular and $C$ sub-modular, then {$s_{e_le_l}$ and $s_{e_he_h}$} are rationalizable. If $M$ is also concave within type profile, then $s_{e_le_h}$ is also rationalizable. \label{rationalizable}  
\end{cor}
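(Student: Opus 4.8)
The plan is to read off rationalizability of each strategy directly from the interval characterizations in Corollaries~\ref{Interval-hh}, \ref{Interval-ll}, and \ref{Interval-lh}, together with the nonexistence result in Proposition~\ref{prop:no_nash}. Recall that a strategy $s_{e_1e_2}$ is rationalizable precisely when some $p$ makes it the \emph{unique} SNE. Since $s_{e_he_l}$ admits no SNE at any $p$ (Proposition~\ref{prop:no_nash}), it is automatically not rationalizable, and only the other three need be examined. For those, it suffices to exhibit, for each of $s_{e_le_l}$, $s_{e_he_h}$, and (under the extra concavity hypothesis) $s_{e_le_h}$, a value $p(t_h)$ that lies in the corresponding interval but in none of the other two.

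The key step is an ordering-of-endpoints argument. Write $a$ for the lower endpoint of the $s_{e_he_h}$-interval in Corollary~\ref{Interval-hh}, so that interval is $[a,1)$; write $b$ for the upper endpoint of the $s_{e_le_l}$-interval in Corollary~\ref{Interval-ll}, so that interval is $(0,b]$; and let $[c,d]\subseteq(0,1)$ be the $s_{e_le_h}$-interval of Corollary~\ref{Interval-lh}. I would first verify, using super-modularity of $M$ and sub-modularity of $C$, that $a>0$ and that $b<c$ (the statement of Corollary~\ref{rationalizable} asserts exactly these two inequalities: ``the lower bound of the interval in Corollary~\ref{Interval-ll} is smaller than the lower bound of the interval in Corollary~\ref{Interval-lh}, and the lower bound of the interval in Corollary~\ref{Interval-hh} is bigger than $0$''). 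Granting $a>0$: pick any $p(t_h)\in(0,\min\{a,b,c\})$. Such a point lies in $(0,b]$ when chosen below $b$, hence $s_{e_le_l}$ is an SNE there; it lies strictly below $a$ so $s_{e_he_h}$ is not an SNE; it lies strictly below $c$ so $s_{e_le_h}$ is not an SNE; and $s_{e_he_l}$ is never an SNE. Thus $s_{e_le_l}$ is the unique SNE at that $p$, so it is rationalizable. Symmetrically, pick $p(t_h)\in(\max\{a,b,d\},1)$, which is nonempty since $a,b,d<1$; there $s_{e_he_h}$ is an SNE (being above $a$ and below $1$), while $s_{e_le_l}$ fails (above $b$), $s_{e_le_h}$ fails (above $d$), and $s_{e_he_l}$ never holds — so $s_{e_he_h}$ is rationalizable.

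For the conditional clause, assume additionally that $M$ is concave within the type profile (Definition~\ref{def:concave}). The extra leverage concavity buys is the placement of $[c,d]$ relative to $a$ and $b$: I expect concavity to force $d<a$ (and we already have $b<c$), so that the three intervals $(0,b]$, $[c,d]$, $[a,1)$ are pairwise disjoint and occur in that left-to-right order. Then any $p(t_h)\in[c,d]$ makes $s_{e_le_h}$ an SNE while simultaneously exceeding $b$ (so $s_{e_le_l}$ fails) and falling below $a$ (so $s_{e_he_h}$ fails); again $s_{e_he_l}$ is impossible, so $s_{e_le_h}$ is the unique SNE and hence rationalizable. Without concavity this disjointness can break — the $s_{e_le_h}$-interval may overlap one of the others — which is exactly why the online example cited in the excerpt shows $s_{e_le_h}$ need not be rationalizable in general.

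The main obstacle is the endpoint comparison itself: proving $a>0$, $b<c$, and (under concavity) $d<a$ from the structural hypotheses. Each reduces to an inequality among the four quantities $\Delta C_{t_l},\Delta C_{t_h}$ and the various Shapley-increment terms $\Delta_{e_le_h}^{e_he_h}Sh_1(\cdot),\Delta_{e_le_l}^{e_he_l}Sh_1(\cdot)$. Expanding the Shapley shares via the explicit formula $Sh_1((e_1,e_2),(t_1,t_2))=\tfrac12\bigl(M((e_1,e_2),(t_1,t_2))-M((0,e_2),(0,t_2))+M((e_1,0),(t_1,0))\bigr)$, these differences become linear combinations of machine outputs; super-modularity and monotonicity of $M$, sub-modularity of $C$, and (for the last one) concavity of $M$ should then sign them. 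This is a finite but somewhat delicate case check — essentially the same bookkeeping that underlies the three corollaries — and it is the part I would write out carefully; the logical skeleton sketched above is routine once those endpoint inequalities are in hand.
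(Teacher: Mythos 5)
Your proposal is correct and follows essentially the same route as the paper: the paper also derives rationalizability of $s_{e_le_l}$ and $s_{e_he_h}$ from the fact that the endpoints of the intervals in Corollaries \ref{Interval-hh}, \ref{Interval-ll}, \ref{Interval-lh} lie strictly inside $(0,1)$ (leaving ``private'' regions near $0$ and near $1$), and then uses concavity within type profiles to order the endpoints --- showing the upper endpoint of the $s_{e_le_h}$-interval is at most the lower endpoint of the $s_{e_he_h}$-interval, and the upper endpoint of the $s_{e_le_l}$-interval is at most the lower endpoint of the $s_{e_le_h}$-interval --- so that the three intervals are pairwise disjoint and $s_{e_le_h}$ is also rationalizable. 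The endpoint comparisons you flag as the delicate step are exactly the computation the paper carries out in its appendix, by expanding the Shapley increments and applying concavity.
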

\begin{proof} The first part follows from the discussion above. If $M$ is concave, then the intervals in all the three corollaries above are non-empty, and they are pairwise disjoint. For details see the Appendix at the end. 
\end{proof}}

\medskip
\noindent It is possible that if we take the union of the intervals obtained in the corollaries above, then we may not obtain $[0,1]$. However, this should not be surprising because for certain games $\Gamma_{p}$ SNE may not exist since SNEs are pure strategies. If the intervals in corollaries \ref{Interval-hh} and \ref{Interval-lh} intersect, then we may wonder whether it is better for the firm that both employees put high effort according to the SNE in the strategy $s_{e_he_h}$. Example \ref{ex:paradox} demonstrates that it may not be so. First we define expected net welfare of the firm. 
The expected net welfare of the firm from SNE in the strategy $s_{e_le_h}$ is:  

\noindent $EW(s_{e_le_h},p)=2[p(t_l)p(t_l)\{ Sh_{1}((e_l,e_l),(t_l,t_l))-C(e_l,t_l)\}+p(t_l)p(t_h)\{ M((e_l,e_h),(t_l,t_h))-(C(e_l,t_l)+C(e_h,t_h))\}+p(t_h)p(t_h)\{Sh_{1} ((e_h,e_h),(t_h,t_h))-(C(e_h,t_h))\}]$

\noindent The expected welfare from SNE in the strategy $s_{e_he_h}$ is:

\noindent $EW(s_{e_he_h},p)= [p(t_l)p(t_l)\{ Sh_{1}((e_h,e_h),(t_l,t_l))-C(e_h,t_l)\}+p(t_l)p(t_h)\{ M((e_h,e_h),(t_l,t_h))-(C(e_h,t_l)+C(e_h,t_h))\}+p(t_h)p(t_h)\{Sh_{1} ((e_h,e_h),(t_h,t_h))-(C(e_h,t_h))\}]$

\noindent In Example \ref{ex:paradox} we construct a super-modular $M$ and sub-modular $C$, and show that the expected net welfare from $s_{e_le_h}$ is higher than that of $s_{e_he_h}$.

\begin{example}\rm All the tables related to this example are in the Appendix at the end. Table $1$ describes the TAM. For example, $7=M((e_l,0),(t_l,0))$. 
	That is, $7$ is the value of the singleton coalition when 
	only player $1$ is present in the coalition. Likewise $25$ is the value of the grand coalition when both players are of the type $t_h$, and put effort $e_h$. That is, $M((e_h,e_h),(t_h,t_h))=25$. Table $2$ describes a cost function. From Table $3$ we see that for $p(t_{h}) \in [0.578948, 0.794872]$, which is the intersection of the range of $p(t_h)$ in the second and the third row of Table $3$, expected welfare from the SNE in the strategy $s_{e_le_h}$ is higher than the SNE in $s_{e_he_h}$.          
	\label{ex:paradox} 
\end{example}

\noindent We end this section with a remark about extending our model to more than two players.  
  
\begin{remark}\rm If the number of players $n$ is greater than $2$,  then finding the range of probabilities for which a strategy is equilibrium require solving equations is $n-1$ degree. That is, the conceptual framework remains the same as the two employee scenario, but the procedure to find the range of probabilities is computationally more cumbersome when the number of employees is more than $2$.                
\end{remark}	

\noindent The extension of our main model to incorporate the strategic behavior across coalitions is discussed next.   

\subsubsection{\bf Strategic Behavior Across Coalitions} 
\label{sec:coalition_strategic}
In this section we consider the situation where an employee's efforts are different when she works on the machine jointly compared with the situation where she works alone. An employee's effort may be high during her probation period so that she can impress the management, and low when she works jointly with the other employee so that she can take advantage of free-riding.        
Let  $C_{i}=\{\{i\},\{i,j\}\}$ denote the set of coalitions that employer $i$ can be part of. Next we define a strategy of employee $i$ below.

\begin{defn}\rm  A {\bf strategy (or a pure strategy)} of $i$ is a collection of two functions $s_{i}^{\alpha}:\{t_l,t_h\}\rightarrow \{e_l,e_h\}$,  
$\alpha\in C_i$.  Here $s_{i}^{\alpha}$ denotes the strategy when $\alpha$ is the coalition.   
We denote a strategy of $i$ by $(s_i^{\{i\}},s_i^{\{i,j\}})=(s_{i}^{\alpha})_{\alpha\in C_i}$.

\end{defn}
                   
\noindent If we assume efforts not to vary across coalitions, then $s_i^{\{i,j\}}(t_l)=s_i^{\{i\}}(t_l)$ and $s_i^{\{i,j\}}(t_h)=s_i^{\{i\}}(t_h)$, $i=1,2$. We have analyzed this situation in the earlier sections. We explain the payoff from a strategy profile below.
Fix a strategy profile $((s_{1}^{\alpha})_{\alpha\in C_1},(s_{2}^{\alpha})_{\alpha\in C_2})$ and a type profile $(t_1,t_2)$. The worth of admissible coalitions corresponding to the  corresponding game are: 
$M((s_{1}^{\{1,2\}}(t_1), s_{2}^{\{1,2\}}(t_2)), (t_1,t_2))$, $M((s_{1}^{\{1\}}(t_1),0), (t_1,0)), 
M((0,s_{2}^{\{2\}}(t_2)), (0,t_2,)),M((0,0), (0,0))$. The Shapley value of employee $1$ corresponding to 
this game is denoted by, $$Sh_{1}((s_{1}^{\alpha})_{\alpha\in C_1},(s_{2}^{\alpha})_{\alpha\in C_2},t_{1},t_{2}))$$

The expected payoff of employee $1$ for the play of the strategy-profile 
$((s_{1}^{\alpha})_{\alpha\in C_1},(s_{2}^{\alpha})_{\alpha\in C_2})$
is:$\Pi_{1}((s_{1}^{\alpha})_{\alpha\in C_1},(s_{2}^{\alpha})_{\alpha\in C_2})=$ \[ \sum_{ (t_{1},t_{2})\in T^{2}   }  [Sh_{1}((s_{1}^{\alpha})_{\alpha\in C_1},(s_{2}^{\alpha})_{\alpha\in C_2},t_{1},t_{2}))-C(s_{1}^{\{1\}}(t_{1}),t_{1})-C(s_{1}^{\{1,2\}}(t_{1}),t_{1})]p(t_{1})p(t_{2}).\] 

\noindent The expected payoff of employee $2$ is computed analogously. 
The notion of SNE is defined analogously. We note that a deviation by a player 
can occur in many ways. For instance $(s_i^{\{i\}'},s_{i}^{\{i,j\}})$ is a deviation from $(s_i^{\{i\}},s_{i}^{\{i,j\}})$, if $s_i^{\{i\}'}$ is a function that is distinct from $s_i^{\{i\}}$. 
Since the next proposition provides a characterization of symmetric equilibria, we drop the suffix $i$ from the strategies. For $\alpha\in C_i$, the function $s^{\alpha}_{e_1e_2}$ is 
defined as $s^{\alpha}_{e_1e_2}(t_l)=e_1,
s^{\alpha}_{e_1e_2}(t_h)=e_2 $.
The next proposition says that it is possible that in equilibrium a player may put different efforts across coalitions. 

\begin{prop}\rm Let $M$ be super-modular and $C$ be sub-modular. Further let $M$ be concave within type. Then  for any probability distribution over $\{t_l,t_h\}$ exactly one of the following holds. 
\begin{enumerate}
		\item [(i)]  If there are SNEs in the strategy      $(s_{e_{l}e_{l}}^{\{1\}},s_{e_{l}e_{l}}^{\{1,2\}})$  or 
  $(s_{e_{l}e_{h}}^{\{1\}},s_{e_{l}e_{l}}^{\{1,2\}})$, then there are no other SNEs.
		\item [(ii)] If there are SNEs in the strategy    $(s_{e_{h}e_{h}}^{\{1\}},s_{e_{l}e_{l}}^{\{1,2\}})$ or    $(s_{e_{l}e_{h}}^{\{1\}},s_{e_{l}e_{l}}^{\{1,2\}})$,  
		then there are no other SNEs.
		\item  [(ii)] If there are SNEs in the strategy $(s_{e_{l}e_{l}}^{\{1\}},s_{e_{l}e_{h}}^{\{1,2\}})$ or  $(s_{e_{l}e_{h}}^{\{1\}},s_{e_{l}e_{h}}^{\{1,2\}})$, then 
		there are no other SNEs.
		\item [(iv)] If there are SNEs in the strategy $(s_{e_{h}e_{h}}^{\{1\}},s_{e_{l}e_{h}}^{\{1,2\}})$   or   $(s_{e_{l}e_{h}}^{\{1\}},s_{e_{l}e_{h}}^{\{1,2\}})$, then 
		there are no other SNEs.  
		\item [(v)] If there are SNEs in the strategy $(s_{e_{h}e_{h}}^{\{1\}},s_{e_{h}e_{h}}^{\{1,2\}})$ or   $(s_{e_{l}e_{h}}^{\{1\}},s_{e_{h}e_{h}}^{\{1,2\}})$, then  
		there are no other SNEs.    
		\item  [(vi)]If there are SNEs in the strategy $(s_{e_{l}e_{l}}^{\{1\}},s_{e_{h}e_{h}}^{\{1,2\}})$  or  $(s_{e_{l}e_{h}}^{\{1\}},s_{e_{h}e_{h}}^{\{1,2\}})$, then 
		there are no other SNEs.    
\end{enumerate}
\label{prop:strategic_coaltions}
\end{prop}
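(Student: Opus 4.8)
The plan is to reduce Proposition~\ref{prop:strategic_coaltions} to the two-coalition structure of the strategies and then to re-use, coalition-by-coalition, the single-coalition analysis already carried out in Propositions~\ref{prop:SNE_ineq_hh}--\ref{prop:no_nash} together with the rationalizability reasoning behind Corollary~\ref{rationalizable}. The first step is to observe that, because the grand-coalition worth $M((s_{1}^{\{1,2\}}(t_1),s_{2}^{\{1,2\}}(t_2)),(t_1,t_2))$ depends only on $(s_1^{\{1,2\}},s_2^{\{1,2\}})$ and the singleton worth $M((s_i^{\{i\}}(t_i),0),(t_i,0))$ depends only on $s_i^{\{i\}}$, the expected payoff $\Pi_i$ decomposes into a part driven by the probation-period strategy $s_i^{\{i\}}$ (entering only through the singleton worth and the cost term $C(s_i^{\{i\}}(t_i),t_i)$) and a part driven by the team strategy $s_i^{\{i,j\}}$ (entering through the grand-coalition worth, the other player's Shapley share, and the second cost term). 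I would write $\Pi_1$ explicitly as in Section~\ref{sec:coalition_strategic} and separate these two groups of terms, noting that a deviation in $s_i^{\{i\}}$ and a deviation in $s_i^{\{i,j\}}$ can be analyzed independently since the cross-terms vanish (the singleton worth contributes to the Shapley share additively and with the opposite sign for the two players).

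Next I would treat the team component. Fixing player $2$'s team strategy and varying player $1$'s, the comparison of $\Pi_1$ across $s_1^{\{1,2\}}\in\{s_{e_le_l},s_{e_he_h},s_{e_le_h},s_{e_he_l}\}$ is \emph{exactly} the comparison analyzed in the proof of Proposition~\ref{prop:SNE_ineq_hh} and its companions, because the relevant quantities are the same $\Delta_{e_1e_2}^{e_1'e_2'}Sh_1(t_1t_2)$ and $\Delta C_{t_1}$. Hence the set of team strategies sustainable in a symmetric equilibrium is governed by the same interval conditions, and in particular $s_{e_he_l}$ is never a best response for the team component (Proposition~\ref{prop:no_nash}). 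For the probation component, fixing player $2$'s singleton strategy, the marginal effect on $\Pi_1$ of changing $s_1^{\{1\}}$ at a type $t_1$ is $\tfrac12\big(M((e_h,0),(t_1,0))-M((e_l,0),(t_1,0))\big)-\Delta C_{t_1}$ weighted by $p(t_1)$; using monotonicity within types, super-modularity of $M$ and sub-modularity of $C$ one checks that the sign of this expression is the same at $t_l$ and $t_h$ \emph{whenever} it forces $e_h$ at $t_h$ (by super/sub-modularity the $t_h$ incentive to raise effort dominates the $t_l$ one in the singleton game too), which is exactly why the admissible probation strategies in each clause are only $s_{e_le_h}$ alongside a constant strategy, never $s_{e_he_l}$ and never an ``independent'' mixture of the other two. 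Concavity of $M$ within type is what makes the three team-driven intervals (for $s_{e_le_l}$, $s_{e_le_h}$, $s_{e_he_h}$) pairwise disjoint, as in Corollary~\ref{rationalizable}, so that at any fixed $p$ at most one team strategy is part of an SNE; combined with the observation that for each sustainable team strategy the compatible probation strategies are precisely the constant one matching it plus $s_{e_le_h}$, this yields the six mutually exclusive clauses.

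The remaining step is bookkeeping: enumerate the $3$ possible team strategies $\times$ the compatible probation strategies, check that the six listed pairs (two per team value, except $s_{e_le_h}$ which pairs with every team value and so appears in all clauses) exhaust the sustainable strategies, and invoke the disjointness of the $p$-intervals to get ``no other SNEs''. I would present this as a short case analysis, citing Propositions~\ref{prop:SNE_ineq_hh}--\ref{prop:no_nash} and Corollary~\ref{rationalizable} for the interval and disjointness facts rather than re-deriving them. The main obstacle I anticipate is verifying cleanly that the probation-period best response never selects $s_{e_he_l}$ and never selects a constant strategy ``out of sync'' with what super/sub-modularity forces — i.e.\ showing the single-variable argument on the singleton games really does mirror the grand-coalition argument — and handling the boundary cases where an interval endpoint coincides with an indifference point, so that ``exactly one'' of the clauses holds rather than two.
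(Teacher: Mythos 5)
Your decomposition of $\Pi_1$ into a probation part, $\tfrac{1}{2}M((s_1^{\{1\}}(t_1),0),(t_1,0))-C(s_1^{\{1\}}(t_1),t_1)$, and a team part is correct and is the right first move, and your observation that super-modularity of $M$ and sub-modularity of $C$ force the probation best response to lie in $\{s_{e_le_l},s_{e_le_h},s_{e_he_h}\}$ and never be $s_{e_he_l}$ is sound. (The paper itself defers the full argument to an external supplement, so your proposal has to stand on its own.)

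There are two genuine gaps. First, the team-component comparison is \emph{not} governed by the same quantities $\Delta_{e_1e_2}^{e_1'e_2'}Sh_1(t_1t_2)$ as Propositions \ref{prop:SNE_ineq_hh}--\ref{prop:no_nash}. In the main model a deviation in $s_1(t_1)$ moves both the grand-coalition worth and player $1$'s own singleton worth, so $\Delta_{e_le_2}^{e_he_2}Sh_1(t_1t_2)=\tfrac12[M((e_h,e_2),(t_1,t_2))-M((e_l,e_2),(t_1,t_2))]+\tfrac12[M((e_h,0),(t_1,0))-M((e_l,0),(t_1,0))]$; in the coalition-dependent model a deviation in $s_1^{\{1,2\}}(t_1)$ moves only the grand-coalition term, so the relevant increment is $\tfrac12[M((e_h,e_2),(t_1,t_2))-M((e_l,e_2),(t_1,t_2))]$ alone, which is strictly smaller. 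The $p(t_h)$-intervals for the three sustainable team strategies therefore have different endpoints from Corollaries \ref{Interval-hh}--\ref{Interval-lh}, and their pairwise disjointness cannot be obtained by citing Corollary \ref{rationalizable}; you must rerun the concavity argument with the grand-coalition increments (it does go through, because concavity within type orders exactly these increments, but that needs to be stated and checked rather than imported). Second, your bookkeeping does not match the clause structure of the proposition: you assert that the compatible probation strategies are ``the constant one matching'' the team strategy plus $s_{e_le_h}$, but the probation best response is determined type-by-type by the sign of $\tfrac12[M((e_h,0),(t,0))-M((e_l,0),(t,0))]-\Delta C_t$ and is completely independent of the team strategy. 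That independence is precisely why each team value appears in two clauses, one paired with probation $s_{e_le_l}$ (clauses (i), (iii), (vi)) and one with $s_{e_he_h}$ (clauses (ii), (iv), (v)), with $s_{e_le_h}$ as the alternative everywhere; your ``matching constant'' rule would wrongly exclude, for instance, the pair $(s_{e_he_h}^{\{1\}},s_{e_le_l}^{\{1,2\}})$ appearing in clause (ii). A correct finish identifies the (generically unique) probation best response and the (unique, by the re-derived disjointness) equilibrium team strategy separately and takes their product, handling the boundary ties you rightly flag at the end.
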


\noindent From (ii) in Proposition \ref{prop:strategic_coaltions} we can see that an employee puts in high effort when she works alone in the machine, and she puts in low effort when she works jointly with the other employee. 
Only one of the cases in Proposition \ref{prop:strategic_coaltions} can arise. In fact, under some mild conditions this is shown Section $5$ in \url{https://drive.google.com/file/d/1M5P3AfTVUpJvhQc2Pj8FL6GgXnUkz51b/view?usp=sharing}. For each of the cases, there is an equilibrium where the effort of an employee differs across coalitions. However, in each equilibria in each case the efforts of an employee in the grand coalitions are the same. If the output of the grand coalition is what matters for the firm management, then 
the differences in the efforts across coalitions only affect the marginal contributions, and thus the wages of the employees.

\section{\bf Concluding Remarks} 
\label{sec:con}

We consider a firm whose production technology is intelligent. The management of the firm hires two
employees. The management of the firm does not know how efficient an employee is when hiring her.
This entails a game of incomplete information, where corresponding to each type profile there is a
stage game in which the available actions for the employees are their effort levels. The payoff function
of the employees incorporates a cooperative, and a non-cooperative aspect. Our characterization
results explain the dependence of symmetric ex-ante equilibria on the distribution of type profiles.
Hence, our result can be considered to be a formal exposition of the relationship between the distribution of efficiency levels and the distribution of outputs. 
Since in our model both efficiency levels and effort levels are not observable before the output is observed, our model incorporates both hidden information and moral hazard.
Nonetheless, our solution to this incomplete information problem is simple. The simplicity arises because the production technology in our model is intelligent.
This approach is different from the models that depend on the asymmetric information approach, i.e., employees know their efficiency levels but the management does not know. The assumption of asymmetric information leads to the consideration of incentive compatibility as a solution concept for the  incomplete information problem. Our approach, which is based on the notion of an intelligent machine provides us with a framework that enables us to not assume that employees know their efficiency levels. 
 We note that asymmetric information need not be a strong assumption in every context. For example, in a model where the preference of an individual is not common knowledge,  asymmetric information need not be a strong assumption to have.
  But how efficient an individual truly is something that need not be known to the individual. In such a context, our notion of an intelligent machine has a role to play since incentive-compatible mechanisms in the context of hiring require employees to report who they are truthfully. 
  In a nutshell, our paper is an attempt to analyze the direct implications of artificial intelligence for equilibria outcomes, and in particular for the decision problem that firms face regarding whom to hire.

\newpage

\section*{\bf Appendix}

\noindent{\bf Proof of Proposition \ref{prop:SNE_ineq_hh}}

\noindent An outline of the steps in the proof are follows. First we show that $(s_{e_he_h},s_{e_he_h})$ is an SNE of $\Gamma_{p}$ for some $p \in \mathbb{P}$ if and only if $$\Delta C_{t_l} \leq p(t_l) \Delta_{e_le_h}^{e_he_h} Sh_1(t_lt_l)+p(t_h)  \Delta_{e_le_h}^{e_he_h} Sh_1(t_lt_h).$$ Then we show that there is a probability distribution $p\in \mathbb{P}$ for which $$\Delta C_{t_l} \leq p(t_l) \Delta_{e_le_h}^{e_he_h} Sh_1(t_lt_l)+p(t_h)  \Delta_{e_le_h}^{e_he_h} Sh_1(t_lt_h)$$  holds if $$\Delta_{e_le_h}^{e_he_h} Sh_1(t_lt_l) < \Delta C_{t_l} <  \Delta_{e_le_h}^{e_he_h} Sh_1(t_lt_h)$$ holds. This establishes existence of an SNE for some $p\in\mathbb{P}$. Then we also show that if $\Delta_{e_le_h}^{e_he_h} Sh_1(t_lt_l) < \Delta C_{t_l} <  \Delta_{e_le_h}^{e_he_h} Sh_1(t_lt_h)$ holds then there is $p\in \mathbb{P}$ for which $\Delta C_{t_l} \leq p(t_l) \Delta_{e_le_h}^{e_he_h} Sh_1(t_lt_l)+p(t_h)  \Delta_{e_le_h}^{e_he_h} Sh_1(t_lt_h)$ does not hold; we use the Farkas' lemma to show this. This establishes that sufficiency of $\Delta_{e_le_h}^{e_he_h} Sh_1(t_lt_l) < \Delta C_{t_l} <  \Delta_{e_le_h}^{e_he_h} Sh_1(t_lt_h)$ in Proposition \ref{prop:SNE_ineq_hh}. Next we assume that $(i)$ and $(ii)$ hold and use Farkas' lemma to show that  $\Delta_{e_le_h}^{e_he_h} Sh_1(t_lt_l) < \Delta C_{t_l} <  \Delta_{e_le_h}^{e_he_h} Sh_1(t_lt_h)$ holds.


\begin{lemma}\rm Let $M$ be super-modular $C$ be sub-modular and $p \in \mathbb{P}$, $(s_{e_he_h},s_{e_he_h})$ is an SNE of $\Gamma_{p}$ if and only if $ \Delta C_{t_l} \leq p(t_l) \Delta_{e_le_h}^{e_he_h} Sh_1(t_lt_l)+p(t_h)  \Delta_{e_le_h}^{e_he_h} Sh_1(t_lt_h)$.
	\label{lemma:nec_suff_hh}
\end{lemma}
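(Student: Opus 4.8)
The plan is to reduce the SNE condition to a best-response statement for a single player and then exploit the i.i.d.\ structure of the type draws. Since the profile $(s_{e_he_h},s_{e_he_h})$ is symmetric and $M$ is symmetric, it is an SNE if and only if player $1$ has no profitable deviation when player $2$ plays $s_{e_he_h}$; the condition for player $2$ follows by relabelling. So I would fix player $2$ at $s_{e_he_h}$ and study $\Pi_1(s_1,s_{e_he_h})$ over $s_1 \in S_1$.

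First I would observe that when player $2$ always exerts $e_h$, the term of $\Pi_1(s_1,s_{e_he_h})$ indexed by the type pair $(t_1,t_2)$ depends on $s_1$ only through $s_1(t_1)$. Regrouping the double sum as
\[
\Pi_1(s_1,s_{e_he_h}) = \sum_{t_1\in\{t_l,t_h\}} p(t_1)\sum_{t_2\in\{t_l,t_h\}} p(t_2)\big[Sh_1((s_1(t_1),e_h),(t_1,t_2)) - C(s_1(t_1),t_1)\big]
\]
shows that maximizing over $S_1$ decomposes into two independent binary choices, for $s_1(t_l)$ and for $s_1(t_h)$. Hence the profile is an SNE exactly when $e_h$ is a (weakly) optimal choice at each type. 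Using $Sh_1((e_1,e_h),(t_1,t_2)) = \tfrac12[M((e_1,e_h),(t_1,t_2)) - M((0,e_h),(0,t_2)) + M((e_1,0),(t_1,0))]$, the term $M((0,e_h),(0,t_2))$ cancels in the difference $Sh_1((e_h,e_h),(t_1,t_2)) - Sh_1((e_l,e_h),(t_1,t_2)) = \Delta_{e_le_h}^{e_he_h} Sh_1(t_1t_2)$, and optimality of $e_h$ at $t_l$ and at $t_h$ translates, respectively, into
\[
\Delta C_{t_l} \leq p(t_l)\,\Delta_{e_le_h}^{e_he_h} Sh_1(t_lt_l) + p(t_h)\,\Delta_{e_le_h}^{e_he_h} Sh_1(t_lt_h),
\]
\[
\Delta C_{t_h} \leq p(t_l)\,\Delta_{e_le_h}^{e_he_h} Sh_1(t_ht_l) + p(t_h)\,\Delta_{e_le_h}^{e_he_h} Sh_1(t_ht_h).
\]
The first of these is exactly the inequality in the statement, so the lemma reduces to showing that, under the maintained assumptions, the second is implied by the first.

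This implication is the crux. On the left-hand sides, sub-modularity of $C$ gives $\Delta C_{t_h} \leq \Delta C_{t_l}$. On the right-hand sides, I would expand $\Delta_{e_le_h}^{e_he_h} Sh_1(t_1t_2)$ into the two increments $\tfrac12[M((e_h,e_h),(t_1,t_2)) - M((e_l,e_h),(t_1,t_2))]$ and $\tfrac12[M((e_h,0),(t_1,0)) - M((e_l,0),(t_1,0))]$, and apply super-modularity of $M$ twice: once to the comparable pairs $(t_l,t_2) <_{tt} (t_h,t_2)$, $(e_l,e_h) <_{ee} (e_h,e_h)$, and once to $(t_l,0) <_{tt} (t_h,0)$, $(e_l,0) <_{ee} (e_h,0)$. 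Each application shows the corresponding increment is larger at $t_h$ than at $t_l$, so $\Delta_{e_le_h}^{e_he_h} Sh_1(t_ht_2) \geq \Delta_{e_le_h}^{e_he_h} Sh_1(t_lt_2)$ for $t_2 \in \{t_l,t_h\}$. Taking the $p(t_l),p(t_h)$-weighted sum, the right-hand side of the second inequality dominates that of the first; together with $\Delta C_{t_h} \leq \Delta C_{t_l}$, the first inequality yields the second.

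Assembling the pieces gives both directions: if the stated inequality holds, both best-response inequalities hold, so $(s_{e_he_h},s_{e_he_h})$ is an SNE; if it is an SNE, the first best-response inequality --- the stated one --- holds in particular. The step I expect to demand the most care is the reduction in the previous paragraph: confirming that the type and effort profiles involved are genuinely comparable in the orders $<_{tt}$ and $<_{ee}$ so that super-modularity applies (the paper's own worked example with $(t',0) <_{tt} (t'',0)$ covers the singleton case), and checking that the $M((0,e_h),(0,t_2))$ contributions really do drop out of the Shapley-share differences so that only those two $M$-increments survive. The remainder is bookkeeping with the separable form of $\Pi_1$.
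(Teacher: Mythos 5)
Your proposal is correct and follows essentially the same route as the paper: both reduce the SNE condition to the two type-wise best-response inequalities (the paper enumerates the three deviations $s_{e_le_h}$, $s_{e_he_l}$, $s_{e_le_l}$, which amounts to the same decomposition), identify the $t_l$-condition as the stated inequality, and then derive the $t_h$-condition from it using sub-modularity of $C$ together with super-modularity of $M$ applied to the Shapley-share increments. The cancellation of the $M((0,e_h),(0,t_2))$ terms and the applicability of super-modularity to the singleton coalitions both check out exactly as you anticipated.
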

\noindent{\bf Proof of Lemma \ref{lemma:nec_suff_hh}} Let $(s_{e_he_h},s_{e_he_h})$ be an SNE for the game $\Gamma_{p}$, then 
\begin{equation}\label{eqHH1}
\Pi_1((s_{e_he_h},s_{e_he_h})) \geq \Pi_1(  s_{e_le_h},s_{e_he_h})
\end{equation}
$\Leftrightarrow p(t_l)p(t_l) \{Sh_1((e_h,e_h),(t_l,t_l))-C(e_h,t_l)\}+ p(t_l)p(t_h) \{Sh_1((e_h,e_h),(t_l,t_h))-C(e_h,t_l)\}$
\[\geq   p(t_l)p(t_l) \{Sh_1((e_l,e_h),(t_l,t_l))-C(e_l,t_l)\}+ p(t_l)p(t_h) \{Sh_1((e_l,e_h),(t_l,t_h))-C(e_l,t_l)\}\]
$p(t_l) [p(t_l) \{Sh_1((e_h,e_h),(t_l,t_l))-C(e_h,t_l)-Sh_1((e_l,e_h),(t_l,t_l)) +C(e_l,t_l)\}$
\begin{equation}
+p(t_h) \{Sh_1((e_h,e_h),(t_l,t_h))-Sh_1((e_l,e_h),(t_l,t_h))-C(e_h,t_l)+C(e_l,t_l)\}]\geq 
\label{eqhh2}
\end{equation}
\[\Leftrightarrow   p(t_l) \Delta_{e_le_h}^{e_he_h} Sh_1(t_lt_l)+p(t_h) \Delta_{e_le_h}^{e_he_h} Sh_1(t_lt_h)  \geq \Delta C_{t_l}.\] 

\noindent Therefore player $1$ does not have an incentive to deviate to $s_{e_{l}e_{h}}$ if and only if $\Delta C_{t_l} \leq p(t_l)\Delta_{e_le_h}^{e_he_h} Sh_1(t_lt_l)+ p(t_h) \Delta_{e_le_h}^{e_he_h} Sh_1(t_lt_h)$, given this we show next that other deviations are also not profitable for player $1$. Since $C$ is sub-modular,  $\Delta C_{t_l}=C(e_h,t_l)-C(e_l,t_l)> \Delta C_{t_h}= C(e_h,t_h)-C(e_l,t_h)$.  Since $M$ is super modular, $\Delta_{e_le_h}^{e_he_h} Sh_1(t_ht_l) > \Delta_{e_le_h}^{e_he_h} Sh_1(t_lt_l) $ and $\Delta_{e_le_h}^{e_he_h} Sh_1(t_ht_h) > \Delta_{e_le_h}^{e_he_h} Sh_1(t_lt_h)$. Now $\Delta C_{t_h} < p(t_l) \Delta_{e_le_h}^{e_he_h} Sh_1(t_lt_l)+  p(t_h) \Delta_{e_le_h}^{e_he_h} Sh_1(t_lt_h)$. 
\[\Rightarrow p(t_l)\{\Delta C_{t_h} - \Delta_{e_le_h}^{e_he_h} Sh_1(t_ht_l)\}+  p(t_h)\{\Delta C_{t_h} - \Delta_{e_le_h}^{e_he_h} Sh_1(t_ht_h)\} < 0\]
\begin{equation*}	
\Leftrightarrow p(t_h) [p(t_l) \{Sh_1((e_l,e_h),(t_h,t_l))-C(e_l,t_h)-Sh_1((e_h,e_h),(t_h,t_l)) +C(e_h,t_h)\}\end{equation*}
\begin{equation}
+ p(t_h) \{Sh_1((e_l,e_h),(t_h,t_h))-Sh_1((e_h,e_h),(t_h,t_h))-C(e_l,t_h)+C(e_h,t_h)\}] < 0
\label{eqhh3}
\end{equation}	
\[\Leftrightarrow \Pi_1(s_{e_he_l},s_{e_he_h}) < \Pi_1(s_{e_he_h},s_{e_he_h})\]

\noindent Therefore player $1$ does not have a profitable deviation opportunity to $s_{e_{l}e_{h}}$ if and only if $\Delta C_{t_l} \leq p(t_l)\Delta_{e_le_h}^{e_he_h} Sh_1(t_lt_l)+ p(t_h) \Delta_{e_le_h}^{e_he_h} Sh_1(t_lt_h)$ implies player $1$ does not have a profitable  deviation opportunity to $s_{e_{h}e_{l}}$. From $\Delta C_{t_l} \leq p(t_l) \Delta_{e_le_h}^{e_he_h} Sh_1(t_lt_l)+p(t_h) \Delta_{e_le_h}^{e_he_h} Sh_1(t_lt_h)$,  sub-modular  $C$ and  Super-modular $M$ it follows that Inequality (\ref{eqhh2}) + inequality (\ref{eqhh3})>0 and hence player $1$ does not have any incentive to unilaterally deviate to the strategy $s_{e_{l}e_{l}}$. \\
\noindent{\bf End of Proof of Lemma \ref{lemma:nec_suff_hh}} 
\medskip 

\begin{lemma}\rm Let $M$ be super-modular and $C$ be sub-modular. If $\Delta_{e_le_h}^{e_he_h} Sh_1(t_lt_l) < \Delta C_{t_l} <  \Delta_{e_le_h}^{e_he_h} Sh_1(t_lt_h)$ then there is $p \in \mathbb{P}$ such that $ \Delta C_{t_l} \leq p(t_l) \Delta_{e_le_h}^{e_he_h} Sh_1(t_lt_l)+p(t_h)  \Delta_{e_le_h}^{e_he_h} Sh_1(t_lt_h)$ .
	\label{lemma:Existence-inequality_hh}
\end{lemma}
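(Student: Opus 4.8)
The plan is to reduce the claim to an elementary intermediate-value argument on a convex combination of the two numbers $\Delta_{e_le_h}^{e_he_h} Sh_1(t_lt_l)$ and $\Delta_{e_le_h}^{e_he_h} Sh_1(t_lt_h)$. Write $a = \Delta_{e_le_h}^{e_he_h} Sh_1(t_lt_l)$ and $b = \Delta_{e_le_h}^{e_he_h} Sh_1(t_lt_h)$, so that the hypothesis reads $a < \Delta C_{t_l} < b$; in particular $a < b$ (consistent with the ordering that super-modularity of $M$ produces, as already used in the proof of Lemma \ref{lemma:nec_suff_hh}). For $\lambda \in [0,1]$ set $f(\lambda) = \lambda a + (1-\lambda) b$, which is exactly the right-hand side of the target inequality once we put $p(t_l) = \lambda$ and $p(t_h) = 1-\lambda$.

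First I would observe that $f$ is affine, hence continuous and monotone, with $f(0) = b > \Delta C_{t_l}$ and $f(1) = a < \Delta C_{t_l}$. Therefore the equation $f(\lambda) = \Delta C_{t_l}$ has the unique solution $\lambda^{\ast} = \frac{b - \Delta C_{t_l}}{b - a}$, and since $0 < b - \Delta C_{t_l} < b - a$ we get $\lambda^{\ast} \in (0,1)$. Because $a < b$, the map $f$ is (weakly) decreasing in $\lambda$, so $f(\lambda) \ge \Delta C_{t_l}$ for every $\lambda \in [0,\lambda^{\ast}]$.

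Next I would choose any $\lambda \in (0,\lambda^{\ast}]$ — for concreteness $\lambda = \lambda^{\ast}$ — and define $p$ by $p(t_l) = \lambda$ and $p(t_h) = 1-\lambda$. Since $\lambda^{\ast} \in (0,1)$, both $p(t_l) > 0$ and $p(t_h) > 0$ and $p(t_l)+p(t_h)=1$, so $p \in \mathbb{P}$; and by the previous step $\Delta C_{t_l} \le f(\lambda) = p(t_l)\,a + p(t_h)\,b$, which is precisely the asserted inequality.

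The only point that needs a little care — and the closest thing to an obstacle — is ensuring that the chosen weights land in the open set $\mathbb{P}$, i.e. that \emph{both} coordinates are strictly positive rather than merely nonnegative; this is exactly what the strict inequalities $a < \Delta C_{t_l} < b$ in the hypothesis buy us, since they force $\lambda^{\ast}$ strictly inside $(0,1)$. Everything else is routine, and the same computation, pushed further to describe \emph{all} admissible weights, is what underlies the interval in Corollary \ref{Interval-hh}.
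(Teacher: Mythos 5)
Your proof is correct, but it takes a genuinely different route from the paper. You simply exhibit the required probability distribution: writing $a=\Delta_{e_le_h}^{e_he_h} Sh_1(t_lt_l)$ and $b=\Delta_{e_le_h}^{e_he_h} Sh_1(t_lt_h)$, the hypothesis $a<\Delta C_{t_l}<b$ forces $\lambda^{\ast}=\frac{b-\Delta C_{t_l}}{b-a}\in(0,1)$, and $p(t_l)=\lambda^{\ast}$, $p(t_h)=1-\lambda^{\ast}$ is an element of $\mathbb{P}$ achieving the inequality with equality (and any $\lambda\in(0,\lambda^{\ast}]$ works since the convex combination is strictly decreasing in the weight on $a$). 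The paper instead encodes the same feasibility question as a linear system $Ax\leq b$, $Bx<c$ and shows that the associated Farkas--Motzkin dual has no solution, concluding existence by the transposition theorem. Your argument is more elementary and more transparent for this lemma taken in isolation, and it even yields the exact threshold $\lambda^{\ast}$ that reappears as the endpoint of the interval in Corollary \ref{Interval-hh}. What the paper's heavier machinery buys is uniformity: the same Farkas setup is reused to prove that the inequality \emph{fails} for some other $p$ (part $(ii)$ of Proposition \ref{prop:SNE_ineq_hh}) and, in the converse direction, to extract the strict inequalities $a<\Delta C_{t_l}<b$ from the non-existence of dual solutions. One cosmetic point: your map $f$ is strictly, not merely weakly, decreasing since $a<b$; this does not affect the argument.
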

\medskip
\noindent{\bf Proof of Lemma \ref{lemma:Existence-inequality_hh}} It is enough to show that the following system  has a solution.   
\begin{eqnarray*}
	&& p(t_l)\{\Delta C_{t_l}- \Delta_{e_le_h}^{e_he_h} Sh_1(t_lt_l)\}+  p(t_h)\{\Delta C_{t_l}-\Delta_{e_le_h}^{e_he_h} Sh_1(t_lt_h)\} \leq 0\\
	&& p(t_l)> 0,~ p(t_h)> 0,~ p(t_l) + p(t_h) = 1
\end{eqnarray*} 
\noindent This system of inequalities is rewritten as, call the rewritten system $(Ph)$: 
\begin{equation*}
p(t_l)\{\Delta C_{t_l}- \Delta_{e_le_h}^{e_he_h} Sh_1(t_lt_l)\}+  p(t_h)\{\Delta C_{t_l}-\Delta_{e_le_h}^{e_he_h} Sh_1(t_lt_h)\} \leq  0
\end{equation*}
\begin{equation}
\tag{Ph}
p(t_l) +p(t_h) \leq 1
\end{equation}
\begin{equation*}
-p(t_l) -p(t_h) \leq -1
\end{equation*}
\begin{equation*}
p(t_l).0-p(t_h)< 0
\end{equation*}
\begin{equation*}
-p(t_l) + p(t_h).0  < 0.
\end{equation*}

\noindent This system then can be seen succinctly in the form $Ax \leq b$, $Bx< c$,  with $x \equiv (p(t_l),p(t_h))$ with;

$\tiny{A= \begin{bmatrix}
	\Delta C_{t_l}- \Delta_{e_le_h}^{e_he_h} Sh_1(t_lt_l) &  \Delta C_{t_l}-\Delta_{e_le_h}^{e_he_h} Sh_1(t_lt_h)\\
	1 & 1 \\
	-1 & -1  
	\end{bmatrix},
	b= \begin{bmatrix}
	0\\
	1\\
	-1
	\end{bmatrix}, B= \begin{bmatrix}
	0 & -1\\
	-1 & 0\\
	\end{bmatrix} , c= \begin{bmatrix}
	0\\
	0\\
	\end{bmatrix}}$\\

\noindent In order to establish that this system has solution we show that the appropriate Farkas' dual does not have a solution. We recall the  following version of Farkas' lemma.  

\begin{theorem}\rm[{\bf Farkas' Lemma}~ \citep{Motzkin}] Exactly one of the following statements is true.
	\begin{enumerate}
		\item [$(1)$] There exists  $x$ satisfying $Ax\leq b$ and $Bx <c$.
		\item [$(2)$] There exist $y,z$ such that,
		$$y \geq 0,~~ z\geq 0,~~A^{T}y+B^{T}z=0$$ and $$ b^{T}y+c^{T}z <0~\text{or}~b^{T}y+c^{T}z =0,~~ z \neq 0$$
	\end{enumerate}
	\label{thm:Farkas}
\end{theorem}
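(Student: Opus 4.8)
The statement is a transposition theorem, so there are two obligations: that $(1)$ and $(2)$ cannot both hold, and that at least one of them holds. The plan is to dispatch the mutual-exclusivity half by a one-line inner-product computation, and to obtain the genuine content---that at least one alternative holds---by reducing the mixed strict/non-strict system to an ordinary linear program and invoking strong LP duality. The foundational engine I take as given is the classical \emph{non-strict} Farkas lemma (the special case of the present theorem in which the strict block is absent), which itself rests on the separating-hyperplane theorem; the real work is upgrading it to accommodate the strict inequalities $Bx<c$.

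\textbf{At most one holds.} Suppose both were true: pick $x$ with $Ax\le b$, $Bx<c$, and $(y,z)$ as in $(2)$. Since $A^{T}y+B^{T}z=0$, I would compute $0=(A^{T}y+B^{T}z)^{T}x=y^{T}(Ax)+z^{T}(Bx)$. Using $y\ge 0$ with $Ax\le b$ gives $y^{T}(Ax)\le b^{T}y$, and $z\ge 0$ with $Bx<c$ gives $z^{T}(Bx)\le c^{T}z$, the latter being \emph{strict} whenever $z\neq 0$. Hence $0\le b^{T}y+c^{T}z$, strictly so if $z\neq 0$. Both branches of $(2)$ then fail: the branch $b^{T}y+c^{T}z<0$ contradicts $0\le b^{T}y+c^{T}z$, and the branch $b^{T}y+c^{T}z=0$ with $z\neq 0$ contradicts the strict bound $0<b^{T}y+c^{T}z$.

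\textbf{At least one holds.} Assume $(1)$ fails, so $Ax\le b$, $Bx<c$ is infeasible, and I must construct $(y,z)$ satisfying $(2)$. I would first handle the degenerate case in which even the fully non-strict system $Ax\le b$, $Bx\le c$ is infeasible: the classical Farkas lemma immediately yields $y,z\ge 0$ with $A^{T}y+B^{T}z=0$ and $b^{T}y+c^{T}z<0$, the first branch of $(2)$. In the remaining case $Ax\le b$, $Bx\le c$ is feasible, and I would introduce a scalar slack $s$ and study the linear program $\max s$ subject to $Ax\le b$, $Bx+s\mathbf{1}\le c$, $0\le s\le 1$, where $\mathbf{1}$ is the all-ones vector. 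The pivotal observation is that $Ax\le b$, $Bx<c$ is feasible if and only if this program admits a solution with $s>0$; since $(1)$ fails while $s=0$ is feasible, the optimal value is exactly $0$. The program is feasible and bounded, so strong duality applies, and any dual optimal solution $(y,z,w,v)$, with $w,v\ge 0$ the multipliers of $s\le 1$ and $s\ge 0$, satisfies $A^{T}y+B^{T}z=0$, $\mathbf{1}^{T}z+w-v=1$, and $b^{T}y+c^{T}z+w=0$.

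I would then read $(2)$ off this certificate. Since $w\ge 0$, the last equation gives $b^{T}y+c^{T}z=-w\le 0$. If $b^{T}y+c^{T}z<0$ we land in the first branch of $(2)$. Otherwise $b^{T}y+c^{T}z=0$, which forces $w=0$, and the equation $\mathbf{1}^{T}z+w-v=1$ becomes $\mathbf{1}^{T}z=1+v\ge 1>0$, so $z\neq 0$, placing us in the second branch. I expect the main obstacle to be exactly this strict/non-strict interface: a naive separation argument produces only the non-strict alternative and cannot by itself guarantee the crucial $z\neq 0$ in the boundary case $b^{T}y+c^{T}z=0$. The auxiliary variable $s$ and the dual equation $\mathbf{1}^{T}z+w-v=1$ are precisely the device that certifies $z\neq 0$, thereby encoding the strictness of $Bx<c$.
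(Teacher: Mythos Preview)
Your argument is correct. The mutual-exclusivity half is the standard inner-product contradiction, and your existence half via the auxiliary slack variable $s$ and strong LP duality is a clean and valid route to Motzkin's transposition theorem; in particular, the dual stationarity condition $\mathbf{1}^{T}z+w-v=1$ is exactly what pins down $z\neq 0$ in the boundary case, which is the only genuinely delicate point.

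That said, you should be aware that the paper does not prove this theorem at all. Theorem~\ref{thm:Farkas} is stated with a citation to \citep{Motzkin} and is used purely as an off-the-shelf tool inside the proof of Proposition~\ref{prop:SNE_ineq_hh} (and its analogues): the authors write down a primal system such as $(Ph)$ or $(Qh)$, form its Farkas dual, and argue case-by-case that the dual is infeasible, thereby concluding that the primal has a solution. So there is nothing in the paper to compare your proof against; what you have written is a correct, self-contained derivation of a classical result that the paper simply quotes.
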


\noindent The corresponding Farkas' dual, i.e. the set of inequalities $(2)$ in Theorem \ref{thm:Farkas}, for our system of inequalities is the following:    

$y=(y_1,y_2,y_3), y_{i}\geq 0, i=1,2,3,~z = (z_1,z_2), z_{j}\geq 0, j=1,2$;
\[A^{T}y+B^{T}z=0~\text{and}~ b^{T}y+c^{T}z <0~\text{or}~b^{T}y+c^{T}z =0~;~z\neq 0.\]

\noindent We show this dual has no solution; and hence a proof of Lemma \ref{lemma:Existence-inequality_hh} follows by Theorem \ref{thm:Farkas}. We prove this claim by the way of contradiction.

\noindent We note $ b^{T}y+c^{T}z <0 \Leftrightarrow y_2 < y_3$.
\begin{eqnarray*}
	& & A^{T}y+B^{T}z=0 \Leftrightarrow  \begin{bmatrix}
		\Delta C_{t_l}- \Delta_{e_le_h}^{e_he_h} Sh_1(t_lt_l)  & 1 & -1 \\
		\Delta C_{t_l}-\Delta_{e_le_h}^{e_he_h} Sh_1(t_lt_h)&  1 & -1 
	\end{bmatrix}.\begin{bmatrix}
		y_1\\
		y_2\\
		y_3
	\end{bmatrix}+\begin{bmatrix}
		0 & -1 \\
		-1 & 0
	\end{bmatrix}.\begin{bmatrix}
		z_1\\
		z_2
	\end{bmatrix}=0\\
	& \Leftrightarrow & \{\Delta C_{t_l}- \Delta_{e_le_h}^{e_he_h} Sh_1(t_lt_l)  \}y_1 +  y_2 -y_3 -z_2=0\\
	& & \{ \Delta C_{t_l}-\Delta_{e_le_h}^{e_he_h} Sh_1(t_lt_h) \} y_1  +y_2-y_3 -z_1=0\\
	& \Leftrightarrow &  \{\Delta C_{t_l}-\Delta_{e_le_h}^{e_he_h} Sh_1(t_lt_h) \}y_1   -z_1= y_3 - y_2> 0 ~\text{ contradiction as } ~ \Delta C_{t_l} <  \Delta_{e_le_h}^{e_he_h} Sh_1(t_lt_h). 
\end{eqnarray*}
\noindent Now consider the case,   
\begin{eqnarray*}
	&& A^{T}y+B^{T}z=0~;~b^{T}y+c^{T}z =0\Leftrightarrow y_2=y_3,~ z \neq 0 \Leftrightarrow z_1 >  0 ~\text{or}~ z_2 > 0.\\
	& & \text{Then,}~~ \{\Delta C_{t_l}- \Delta_{e_le_h}^{e_he_h} Sh_1(t_lt_l)   \}y_1  + y_2 -y_3 -z_2=0\\
	& & \{ \Delta C_{t_l}-\Delta_{e_le_h}^{e_he_h} Sh_1(t_lt_h) \} y_1  +y_2-y_3 -z_1=0\\
	& \Leftrightarrow & \{\Delta C_{t_l}- \Delta_{e_le_h}^{e_he_h} Sh_1(t_lt_l)  \}y_1   -z_2=0\\
	& & \{\Delta C_{t_l}-\Delta_{e_le_h}^{e_he_h} Sh_1(t_lt_h)  \} y_1  -z_1=0\\
	& \Leftrightarrow &  \{\Delta C_{t_l}-\Delta_{e_le_h}^{e_he_h} Sh_1(t_lt_h)   \}y_1 -z_1= 0 ~\text{ contradiction if } ~ y_{1}\neq 0~\text{as}~ \Delta C_{t_l} < \Delta_{e_le_h}^{e_he_h} Sh_1(t_lt_h) ;\\ 
	&&\text{if}~y_1=0~\text{then}~z_1=z_2=0.  
\end{eqnarray*}
\noindent{\bf End of Proof of Lemma \ref{lemma:Existence-inequality_hh}}

\medskip 

\noindent{\bf Now we go back to the proof of Proposition \ref{prop:SNE_ineq_hh}.}

\noindent Let $\Delta_{e_le_h}^{e_he_h} Sh_1(t_lt_l) < \Delta C_{t_l} <  \Delta_{e_le_h}^{e_he_h} Sh_1(t_lt_h)$, and we show $(i)$ and $(ii)$ to hold.

\medskip
\noindent We first show $(i)$ in Proposition \ref{prop:SNE_ineq_hh}. Choose the the probability distribution for which Lemma \ref{lemma:Existence-inequality_hh} holds, and then by Lemma \ref{lemma:nec_suff_hh}, $(i)$ follows.

\medskip

\noindent Now we show $(ii)$ in Proposition \ref{prop:SNE_ineq_hh} i.e., we show if 
$\Delta_{e_le_h}^{e_he_h}Sh_1(t_lt_l)<\Delta C_{t_l}<\Delta_{e_le_h}^{e_he_h}Sh_1(t_lt_h)$ then $(ii)$ holds. In particular by Lemma \ref{lemma:nec_suff_hh} it is sufficient to show that there is $p \in \mathbb{P}$ such that $\Delta C_{t_l}\leq p(t_l)\Delta_{e_le_h}^{e_he_h}Sh_1(t_lt_l)+p(t_h)\Delta_{e_le_h}^{e_he_h}Sh_1(t_lt_h)$ does not hold i.e., $p(t_l) \{ \Delta C_{t_l} -\Delta_{e_le_h}^{e_he_h}Sh_1(t_lt_l)\}+p(t_h) \{ \Delta C_{t_l} -\Delta_{e_le_h}^{e_he_h}Sh_1(t_lt_h)\} > 0$. In order to show this we show that the following system of inequalities has a solution.  
\begin{equation*}
p(t_l) \{ \Delta_{e_le_h}^{e_he_h}Sh_1(t_lt_l) -\Delta C_{t_l}\}+p(t_h) \{ \Delta_{e_le_h}^{e_he_h}Sh_1(t_lt_h) -\Delta C_{t_l} \} < 0
\end{equation*}
\begin{equation*}
-p(t_l) + 0.p(t_h) < 0
\end{equation*}
\begin{equation}
\tag{Qh}
0.p(t_l)-p(t_h) < 0
\end{equation}
\begin{equation*}
p(t_l) + p(t_h) \leq 1
\end{equation*}
\begin{equation*}
-p(t_l) -p(t_h) \leq -1
\end{equation*}
\medskip

\noindent Which is of the form $Ax \leq b$, $Bx< c$, 
with $x  \equiv (p(t_l),p(t_h))$, \\
$\tiny {A= \begin{bmatrix}
	1 & 1 \\
	-1 & -1  
	\end{bmatrix},
	b= \begin{bmatrix}
	1\\
	-1
	\end{bmatrix}, B= \begin{bmatrix}
	\Delta_{e_le_h}^{e_he_h}Sh_1(t_lt_l) -\Delta C_{t_l}& \Delta_{e_le_h}^{e_he_h}Sh_1(t_lt_h) -\Delta C_{t_l}\\
	-1 &0\\
	0& -1
	\end{bmatrix} , c= \begin{bmatrix}
	0\\
	0\\
	0\\
	\end{bmatrix}}$
\medskip

\noindent The Dual of $(Qh)$ is\\
$y \equiv (y_1,y_2) \geq 0,~z \equiv (z_1,z_2,z_3)\geq 0;~A^{T}y+B^{T}z=0~;~ b^{T}y+c^{T}z <0~\text{or}~b^{T}y+c^{T}z =0~;~z\neq 0$.

\noindent We show that this dual has no solution. We prove this claim by the way of contradiction. First consider $b^{T}y+c^{T}z <0$.
\begin{eqnarray*}
	& & A^{T}y+B^{T}z=0;~b^{T}y+c^{T}z <0  \Leftrightarrow y_1 < y_2 \\
	& & \begin{bmatrix}
		1 & -1\\
		1 & -1
	\end{bmatrix}.\begin{bmatrix}
		y_1\\				
		y_2			
	\end{bmatrix} + \begin{bmatrix}
		\Delta_{e_le_h}^{e_he_h}Sh_1(t_lt_l) -\Delta C_{t_l} & -1 & 0\\
		\Delta_{e_le_h}^{e_he_h}Sh_1(t_lt_h) -\Delta C_{t_l} &  0 &-1
	\end{bmatrix}.\begin{bmatrix}
		z_1\\
		z_2\\
		z_3
	\end{bmatrix}=0\\
	& \Leftrightarrow & y_1 -y_2 + \{\Delta_{e_le_h}^{e_he_h}Sh_1(t_lt_l) -\Delta C_{t_l}\}.z_1-z_2=0\\
	& & y_1 -y_2 + \{\Delta_{e_le_h}^{e_he_h}Sh_1(t_lt_h) -\Delta C_{t_l}\}.z_1-z_3=0,~\text{a contradiction as}~
	\Delta_{e_le_h}^{e_he_h}Sh_1(t_lt_l) < \Delta C_{t_l}.
\end{eqnarray*}
\noindent Now consider the situation in which $b^{T}y+c^{T}z=0$. That is,
\begin{eqnarray*}
	& & b^{T}y+c^{T}z =0 \Leftrightarrow y_1=y_2,~ z \neq 0 \Leftrightarrow z_1> 0 ~\text{or}~z_2> 0 ~\text{or}~ z_3 > 0\\
	& \Leftrightarrow & y_1 -y_2 + \{\Delta_{e_le_h}^{e_he_h}Sh_1(t_lt_l) -\Delta C_{t_l} \}.z_1-z_2=0\\
	& & y_1 -y_2 + \{\Delta_{e_le_h}^{e_he_h}Sh_1(t_lt_h) -\Delta C_{t_l}\}.z_1-z_3=0\\
	& \Leftrightarrow &  \{\Delta_{e_le_h}^{e_he_h}Sh_1(t_lt_l) -\Delta C_{t_l}\}.z_1-z_2=0\\
	& & \{\Delta_{e_le_h}^{e_he_h}Sh_1(t_lt_h) -\Delta C_{t_l}\}.z_1-z_3=0~
	\text{ contradiction if } ~ z_{1}\neq 0~\text{as}~ \Delta_{e_le_h}^{e_he_h}Sh_1(t_lt_l) < \Delta C_{t_l};\\ 
	&&\text{if}~z_1=0~\text{then}~z_3=0=z_{2}.
\end{eqnarray*}

\noindent Therefore the dual of $(Qh)$ does not have a solution, hence $(ii)$ in Proposition \ref{prop:SNE_ineq_hh} is established by Theorem\ref{thm:Farkas}.  

\medskip

\noindent {\bf Now we prove the converse of Proposition \ref{prop:SNE_ineq_hh}.} That is we assume $(i)$ and $(ii)$ to hold; and show $\Delta_{e_le_h}^{e_he_h} Sh_1(t_lt_l) < \Delta C_{t_l} <\Delta_{e_le_h}^{e_he_h} Sh_1(t_lt_h)$. Since $(i)$ and $(ii)$ hold the duals of $Ph$ and $Qh$ do not have a solution by Theorem \ref{thm:Farkas}. First we note that in both systems of inequalities $Ph$ and $Qh$ the scalar vector $b$ has both positive and negative entries. This means that the reason for the duals of $Ph$ and $Qh$ not to have solutions is not because $b^{T}y+c^{T}z>0$ for all $y\geq 0, z\geq 0$ and $z\neq 0$. 
Also it is not true that $b^{T}y+c^{T}z=0$ implies $z=0$. Hence it is enough to  consider the situations described by the duals of $Ph$ and $Qh$ and look at the implications if they are violated in order to establish the converse of Proposition \ref{prop:SNE_ineq_hh}. First the following intermediary result is needed. 
Let $A=\Delta_{e_le_h}^{e_he_h} Sh_1(t_lt_l)-\Delta C_{t_l}$ and $B=\Delta_{e_le_h}^{e_he_h} Sh_1(t_lt_h)-\Delta C_{t_l}$. 

\begin{lemma}\rm For $(M,C)$ the following holds,
	
	\begin{itemize}
		\item [$(a)$] If $A\geq 0$ then $B>0$.
		\item [$(b)$] If $B\leq 0$ then $A<0$.
	\end{itemize} 
	\label{lemma:shapley_cost_hh}
\end{lemma}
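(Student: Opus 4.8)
The plan is to notice first that statement $(b)$ is simply the contrapositive of $(a)$: if $B\le 0$ then $B>0$ is false, so by $(a)$ the hypothesis $A\ge 0$ cannot hold, i.e. $A<0$. Hence it suffices to establish $(a)$, and for that it is enough to prove the single inequality $B>A$, since then $A\ge 0$ forces $B>A\ge 0$, so $B>0$.

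To prove $B>A$, I would first cancel the common cost term, writing $B-A=\Delta_{e_le_h}^{e_he_h} Sh_1(t_lt_h)-\Delta_{e_le_h}^{e_he_h} Sh_1(t_lt_l)$. Expanding each Shapley increment via $Sh_1((e_1,e_2),(t_1,t_2))=\frac{1}{2}\left(M((e_1,e_2),(t_1,t_2))-M((0,e_2),(0,t_2))+M((e_1,0),(t_1,0))\right)$, the player-$2$ singleton term $M((0,e_h),(0,t_2))$ drops out within each $\Delta_{e_le_h}^{e_he_h} Sh_1(\cdot)$ because it does not involve player $1$'s effort, and the player-$1$ singleton increment $\frac{1}{2}\left(M((e_h,0),(t_l,0))-M((e_l,0),(t_l,0))\right)$ is identical in $\Delta_{e_le_h}^{e_he_h} Sh_1(t_lt_h)$ and in $\Delta_{e_le_h}^{e_he_h} Sh_1(t_lt_l)$, so it cancels in the difference. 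What survives is
\[
B-A=\frac{1}{2}\left(\left[M((e_h,e_h),(t_l,t_h))-M((e_l,e_h),(t_l,t_h))\right]-\left[M((e_h,e_h),(t_l,t_l))-M((e_l,e_h),(t_l,t_l))\right]\right).
\]
Now I would apply super-modularity of $M$ with effort profiles $e'=(e_l,e_h)<_{ee}(e_h,e_h)=e''$ (equal in the second coordinate, $e_l<_e e_h$ in the first) and type profiles $t'=(t_l,t_l)<_{tt}(t_l,t_h)=t''$ (equal in the first coordinate, $t_l<_t t_h$ in the second): super-modularity yields $M(e'',t')-M(e',t')<M(e'',t'')-M(e',t'')$, which says precisely that the right-hand side above is strictly positive. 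Therefore $B>A$, and $(a)$, hence $(b)$, follows.

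The only delicate points are the bookkeeping of which Shapley-share terms cancel in $B-A$, and checking that the two pairs of profiles genuinely satisfy the orderings $<_{ee}$ and $<_{tt}$ in the form the super-modularity axiom requires — this is exactly the ``one coordinate equal'' case covered by the definitions of these orderings. Note that sub-modularity of $C$ plays no role in this lemma; only super-modularity of $M$ is used, since the incremental cost $\Delta C_{t_l}$ enters $A$ and $B$ symmetrically.
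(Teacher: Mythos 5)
Your proof is correct and follows essentially the same route as the paper: both arguments reduce to the single key inequality $\Delta_{e_le_h}^{e_he_h} Sh_1(t_lt_h) > \Delta_{e_le_h}^{e_he_h} Sh_1(t_lt_l)$ (i.e.\ $B>A$), which the paper asserts directly from super-modularity and you verify by expanding the Shapley shares and cancelling the singleton terms. Your observation that $(b)$ is the contrapositive of $(a)$, and your explicit check that $(e_l,e_h)<_{ee}(e_h,e_h)$ and $(t_l,t_l)<_{tt}(t_l,t_h)$ fall under the ``one coordinate equal'' clause of the orderings, are minor refinements of the paper's two parallel direct derivations, not a different method.
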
 
\noindent{\bf Proof of Lemma \ref{lemma:shapley_cost_hh}} Since $M$ is super-modular  therefore, $\Delta_{e_le_h}^{e_he_h} Sh_1(t_lt_h) >   \Delta_{e_le_h}^{e_he_h} Sh_1(t_lt_l ) $
\noindent {\bf Proof of $(a)$:} Let, \begin{eqnarray*}
	& & A \geq 0 \Rightarrow  \Delta_{e_le_h}^{e_he_h} Sh_1(t_lt_l ) \geq  \Delta C_{t_l}  \\
	&\Rightarrow &  \Delta_{e_le_h}^{e_he_h} Sh_1(t_lt_h) >   \Delta_{e_le_h}^{e_he_h} Sh_1(t_lt_l ) \geq \Delta C_{t_l}   \\
	&\Rightarrow & \Delta_{e_le_h}^{e_he_h} Sh_1(t_lt_h ) -  \Delta C_{t_l} >0  \Leftrightarrow B > 0
\end{eqnarray*}
\noindent {\bf End of the   Proof of $(a)$.}\\
\noindent {\bf Proof of $(b)$:}
If \begin{eqnarray*}
	& & B \leq 0 \Rightarrow  \Delta_{e_le_h}^{e_he_h} Sh_1(t_lt_h ) \leq   \Delta C_{t_l}  \\
	&\Rightarrow &   \Delta_{e_le_h}^{e_he_h} Sh_1(t_lt_l )  < \Delta_{e_le_h}^{e_he_h} Sh_1(t_lt_h) \leq \Delta C_{t_l}   \\
	&\Rightarrow & \Delta_{e_le_h}^{e_he_h} Sh_1(t_lt_l ) -  \Delta C_{t_l} < 0 \Leftrightarrow  A < 0
\end{eqnarray*}
\noindent {\bf End of the Proof of $(b)$\\}

\noindent{\bf End of Proof of Lemma \ref{lemma:shapley_cost_hh}}
\medskip	
\noindent{\bf Now we go back to the proof of the converse in Proposition \ref{prop:SNE_ineq_hh}}. 

\noindent{\bf Consider the dual of $(Ph)$.}

\noindent {\bf Step $1$:} We note $b^{T}y+c^{T}z<0 \Leftrightarrow y_3>y_2$; and the dual does not have a solution means for all $y\geq 0,z\geq 0$  the following system has no solution, which in turn implies at least one of them has no solution.  
\begin{eqnarray*}
	& & \{\Delta_{e_le_h}^{e_he_h}Sh_1(t_lt_l) -\Delta C_{t_l}  \}y_1 + y_2 -y_3 -z_2=0\\
	& & \{ \Delta_{e_le_h}^{e_he_h}Sh_1(t_lt_h) -\Delta C_{t_l}  \} y_1  +y_2- y_3 -z_1=0
\end{eqnarray*} 
\noindent 
We argue that if $A=\Delta_{e_le_h}^{e_he_h} Sh_1(t_lt_l)-\Delta C_{t_l} >0$ and $B= \Delta_{e_le_h}^{e_he_h} Sh_1(t_lt_h)-\Delta C_{t_l}>0$ then the dual has a solution. To see this let $A>0,B>0$ and consider $A=y_{3}-y_{2}+z_{2}$ and $B=y_{3}-y_{2}+z_{1}$. Since $y\geq 0,z\geq 0$ $y_{3}-y_{2}+z_{2}\in (0,\infty)$ and $y_{3}-y_{2}+z_{1}\in (0,\infty)$, and hence choose $A=y_{3}^{*}-y_{2}^{*}+z_{2}^{*}$ and 
$0<y_{3}^{*}-y_{2}^{*}<B$. Then set $z_{1}^{*}=B-[y_{3}^{*}-y_{2}^{*}]$. This means $(1,y_{2}^{*},y_{3}^{*}),(z_{1}^{*},z_{2}^{*})$ is a solution to the dual of $(Ph)$.   

\noindent Hence either $\Delta_{e_le_h}^{e_he_h} Sh_1(t_lt_l) \leq \Delta C_{t_l}$ or $ \Delta_{e_le_h}^{e_he_h} Sh_1(t_lt_h) \leq  \Delta C_{t_l}$.

\medskip

\noindent {\bf Step $2$: }Analogous to the last step, for  $b^{T}y+c^{T}z=0 \Leftrightarrow y_3 = y_2$; and the dual does not have a solution means for all $y\geq 0,z\geq 0,z\neq 0$ at least one equation in the system 
\begin{eqnarray*}
	&& \{A=[\Delta_{e_le_h}^{e_he_h} Sh_1(t_lt_l)-\Delta C_{t_l} ] \}y_1 + y_2 -y_3 - z_2=0\\
	& &\{B= [\Delta_{e_le_l}^{e_he_l} Sh_1(t_lt_h)-\Delta C_{t_l}] \} y_1+y_2-y_3 -z_1=0\\
	& \Leftrightarrow & A y_1- z_2=0 ; B y_1-z_1=0
\end{eqnarray*}	
\noindent does not hold. By Lemma \ref{lemma:shapley_cost_hh} if $A=0$, then $B>0$. Then setting $z_{2}=0$ we can find a solution to the dual. Hence $A\neq 0$. 
If $A>0$, by Step $1$ $B\leq 0$; which contradicts Lemma \ref{lemma:shapley_cost_hh} since $B\leq 0$ implies $A<0$ by Lemma \ref{lemma:shapley_cost_hh}. Also by Lemma \ref{lemma:shapley_cost_hh} if $A>0$ then $B>0$, which contradicts Step $1$.

\noindent Hence the only possibility is $A<0$ and $B\geq 0$ , i.e. $\Delta_{e_le_h}^{e_he_h} Sh_1(t_lt_l)<\Delta C_{t_l}$ and $\Delta C_{t_l}\leq \Delta_{e_le_h}^{e_he_h} Sh_1(t_lt_h)$.

\medskip  

\noindent{\bf Now consider the dual of $(Qh)$. }

\noindent{\bf Step $3$:} We note $b^{T}y+c^{T}z<0 \Leftrightarrow y_2 > y_1$, and the dual of $(Qh)$ does not have a solution means the system of inequalities for all $y\geq 0,z\geq 0$ at least one equation in the system   
\begin{eqnarray*}
	& & y_1 -y_2 + \{\Delta_{e_le_h}^{e_he_h}Sh_1(t_lt_l)-\Delta C_{t_l} \}.z_1-z_2=0\\
	& & y_1 -y_2 + \{\Delta_{e_le_h}^{e_he_h}Sh_1(t_lt_h)-\Delta C_{t_l}\}.z_1-z_3=0
\end{eqnarray*}
\noindent does not hold. By an argument analogous to Step $1$ either  
$\Delta C_{t_l} \geq \Delta_{e_le_h}^{e_he_h}Sh_1(t_lt_h)$ or $\Delta C_{t_l} \geq \Delta_{e_le_h}^{e_he_h}Sh_1(t_lt_l)$.

\medskip

\noindent{\bf Step $4$:} Also $b^{T}y+c^{T}z=0 \Leftrightarrow y_2 = y_1$, and the dual of $(Qh)$ does not have a solution means the system of inequalities for all $y\geq 0,z\geq 0, z\neq 0$, has no solution. Which in turn implies following equation has no solution.
\begin{eqnarray*}
	&& y_1 -y_2 + \{\Delta_{e_le_h}^{e_he_h}Sh_1(t_lt_l)-\Delta C_{t_l} \}.z_1-z_2=0\\
	& & y_1 -y_2 + \{\Delta_{e_le_h}^{e_he_h}Sh_1(t_lt_h)-\Delta C_{t_l} \}.z_1-z_3=0\\
	& \Leftrightarrow &  -Az_1-z_2=0 ~\text{and}~ -Bz_1-z_3=0\\
	& \Leftrightarrow &  Az_1+z_2=0  ~\text{and}~  Bz_1+z_3=0
\end{eqnarray*} 
\noindent By Lemma \ref{lemma:shapley_cost_hh}, if $B<0$ then $A<0$. In this situation we can find a solution to the dual of $(Qh)$. In particular $y=(y_{1},y_{2}), y_{1}=y_{2}$, $z=(1, -A, -B )$ is a solution. Hence $B\geq 0$. If $B=0$, then by Lemma \ref{lemma:shapley_cost_hh}, $A<0$. Then $y=(y_{1},y_{2}), y_{1}=y_{2}$, $z=(1, -A, 0 )$ is a solution to the dual of $(Qh)$. If $B>0$ and $A>0$ then contradicts Step $1$. Hence, $B>0$ and $A \leq 0$. 

\medskip

\noindent Now from Step $2$ and Step $4$ its follows $A<0,B>0$. 
Hence we have established that if $(i)$ and $(ii)$ hold then 
$\Delta_{e_le_h}^{e_he_h}Sh_1(t_lt_l)<\Delta C_{t_l}<\Delta_{e_le_h}^{e_he_h}Sh_1(t_lt_h)$.\\        

\medskip

\noindent {\bf{End of the proof of Proposition \ref{prop:SNE_ineq_hh}}}.

\medskip

\noindent\textbf{Proof of Corollary \ref{Interval-hh}: } From Lemma \ref{lemma:nec_suff_hh} it follows that $(s_{e_he_h},s_{e_he_h})$ is an SNE of $\Gamma_{p}$ if and only if 
\begin{eqnarray*}
	&& p(t_l) \Delta_{e_le_h}^{e_he_h} Sh_1(t_lt_l)+p(t_h)  \Delta_{e_le_h}^{e_he_h} Sh_1(t_lt_h) \geq  \Delta C_{t_l} \\
	&\Leftrightarrow & (1-p(t_h))\Delta_{e_le_h}^{e_he_h} Sh_1(t_lt_l)+p(t_h)  \Delta_{e_le_h}^{e_he_h} Sh_1(t_lt_h) \geq  \Delta C_{t_l} \\
	&\Leftrightarrow & p(t_h) \{ \Delta_{e_le_h}^{e_he_h} Sh_1(t_lt_h)-\Delta_{e_le_h}^{e_he_h} Sh_1(t_lt_l)\} \geq  \{ \Delta C_{t_l} -\Delta_{e_le_h}^{e_he_h} Sh_1(t_lt_l) \} \\
	&\Leftrightarrow & p(t_h) \geq \frac{ \Delta C_{t_l} -\Delta_{e_le_h}^{e_he_h} Sh_1(t_lt_l)}{ \Delta_{e_le_h}^{e_he_h} Sh_1(t_lt_h)-\Delta_{e_le_h}^{e_he_h} Sh_1(t_lt_l)} 
\end{eqnarray*}
Using Proposition \ref{prop:SNE_ineq_hh}, $\Delta_{e_le_h}^{e_he_h} Sh_1(t_lt_l) < \Delta C_{t_l} <  \Delta_{e_le_h}^{e_he_h} Sh_1(t_lt_h) \implies 0 < \frac{ \Delta C_{t_l} -\Delta_{e_le_h}^{e_he_h} Sh_1(t_lt_l)}{ \Delta_{e_le_h}^{e_he_h} Sh_1(t_lt_h)-\Delta_{e_le_h}^{e_he_h} Sh_1(t_lt_l)} < 1$ and hence $(s_{e_he_h},s_{e_he_h})$ is an SNE of $\Gamma_{p}$ if and only if $p(t_h) \in [\frac{\Delta C_{t_l}-\Delta_{e_le_h}^{e_he_h} Sh_1(t_lt_l)}{\Delta_{e_le_h}^{e_he_h} Sh_1(t_lt_h)-\Delta_{e_le_h}^{e_he_h} Sh_1(t_lt_l)},1)$.

\noindent \textbf{End  of the  proof of Corollary \ref{Interval-hh}}  

\noindent\textbf{Tables related to Example \ref{ex:paradox}}

\begin{table}[h!]
	\centering
	\begin{tabular}{|c|c|c|c|c|c|c|}
		\hline 
		$M$  &  $(e_{l},0) $ & $(e_{h},0)$ & $(e_{l},e_{l}) $ & $(e_{l},e_{h}) $ & $(e_{h},e_{l})$ & $(e_{h},e_{h})$  \\
		\hline 
		$(t_{l},0)$ &  7 &  12.9 & & & &  \\
		\hline 
		$(t_{h},0)$ &10   &  16 & & & &  \\
		\hline 
		$(t_{l},t_{l})$ & & & 8 & 11 & 11 & 16 \\
		\hline 
		$(t_{l},t_{h})$ & & & 9  & 13.1 & 13  & 20  \\
		\hline 
		$(t_{h},t_{l})$ & & &  9 & 13 & 13.1 & 20  \\
		\hline 
		$(t_{h},t_{h})$ & & & 10 & 16 & 16 & 25  \\
		\hline 
	\end{tabular}
	\caption{Table for TAM}
\end{table}
\begin{table}[h!]
	\centering
	\begin{tabular}{|c|c|c|}
		\hline 
		$C$  &  $(e_{l},0) $ & $(e_{h},0)$  \\
		\hline 
		$(t_{l},0)$ &  2 &  8   \\
		\hline 
		$(t_{h},0)$ & 1& 6.3   \\
		\hline
	\end{tabular}
	\caption{Table for Cost Function}
\end{table}
\begin{table}[h!]
	\begin{center}\begin{tabular}{|c|c|c|}
			\hline
			Equilibrium & Range of $p_h$ & Total Expected Revenue\\
			\hline
			$(s_{e_le_l},s_{e_le_l})$ & $(0,0.263158]$ & $EW(s_{e_le_l},p)=  0.000000 p_h^2 +2.000000 p_h+2.000000$\\
			\hline
			$(s_{e_le_h},s_{e_le_h})$ & $[0.102041,0.794872]$ & $EW(s_{e_le_h},p)=  3.400000 p_h^2 +0.800000 p_h+2.000000$\\
			\hline
			$(s_{e_he_h},s_{e_he_h})$ & $[0.578948,1)$ & $EW(s_{e_he_h},p)=0.500000 p_h^2 +5.700000 p_h+0.000000$ \\
			\hline
		\end{tabular}
		\caption{Table representing equilibrium range and expected revenue function.}
\end{center}\end{table}
\newpage

\noindent\textbf{Proof of corollary \eqref{rationalizable}}\\
From Corollary \eqref{Interval-hh}, Corollary \eqref{Interval-lh} and Corollary \eqref{Interval-ll} we have, \\
$[\dfrac{\Delta C_{t_l}-\Delta_{e_le_h}^{e_he_h} Sh_1(t_lt_l)}{\Delta_{e_le_h}^{e_he_h} Sh_1(t_lt_h)-\Delta_{e_le_h}^{e_he_h} Sh_1(t_lt_l)},1) \subseteq (0,1),
(0,\dfrac{\Delta C_{t_h}-\Delta_{e_le_l}^{e_he_l} Sh_1(t_ht_l)}{\Delta_{e_le_l}^{e_he_l} Sh_1(t_ht_h)-\Delta_{e_le_l}^{e_he_l} Sh_1(t_ht_l)}] \subseteq (0,1)$,\\ $
[\dfrac{ C_{t_h} -\Delta_{e_le_l}^{e_he_l} Sh_1(t_ht_l)}{\Delta_{e_le_h}^{e_he_h} Sh_1(t_ht_h)-\Delta_{e_le_l}^{e_he_l} Sh_1(t_ht_l)} ,
\dfrac{ C_{t_l} -\Delta_{e_le_l}^{e_he_l} Sh_1(t_lt_l)}{\Delta_{e_le_h}^{e_he_h} Sh_1(t_lt_h)-\Delta_{e_le_l}^{e_he_l} Sh_1(t_lt_l)}] \subseteq (0,1)$. Given $M$ satisfies concavity within each type profile, the following holds.
\begin{eqnarray*}
&& \Delta_{e_le_h}^{e_he_h} Sh_1(t_lt_l) \leq \Delta_{e_le_l}^{e_he_l} Sh_1(t_lt_l)\\
 &\Rightarrow & \Delta C_{t_l} -\Delta_{e_le_l}^{e_he_l} Sh_1(t_lt_l) \leq \Delta C_{t_l}-\Delta_{e_le_h}^{e_he_h} Sh_1(t_lt_l)\\
&\Rightarrow & \dfrac{\Delta C_{t_l} -\Delta_{e_le_l}^{e_he_l} Sh_1(t_lt_l)}{\Delta_{e_le_h}^{e_he_h} Sh_1(t_lt_h)-\Delta_{e_le_l}^{e_he_l} Sh_1(t_lt_l)}  \leq \dfrac{\Delta C_{t_l}-\Delta_{e_le_h}^{e_he_h} Sh_1(t_lt_l)}{\Delta_{e_le_h}^{e_he_h} Sh_1(t_lt_h)-\Delta_{e_le_h}^{e_he_h} Sh_1(t_lt_l)}
\end{eqnarray*}
Similarly it can be shown that, $\dfrac{\Delta C_{t_h}-\Delta_{e_le_l}^{e_he_l} Sh_1(t_ht_l)}{\Delta_{e_le_l}^{e_he_l} Sh_1(t_ht_h)-\Delta_{e_le_l}^{e_he_l} Sh_1(t_ht_l)} \leq \dfrac{\Delta C_{t_h} -\Delta_{e_le_l}^{e_he_l} Sh_1(t_ht_l)}{\Delta_{e_le_h}^{e_he_h} Sh_1(t_ht_h)-\Delta_{e_le_l}^{e_he_l} Sh_1(t_ht_l)}$. We conclude that, the three intervals mentioned above are pairwise disjoint consequently the strategies $(s_{e_le_l},s_{e_le_l}),(s_{e_le_h},s_{e_le_h}),(s_{e_he_h},s_{e_he_h})$ are rationalizable. 

\begin{thebibliography}{11}
	\newcommand{\enquote}[1]{``#1''}
	\expandafter\ifx\csname natexlab\endcsname\relax\def\natexlab#1{#1}\fi
	
	
	\bibitem[\protect\citeauthoryear{Agrawal, et al.} {2018}]{Agrawal}
	\textsc{Agrawal A, Gans J, Goldfarb A} {(2018)} {Predictive Machines: The Simple Economics of Artificial Intelligence}. \emph{{Harvard Business Review Press}}.
	
	
	\bibitem[\protect\citeauthoryear{Anderson} {2013}]{Anderson}
	\textsc{Anderson E,} {(2013)} {{What Happens When Leaders Only Care About Money?}} \emph{{Forbes}}, December.
	
	\bibitem[\protect\citeauthoryear{Audretsch et al.} {2003}]{Audre}\textsc{Audretsch, D B, Lehmann E E, and Warning Susanne} {(2003)} {{University Spillovers: Strategic Location and New Firm Performance}.} \emph{{SSRN: \url{https://ssrn.com/abstract=408580}}} 
	
	\bibitem[\protect\citeauthoryear{Bernheim} {1984}]{Bern}
	\textsc{Bernheim. D.} {(1984)} {{Rationalizable Strategic Behavior.}}  \emph{{Econometrica }} 52: 1007-1028	
	
\bibitem[\protect\citeauthoryear{Bordt and Luxburg} {2023}]{BoLu}
\textsc{Bordt. S., and Luxburg. U. V. } {(2023)} {{From Shapley Values to Generalized Additive Models and back}}\emph{{Proceedings of the 26th International Conference on Artificial Intelligence and Statistics }} PMLR:
Volume 206.
	


\bibitem[\protect\citeauthoryear{Brandenburger and Stuart} {2007}]{BrandStuart}
\textsc{Brandenburger A., and Stuart H. } {(2007)} {{Biform Games.}}  \emph{{Management Science }} 53: 537-549	

\bibitem[\protect\citeauthoryear{Cerny and Grabisch} {2023}]{CerGra}
\textsc{Cerny. M., and  Grabisch. M.} {(2023)} {{Player-centered Incomplete Cooperative Games}} \url{https://arxiv.org/pdf/2303.17033}
	
  

\bibitem[\protect\citeauthoryear{Covert and Lee} {2022}]{CovSu}
\textsc{Covert. I, and Lee. S} {(2022)} {{Explainable AI}}  \emph{{University of Washington}} \url{https://courses.cs.washington.edu/courses/csep590b/22sp/files/lectures/lecture3.pdf}	

\bibitem[\protect\citeauthoryear{Datta et al.} {2016}]{Datta}
\textsc{Datta. A., Sen. S., and Zick. Y.} {(2016)} {{Algorithmic transparency via quantitative input influence: Theory and experiments with learning systems}}\emph{{ IEEE Symposium on Security and Privacy}} 598-617

   
	
	\bibitem[\protect\citeauthoryear{Figueiredo et al.} {2003}]{Fig}
	\textsc{Figueirdo, O., Guimaraes, P., \& Woodward, D.} {(2003)} {{A tractable approach to the firm location decision problem}.} \emph{{Review of Economics and Statistics,}}, 85: 201-204 

	
 
	 
	
	\bibitem[\protect\citeauthoryear{Hart} {2004}]{Hart}
	\textsc{Hart S} {(2004)}{{A comparison of Non-transferable Utility Values}.} \emph{{Theory and Decision}} 56:35-46
	
	
	\bibitem[\protect\citeauthoryear{Hart and Moore} {1990}]{HartMoore}
	\textsc{Hart O., and Moore J.} {(1990)} {{Property Rights and the Nature of the Firm}.} \emph{{Journal of Political Economy}}, 98: 1119-1158  
	
	
 
	
	
	
	\bibitem[\protect\citeauthoryear{Kohlberg and Neyman} {2021}]{KoNey}
	\textsc{Kohlberg. E., and Neyman. A } {(2021)} {{Cooperative Strategic Games}} \emph{{Theoretical Economics}}, 16, 825-851 
	
	
	\bibitem[\protect\citeauthoryear{Laffont and Martimort} {2002}]{Laffont}
	\textsc{Laffont J-J, and Martimort, D,} {(2002)} {{The Theory of Incentives:	The Principal-Agent Model}.} \emph{{Princeton University Press}}.
	
	
	\bibitem[\protect\citeauthoryear{Lipovetsky and Conklin} {2001}]{LipCon}
	\textsc{Lipovetsky. S., and Conklin. M. } {(2001)} {{Analysis of regression in game theory approach}}\emph{{ Applied Stochastic Models in Business and Industry}} 17:319-330
	
	\bibitem[\protect\citeauthoryear{Liu} {2023}]{Liu}
	\textsc{Liu. Q.} {(2023)} {{Cooperative Analysis of Incomplete Information}}, \url{https://economics.yale.edu/sites/default/files/2023-10/Coalitional-Games-2023-Spring-9c.pdf}
	
	 
	\bibitem[\protect\citeauthoryear{Lundberg and Lee} {2017}]{LuSu}
	\textsc{Lundberg. S. M., and Lee S.} {(2017)} {{A Unified Approach to Interpreting Model Predictions}}\emph{{31st Conference on Neural Information Processing Systems}} \url{https://arxiv.org/pdf/1705.07874}
	
	
	 
		
	\bibitem[\protect\citeauthoryear{Mas-Colell et al.} {1995}]{Mas}
	\textsc{Mas-Colell A, Whinston M D, and Green J R }{(1995)} {{Microeconomic Theory}.} \emph{{Oxford University Press}}.
	
	
	\bibitem[\protect\citeauthoryear{Maschler et al.} {2013}]{Maschler Solan and Zamir}
\textsc{Maschler M, Solan E, and Zamir S} {(2013)} {{Game Theory}.} \emph{{Cambridge University Press}}.




	

	\bibitem[\protect\citeauthoryear{Masuya} {2016}]{Masuya}
	\textsc{Masuya S} {(2016)} {{The Shapley value on a Class of Cooperative
			Games Under Incomplete Information.}}  \emph{{Journal of Interdisciplinary Mathematics }} 19: 279-300	
	
	
	
	
	
	\bibitem[\protect\citeauthoryear{Mitchell} {2021}]{Mitchell}
	\textsc{Mitchell M, } {(2021)} {{Why AI is Harder Than We Think}.} \emph{{\url{https://arxiv.org/abs/2104.12871}}}.
	
	
	
	
	
	 \bibitem[\protect\citeauthoryear{Moloi and Marwala} {2020}]{Marwala}
	\textsc{Moloi T and Marwala T} {(2020)} {{Artificial Intelligence in Economics and Finance Theories}}. Springer. 
	
	
	
	\bibitem[\protect\citeauthoryear{Morris} {2022}]{Morris}
	\textsc{Tatiana W-Morris } {(2022)} {{These are the flaws of AI in hiring and how to tackle them}.} \emph{{World Economic Forum }} 
	
	
 

\bibitem[\protect\citeauthoryear{Motzkin} {2001}]{Motzkin}
	\textsc{Motzkin B A } {(2001)} {{Motzkin's transposition theorem and the related theorem of Farkas, Gorden and Stiemke}.} \emph{{Encyclopaedia of Mathematics, Supplement III Kluwer Academic Publishers, Dordrecht 2001, ISBN 1-4020-0198-3.}} 
	
	
\bibitem[\protect\citeauthoryear{Myerson} {1984}]{Myerson}
	\textsc{Myerson R B} {(1984)} {{Cooperative Games with Incomplete Information.}}  \emph{{International Journal of Game Theory}} 13: 69-96
	
	
	
		\bibitem[\protect\citeauthoryear{Myerson} {1997}]{Myerson1}
	\textsc{Myerson R B} {(1997)} {{Game Theory - Analysis of Conflict.}}  \emph{{Harvard University Press}} 
	
	
	\bibitem[\protect\citeauthoryear{Nash} {1953}]{Nash}
	\textsc{Nash. J. } {(1953)} {{Two-person Cooperative Games}} \emph{{Econometrica}}, 21,128-140  
	
	
\bibitem[\protect\citeauthoryear{Parks and Wellman} {2015}]{Parks}
	\textsc{Parks D C, and Wellman M P} {(2015)} {{ Economic Reasoning and
			Artificial Intelligence.}}  \emph{{Science}} Vol 349, issue 6245 July, 267-272. 
	
\bibitem[\protect\citeauthoryear{Pongou and Tondji} {2018}]{Pongou and Tondji}
	\textsc{Pongou R, Tondji J-P} {(2018)} {{Valuing inputs under supply uncertainty: The Bayesian Shapley value.}}  \emph{{Games and Economic Behavior}} 108: 206-224
	
	\bibitem[\protect\citeauthoryear{Rhea et al.} {2022}]{D'Arinzo}
	\textsc{Rhea A, Markey K, D'Arinzo L, Schellmann H, Solane M, Squires P, Stoyanovich J} {(2022)} {{Resume Format, LinkedIn 
			URLs and Other Unexpected Influences on AI Personality Prediction in 
			Hiring: Results of an Audit.}} \emph{{Proceedings of the 2022 AAAI/ACM Conference 
			on AI, Ethics, and Society}} August


\bibitem[\protect\citeauthoryear{Simkoff} {2019}]{Simkoff}
	\textsc{Simkoff M,} {(2019)}{{What Exactly Is Machine
			Intelligence?}.} \emph{{Forbes}}, November. 
	
	
	
		\bibitem[\protect\citeauthoryear{Vazire and Carlsosn} {2010}]{Vazire}
	\textsc{Vazire, S, and  Carlson, E N } {(2010)} {{Self-knowledge of personality: Do people know themselves?}} \emph{{Social and Personality Psychology Compass}}, 4(8), 605-620 
	
\bibitem[\protect\citeauthoryear{Wilson and Daugherty} {2015}]{Wilson}
	\textsc{Wilson J H, and Daugherty, P, R } {(2015)} {{Collaborative Intelligence: Humans and AI Are Joining Forces Humans and machines can enhance each other’s strengths}.} \emph{{Harvard Business Review}}, July-August.






	

\end{thebibliography}
\end{document}